\newtheorem{theorem}{Theorem}
\newtheorem{lemma}[theorem]{Lemma}
\newtheorem{open}[theorem]{Open Problem}
\newtheorem{example}[theorem]{Example}
\newtheorem{remark}[theorem]{Remark}
\newcommand{\ord}{{\mathrm{ord}}}
\newcommand{\lcm}{{\mathrm{lcm}}}
\newcommand{\tr}{{\mathrm{Tr}}}
\newcommand{\gf}{{\mathrm{GF}}}
\newcommand{\PG}{{\mathrm{PG}}}
\newcommand{\support}{{\mathrm{suppt}}}
\newcommand{\wt}{{\mathtt{wt}}}
\newcommand{\cV}{{\mathcal{V}}}
\newcommand{\cT}{{\mathcal{T}}}
\newcommand{\C}{{\mathcal{C}}}
\begin{document}
%
\title{Infinite families of cyclic and negacyclic codes  supporting 3-designs\thanks{
X. Wang's research was supported by The National Natural Science Foundation of China under Grant Number  12001175.
C. Tang's research was supported by The National Natural Science Foundation of China under Grant Number  11871058.
C. Ding's research was supported by The Hong Kong Research Grants Council, Proj. No. $16301522$.
}
}

\author{
Xiaoqiang Wang,\thanks{X. Wang is with The Hubei Key Laboratory of Applied Mathematics, Faculty of Mathematics and Statistics, Hubei University, Wuhan 430062, China
(email: waxiqq@163.com)}
Chunming Tang,\thanks{C. Tang is with The School of Mathematics and Information, China West Normal University,  Nanchong, Sichuan,  637002, China (email: tangchunmingmath@163.com)}
and
Cunsheng Ding \thanks{C. Ding is with The Department of Computer Science
                           and Engineering, The Hong Kong University of Science and Technology,
Clear Water Bay, Kowloon, Hong Kong, China (email: cding@ust.hk)}
}

\maketitle

\begin{abstract}
Interplay between coding theory and combinatorial $t$-designs has been a hot topic for many years for combinatorialists and coding theorists. Some infinite families of cyclic codes supporting infinite families of $3$-designs have been constructed  in the
past 50 years. However, no infinite family of negacyclic codes supporting an infinite family of $3$-designs has been reported in the literature. This is the main motivation of this paper.
Let $q=p^m$, where $p$ is an odd prime and $m \geq 2$ is an integer.
The objective of this paper is to present an infinite family of cyclic codes over $\gf(q)$ supporting an infinite family of $3$-designs and two infinite families of negacyclic codes over $\gf(q^2)$ supporting two infinite families of $3$-designs.
The parameters and the weight distributions of these codes are determined.
The subfield subcodes of these negacyclic codes  over $\gf(q)$ are studied.
Three infinite families of almost MDS codes are also presented.
A constacyclic code over GF($4$) supporting a $4$-design and six open problems are also presented in this paper.
\end{abstract}

\begin{IEEEkeywords}
Constacyclic code, \and cyclic code, \and negacyclic code, \and t-design, \and Steiner system.
\end{IEEEkeywords}

%
\IEEEpeerreviewmaketitle

\section{Introduction}\label{sec-intro}

Let $\kappa$ and $v$ be positive integers such that $1\leq \kappa\leq v$. Let $\mathcal{P}$ be a set of $v$ elements and let
$\mathcal{B}$
be a set of $\kappa$-subsets of $\mathcal{P}$.
The pair $\mathbb{D}=(\mathcal{P},\mathcal{B})$ is an  {\it incidence structure}, where the incidence relation is the set membership. The incidence structure $\mathbb{D}=(\mathcal{P},\mathcal{B})$ is called a $t$-$(v,\kappa,\lambda)$
design, or simply $t$-design, if every $t$-subset of $\mathcal{P}$ is contained in exactly $\lambda$ elements of $\mathcal{B}$. The elements of $\mathcal{P}$ are called points, and those of $\mathcal{B}$ are referred to as blocks. The set $\mathcal{B}$ is called the block set of the design. A $t$-design is said to be {\it simple} if $\mathcal{B}$ does not contain any repeated blocks.  A $t$-$(v,\kappa,\lambda)$ design is called a {\it Steiner system} if $t\geq2$ and $\lambda=1$, and is denoted by $S(t,\kappa,v)$.
A $t$-$(v,\kappa,\lambda_t)$ design is also an $s$-$(v,\kappa,\lambda_s)$ design for any integer $s$ with $1 \leq s \leq t$, where
$$\lambda_s=\lambda_t\binom{v-s}{t-s}/ \binom{\kappa-s}{t-s}.$$
Let $b$ denote the number of blocks in $\mathcal{B}$.
The parameters of a $t$-$(n,\kappa,\lambda)$ design satisfy the following equation:
\begin{equation}\label{eq:kbt}
\binom{n}{t}\lambda=\binom{\kappa}{t}b.
\end{equation}

Let $q$ be a prime power and let $\gf(q)$ denote the finite field with $q$ elements. An $[n,k,d]$ code $\C$ over $\gf(q)$ is a $k$-dimensional linear subspace of $\gf(q)^n$ with minimum Hamming distance $d$. Let $A_i$ denote the number of  codewords with Hamming weight $i$ in $\mathcal{C}$. The {\it weight enumerator} of $\mathcal{C}$ is defined by $1+A_1x+A_2x^2+\cdots+A_nx^n$. The {\it weight distribution} of $\mathcal{C}$ is defined by the sequence $(1, A_1, A_2, \ldots, A_n)$. A linear code with parameters $[n, k, n-k+1]$ is said to be MDS (maximum distance separable) and
a linear code with parameters $[n, k, n-k]$ is said to be almost MDS.

Let $\lambda$ be a nonzero element in $\gf(q)$.
A linear code $\C$ of length $n$ over $\gf(q)$ is called a \emph{$\lambda$-constacyclic code} if
$ (c_0,c_1,\ldots,c_{n-1})\in \C$ implies that  $(\lambda c_{n-1},c_0,\ldots,c_{n-2}) \in \C$.  When $\lambda=1$ and $\lambda=-1$,
 $\lambda$-constacyclic codes are called cyclic codes and negacyclic codes, respectively. By definition, constacyclic codes are a
 generalization of cyclic codes, and were first studied in 1967 by Berman in \cite{Berman67}.

For each  vector $(c_0,c_1,\ldots,c_{n-1}) \in \gf(q)^n$, define
\begin{eqnarray*}
\Phi((c_0,c_1,\ldots,c_{n-1}))=c_0+c_1x+c_2x^2+\cdots+c_{n-1}x^{n-1} \in \gf(q)[x]/(x^n-\lambda).
\end{eqnarray*}
It is easily seen that a subset $\C$ of $\gf(q)^n$ is a $\lambda$-constacyclic code if and only if
$\Phi(\C)$ is an ideal of the quotient ring $ \gf(q)[x]/(x^n-\lambda)$. We identify $\C$ with $\Phi(\C)$.
It is known that the ring $\gf(q)[x]/(x^n-\lambda)$ is principal. Thus, every $\lambda$-constacyclic code $\C$ can be expressed as $\C=( g(x) )$, where
$g(x)$ is a monic polynomial with the smallest degree and $g(x)$ divides $x^n-\lambda$. The polynomial $g(x)$ is called the {\it generator polynomial} and $h(x)=(x^n-\lambda)/g(x)$ is referred to as the {\em check polynomial} of $\C$.
The roots of $g(x)$ and $h(x)$ are called zeros and nonzeros of $\C$ \cite{Huffman03,Pedersen21,Yang21}.

Linear codes and $t$-designs are closely related. It is well known that a design may yield many codes and a code may give many designs. The reader is referred to, for example,   \cite{Dingbook18,DTarXiv,DT20,DTT20,Tonchev98,Tonchev07,TDing1801} for further information.
A construction of $t$-designs with linear codes goes as follows. Let $\mathcal{C}$ be an $[n,k,d]$ code over the finite field
$\gf(q)$. Let the coordinates of a codeword of $\mathcal{C}$ be indexed by $(0,1,\ldots,n-1)$ and define $\mathcal{P}(\mathcal{C})=\{0,1,\ldots,n-1\}$.
For a codeword $\mathbf{c}=(c_0,c_1,\ldots,c_{n-1})  \in \mathcal{C}$, the {\it support} of $\mathbf{c}$ is defined by
$$\support(\mathbf{c})=\{i\,:\,c_i\neq 0, \,i\in \mathcal{P}(\mathcal{C})\}.$$
Let $\mathcal{B}_w(\mathcal{C})$ denote the set of the supports of all codewords with Hamming weight $w$ in $\mathcal{C}$
without repeated blocks.
If the incidence structure $\mathbb{D}_{w}=(\mathcal{P}(\mathcal{C}),\mathcal{B}_w(\mathcal{C}))$ is a $t$-$(n,w,\lambda)$ design for some positive integers $t$ and $\lambda$, where $1 \leq w \leq n$ and $A_w \neq 0$, we say that $\mathcal{C}$ supports or holds a  $t$-design and the supports of the codewords of weight $w$ in $\C$ hold or support a $t$-design. The reader is
referred to \cite{Dingbook18} for a survey of designs from linear codes.

 Let $\alpha$ be a primitive element of $\gf(q^2)$ and $\beta=\alpha^{q-1}$. Then $\beta$ is a $(q+1)$-th root of unity in $\gf(q^2)$.  In \cite{DT20}, Ding and Tang presented a family of near MDS cyclic codes with nonzeros $\beta$ and $\beta^2$ over $\gf(q)$ which support an infinite family of 3-designs,
 where $q=3^m$ and $m\geq 2$.
Recently  in \cite{XTL21}, Xiang et al. presented an infinite family of cyclic codes with nonzeros $\beta^3$ and $\beta^4$ over $\gf(q)$
supporting an infinite family of 3-designs, where $q=7^m$ and $m \geq 2$. In this paper, we will investigate an infinite family of cyclic codes over $\gf(q)$ with nonzeros $\beta^{\frac{p^s-1}{2}}$ and $\beta^{\frac{p^s+1}{2}}$ and prove that these cyclic codes  support 3-designs, where $q=p^m$, $p$ is an odd prime, $m \geq 2$ is an integer and $s$ is an integer with $1 \leq s <m$. When $(s, p)=(1, 3)$
and  $(s, p)=(1, 7)$, our results cover the main results in~\cite{DT20} and \cite{XTL21}, respectively. Hence, our results about
this family of cyclic codes are a generalization of the main results in~\cite{DT20} and \cite{XTL21}.

Up to now, a small number of infinite families of cyclic codes holding infinite families of 3-designs have been constructed
\cite{Dingbook18}. However, the authors are aware of only one infinite family of $\lambda$-constacyclic codes supporting
$3$-designs with $\lambda \neq 1$ \cite[p. 355]{Dingbook18}, and the authors have not seen an infinite family of negacyclic codes
supporting $3$-designs.  Motivated by these, we present two infinite families of negacyclic codes over $\gf(q^2)$ supporting infinite
families of $3$-designs and study their subfield subcodes. The objective of this paper is not to construct $3$-designs with new
parameters, but to present infinite families of cyclic and negacyclic codes supporting $3$-designs, as linear codes supporting
$3$-designs must have a high level of regularity (see Section \ref{sec-moc} for the definition of regularity) and have an interesting application in cryptography \cite{YD06}.

The rest of this paper is organized as follows. Section~\ref{sec-prelimilery} recalls some notation and basics of linear codes and combinatorial $t$-deigns.
Section~\ref{sec-cyclic-codes} studies the parameters of a family of cyclic codes over $\gf(q)$ and their duals, and induces some infinite families of $3$-designs supported by these codes. Section~\ref{sec-negacyclic-codes} investigates the parameters of two families of negacyclic codes over $\gf(q^2)$ and their duals, and presents some infinite families of $3$-designs supported by these codes.
Section~\ref{sec-finals} concludes this paper and makes some concluding remarks.

\section{Preliminaries}\label{sec-prelimilery}
In this section, we briefly introduce some known results about linear codes and $t$-designs, which will be used later in this paper.

\subsection{Notation used starting from now on}
Starting from now on, we adopt the following notation unless otherwise stated:
\begin{itemize}
\item $\gf(q)$ is the finite field with $q$ elements, where $q$ is an odd prime power.
\item $U_{u(q^{v}+1)}$ denotes the set of all $u(q^{v}+1)$-th roots of unity in $\gf(q^{2v})$, where $u,v\in \{1,2\}$.
\item $\beta$ is a primitive $(q+1)$-th root of unity in $\gf(q^2)$ and $g_i(x)$ denotes the minimal polynomial of $\beta^i$ over $\gf(q)$.
\item $\delta$ is a primitive $2(q^2+1)$-th root of unity in $\gf(q^4)$ and $g'_i(x)$ denotes the minimal polynomial of $\delta^i$ over $\gf(q^2)$.
\item $v_2(\cdot)$ is the 2-adic order function.
\item ${\rm Tr}_{q^a/q^b}(\cdot)$ is the trace function from  $\gf(q^a)$ to  $\gf(q^b)$, where $b\,|\,a$.
\end{itemize}

\subsection{The trace representation of constacyclic codes over finite fields}

Constacyclic codes and their duals have the following relation.

\begin{lemma}\cite{D10,KS90}\label{lem-sdjoin2}
The dual code of an $[n,k]$ $\lambda$-constacyclic code $\mathcal{C}$ generated by $g(x)$ is an $[n,n-k]$ $\lambda^{-1}$-constacyclic code $\mathcal{C}^{\perp}$ generated by  $\widehat{h}(x)=h_0^{-1}x^kh(x^{-1})$, where $h(x)=(x^n-\lambda)/g(x)$ is the check polynomial of $\mathcal{C}$ and $h_0$ is the coefficient of $x^0$ in $h(x)$.
\end{lemma}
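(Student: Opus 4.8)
The plan is to establish the three assertions of the lemma in turn --- that $\C^\perp$ is $\lambda^{-1}$-constacyclic, that it has dimension $n-k$, and that it is generated by $\widehat h(x)$ --- and to reduce the last (and hardest) assertion to a single orthogonality computation settled by a polynomial-reversal device. First I would pin down the ambient shift. Write $T_\lambda$ for the $\lambda$-constacyclic shift $(c_0,\ldots,c_{n-1})\mapsto(\lambda c_{n-1},c_0,\ldots,c_{n-2})$, so that $\C$ being $\lambda$-constacyclic means $T_\lambda(\C)=\C$. A direct computation of the Euclidean adjoint gives $\langle T_\lambda \mathbf{a},\mathbf{b}\rangle=\langle \mathbf{a},T_\lambda^{\ast}\mathbf{b}\rangle$ with $T_\lambda^{\ast}(b_0,\ldots,b_{n-1})=(b_1,\ldots,b_{n-1},\lambda b_0)$, and one recognizes $T_\lambda^{\ast}=T_{\lambda^{-1}}^{-1}$. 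Since $T_\lambda$ preserves $\C$, its adjoint preserves $\C^\perp$; as $T_{\lambda^{-1}}$ is invertible on the finite-dimensional space $\gf(q)^n$, it follows that $T_{\lambda^{-1}}(\C^\perp)=\C^\perp$, so $\C^\perp$ is $\lambda^{-1}$-constacyclic, and $\dim \C^\perp=n-k$ by the standard count for dual codes.

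Next I would record the bookkeeping around $\widehat h$. Because $x^n-\lambda$ and $g$ are monic, $h=(x^n-\lambda)/g$ is monic of degree $k$, and evaluating $g(0)h(0)=-\lambda\neq 0$ shows $h_0=h(0)\neq 0$, so $\widehat h(x)=h_0^{-1}x^k h(x^{-1})$ is the normalized reciprocal of $h$: monic of degree $k$ with constant term $h_0^{-1}$. Writing $f^{\ast}(x)=x^{\deg f}f(x^{-1})$ and reciprocating the identity $g(x)h(x)=x^n-\lambda$ (multiply by $x^n$, substitute $x\mapsto x^{-1}$) yields $g^{\ast}(x)h^{\ast}(x)=1-\lambda x^n=-\lambda(x^n-\lambda^{-1})$; since $h^{\ast}=h_0\widehat h$ this shows $\widehat h(x)\mid x^n-\lambda^{-1}$. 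Hence $(\widehat h(x))$ is a genuine $\lambda^{-1}$-constacyclic code, and as $\deg\widehat h=k$ it has dimension $n-k$, matching $\dim\C^\perp$.

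The crux, and the step I expect to be the main obstacle, is the inclusion $(\widehat h(x))\subseteq\C^\perp$; once it is proved, equality follows from the equal dimensions. The device is to write the Euclidean inner product of two length-$n$ vectors $c,d$ (with polynomials $c(x),d(x)$ of degree $<n$) as the coefficient of $x^{n-1}$ in $c(x)\,d^{R}(x)$, where $d^{R}(x)=x^{n-1}d(x^{-1})$ is the reversal. Taking $c(x)=a(x)g(x)$ a generic codeword of $\C$ with $\deg a<k$, and $d(x)=b(x)\widehat h(x)$ a generic codeword of $(\widehat h(x))$ with $\deg b<n-k$, a short reversal computation using $\widehat h(x^{-1})=h_0^{-1}x^{-k}h(x)$ collapses $d^{R}(x)$ to $h_0^{-1}b^{R}(x)h(x)$ with $b^{R}(x)=x^{n-k-1}b(x^{-1})$, so that
\[
c(x)\,d^{R}(x)=h_0^{-1}a(x)b^{R}(x)\,g(x)h(x)=h_0^{-1}a(x)b^{R}(x)(x^n-\lambda),
\]
where $\deg(a\,b^{R})\le n-2$. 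Reading off the coefficient of $x^{n-1}$ on the right gives $0$: the $x^n$ factor contributes nothing below degree $n$, and the degree-$\le n-2$ factor kills the $-\lambda$ term at degree $n-1$. Thus $\langle c,d\rangle=0$ for all such $c,d$, proving the inclusion.

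The care needed is almost entirely in the degree accounting of the reversal identities --- arranging the exponents of $x$ to cancel exactly in the passage from $d^{R}$ to $h_0^{-1}b^{R}h$ --- together with the routine check that $\{x^i g\}_{0\le i\le k-1}$ and $\{x^j\widehat h\}_{0\le j\le n-k-1}$ span $\C$ and $(\widehat h(x))$ respectively without wrap-around modulo $x^n-\lambda^{\pm 1}$, which is what legitimizes testing only the generic codewords $a(x)g(x)$ and $b(x)\widehat h(x)$ above.
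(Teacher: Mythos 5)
The paper does not actually prove this lemma: it is imported verbatim from \cite{D10,KS90} as a known fact, so there is no internal proof to compare yours against. Judged on its own, your blind proof is correct, complete, and is essentially the classical argument found in the cited literature. The three pillars all check out: (i) the adjoint computation $\langle T_\lambda \mathbf{a},\mathbf{b}\rangle=\langle \mathbf{a},T_\lambda^{\ast}\mathbf{b}\rangle$ with $T_\lambda^{\ast}=T_{\lambda^{-1}}^{-1}$ is right, and the passage from $T_{\lambda^{-1}}^{-1}(\C^\perp)\subseteq\C^\perp$ to $T_{\lambda^{-1}}(\C^\perp)=\C^\perp$ via invertibility on a finite-dimensional space is legitimate; (ii) the bookkeeping is sound --- $\deg h=k$ because $\deg g=n-k$, $h_0\neq 0$ from $g(0)h(0)=-\lambda$, and reciprocating $g(x)h(x)=x^n-\lambda$ gives $g^{\ast}(x)h^{\ast}(x)=-\lambda\left(x^n-\lambda^{-1}\right)$ with $h^{\ast}=h_0\widehat{h}$, so $\widehat{h}\mid x^n-\lambda^{-1}$ and $(\widehat{h})$ is a $\lambda^{-1}$-constacyclic code of dimension $n-k$; and (iii) the orthogonality step, writing $\langle c,d\rangle$ as the coefficient of $x^{n-1}$ in $c(x)d^{R}(x)$ and collapsing $d^{R}$ to $h_0^{-1}b^{R}(x)h(x)$, correctly reduces everything to the observation that $h_0^{-1}a(x)b^{R}(x)(x^n-\lambda)$ has no $x^{n-1}$ term since $\deg(ab^{R})\leq n-2$. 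You were also right to flag, and to verify, the one point a careless version would skip: that $\{x^{i}g\}_{0\leq i\leq k-1}$ and $\{x^{j}\widehat{h}\}_{0\leq j\leq n-k-1}$ exhaust the two codes without reduction modulo $x^n-\lambda^{\pm 1}$, which is exactly what licenses testing only products of degree $<n$. No gaps; this is a proof the paper could have included.
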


Let $\mathbb{Z}_N$ denote the ring of integers modulo $N$ and assume that $\gcd(N, q)=1$.
Let $h$ be an integer with $0\leq h<N$. The {\it $q$-cyclotomic coset} modulo $N$ of $h$ is defined by
$$C_h^{(q, N)}=\{h,hq,hq^2,\ldots,hq^{\ell_{h}-1}\}\,\, {\text\,\,{\rm mod} \,\,N\subseteq \mathbb{Z}_N, }$$
where $\ell_h$ is the smallest positive integer such that $h\equiv hq^{\ell_h} \pmod N$, and is the size of the $q$-cyclotomic
coset $C_h^{(q, N)}$. The trace representation of $\lambda$-constacyclic codes is documented below.

\begin{lemma}\label{lem-01}\cite[Theorem 1]{SZW20}
Let $\lambda \in {\rm GF}(q)^*$ with $\ord(\lambda)=r$. Let $n$ be a positive integer such that ${\rm \gcd}(n,q)=1$.
Define $m=\ord_{rn}(q)$ and let $\gamma \in  {\rm GF}(q^m)$ be a primitive $rn$-th root of unity such that $\gamma^n=\lambda$. Let $\mathcal{C}$ be a q-ary $\lambda$-constacyclic code of length $n$. Suppose that $\mathcal{C}$ has altogether $s$ pairwise
non-conjugate nonzeros, $\gamma^{i_1}$, $\ldots$, $\gamma^{i_s}$, which are $s$ roots of its check polynomial.
Then $\mathcal{C}$ has the trace representation $\mathcal{C}=\{\mathbf{c}(a_1,a_2,\ldots,a_s)\,:\,a_j\in {\rm GF}(q^{m_j}),\,1\leq j\leq s\}$, where
$$\mathbf{c}(a_1,a_2,\ldots,a_s)=\left(\sum_{j=1}^s{\rm Tr}_{q^{m_j}/q}(a_j\gamma^{-ti_j})\right)_{t=0}^{n-1},$$
$m_j=|C_{i_j}^{(q,rn)}|$ and $C_{i_j}^{(q,rn)}$ is the $q$-cyclotomic coset of $i_j$ modulo $rn$.
\end{lemma}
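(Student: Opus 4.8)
The plan is to prove the trace representation by a frequency-domain (generalized discrete-Fourier) argument, whose key structural input is that the root set of $x^n-\lambda$ is a single coset of the group of $n$-th roots of unity. First I would pin down the roots of $x^n-\lambda$: since $\gamma^n=\lambda$ and $\gamma$ has multiplicative order $rn$, a power $\gamma^i$ is a root iff $\gamma^{(i-1)n}=1$ iff $r\mid(i-1)$. Hence the roots are exactly $\{\gamma^i : i\equiv 1 \pmod r\}$, a set $I$ of $n$ residues modulo $rn$. The nonzeros of $\mathcal{C}$ index the subset $T=\bigcup_{j=1}^s C_{i_j}^{(q,rn)}\subseteq I$ with $|T|=\sum_{j=1}^s m_j=\dim\mathcal{C}=:k$, and the zeros index the complement $Z=I\setminus T$. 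Since $x^n-\lambda$ is separable (its derivative $nx^{n-1}$ is coprime to it, because $\lambda\neq 0$ and $\gcd(n,q)=1$ force $n\neq 0$ in $\gf(q)$), membership in the ideal $\mathcal{C}=(g(x))$ is equivalent to vanishing at every $\gamma^z$ with $z\in Z$.

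Next I would rewrite the candidate codewords. Expanding the trace gives
$$c_t=\sum_{j=1}^s\sum_{l=0}^{m_j-1} a_j^{q^l}\gamma^{-t\,i_j q^l}=\sum_{i\in T} b_i\gamma^{-it}, \qquad b_{i_j q^l}:=a_j^{q^l}.$$
The assignment $(a_1,\dots,a_s)\mapsto(b_i)_{i\in T}$ is a $\gf(q)$-linear bijection onto the families obeying the conjugacy constraint $b_{iq}=b_i^q$ (the inverse sends $(b_i)$ to $a_j=b_{i_j}$, which lies in $\gf(q^{m_j})$ because $i_j q^{m_j}\equiv i_j\pmod{rn}$), and each $c_t$ lies in $\gf(q)$ automatically. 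Thus $\mathbf{a}=(a_1,\dots,a_s)\mapsto\mathbf{c}(\mathbf{a})$ is $\gf(q)$-linear with domain of $\gf(q)$-dimension $k$.

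The technical heart is the orthogonality relation
$$\sum_{t=0}^{n-1}\gamma^{(z-i)t}=n\cdot[\,z\equiv i \pmod{rn}\,]\qquad (z,i\in I),$$
valid because $z-i\equiv 0\pmod r$, so $\gamma^{z-i}=(\gamma^r)^{w}$ with $w=(z-i)/r\in\Z$ and $\gamma^r$ a primitive $n$-th root of unity, reducing the left side to a standard geometric sum over $n$-th roots of unity. Taking $z\in Z$ and $i\in T$ (never congruent modulo $rn$) yields $c(\gamma^z)=\sum_{i\in T}b_i\sum_{t=0}^{n-1}\gamma^{(z-i)t}=0$, so $\mathbf{c}(\mathbf{a})\in\mathcal{C}$. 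Taking both indices in $T$ yields $\sum_{t=0}^{n-1}c_t\gamma^{i't}=n\,b_{i'}$ for each $i'\in T$; hence $\mathbf{c}(\mathbf{a})=\mathbf{0}$ forces every $b_{i'}=0$ and the map is injective. An injective $\gf(q)$-linear map from a $k$-dimensional space into the $k$-dimensional code $\mathcal{C}$ is onto, which establishes the claimed representation.

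The main obstacle I anticipate is bookkeeping around the constacyclic twist rather than any genuine difficulty: one must consistently track that every root index lies in the class $1\bmod r$, that it is $\gamma^r$ (not $\gamma$) which furnishes the primitive $n$-th root of unity driving the orthogonality, and that $\gf(q)$-rationality of the codewords is precisely encoded by $b_{iq}=b_i^q$, so that the parameters $a_j\in\gf(q^{m_j})$ exactly account for the $q^k$ codewords.
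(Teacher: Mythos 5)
Your proof is correct and complete. The paper does not prove this lemma itself --- it quotes it from the cited reference [SZW20, Theorem~1] --- and your argument is essentially the standard proof of such trace representations: identify the roots of $x^n-\lambda$ as the coset $\{\gamma^i : i\equiv 1 \pmod r\}$ (using separability, which $\gcd(n,q)=1$ guarantees), expand the trace expression into a generalized discrete Fourier sum $\sum_{i\in T} b_i\gamma^{-it}$ with the conjugacy constraint $b_{iq}=b_i^q$, kill the sum at every zero $\gamma^z$ via the orthogonality relation driven by the primitive $n$-th root of unity $\gamma^r$, and conclude by injectivity plus the dimension count $\sum_j m_j = \deg h = \dim\mathcal{C}$. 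The only point worth making explicit is that $r\mid q-1$ (since $\lambda\in\gf(q)^*$), which is why the $q$-cyclotomic cosets of the $i_j$ remain inside the residue class $1 \bmod r$; as you note, this also follows from $h(x)\mid x^n-\lambda$, so no gap results.
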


\subsection{The subfield subcodes of linear codes}

Let $\mathcal{C}$ be an $[n,k]$ code over $\gf(q^h)$, where $q$ is a prime power and $h$ is a positive integer. The subfield subcode of $\mathcal{C}$ over $\gf(q)$, denoted by $\mathcal{C}|_{\gf(q)}$, is
defined by
$$\mathcal{C}|_{\gf(q)}=\mathcal{C}\cap \gf(q)^n.$$

In general, there is no elementary relationship between the dimension of a code and the dimension of its subfield subcode.  Two  bounds on the dimension of a subfield subcode are given by
$$k\geq \dim(\mathcal{C}|_{\gf(q)})\geq n-h(n-k).$$
The following theorem shows a relation between a subfield subcode and a trace code of a linear code, which was developed by Delsarte.
\begin{lemma}[Delsarte Theorem]\label{lem:oct10-05-01}
Let $\mathcal{C}$ be a linear code of length $n$ over $\gf(q^m)$. Then
$$\mathcal{C}|_{\gf(q)}=({\rm Tr}_{q^m/q}(\mathcal{C}^{\perp}))^{\perp}.$$
\end{lemma}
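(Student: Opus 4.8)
The plan is to establish the two inclusions by translating the orthogonality conditions into statements about the trace bilinear form $(a,b)\mapsto\Tr_{q^m/q}(ab)$ on $\gf(q^m)$. Throughout, write $\langle \bu,\bv\rangle=\sum_{i=0}^{n-1}u_iv_i$ for the standard inner product, understood over $\gf(q^m)$ when the vectors lie in $\gf(q^m)^n$ and over $\gf(q)$ when they lie in $\gf(q)^n$. One must keep straight which field each dual and each inner product is taken over: the outer dual in the statement is over $\gf(q)$, whereas $\mathcal{C}^{\perp}$ and $(\mathcal{C}^{\perp})^{\perp}$ are over $\gf(q^m)$.

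First I would take an arbitrary $\bx\in\gf(q)^n$ and compute, for any $\bc=(c_0,\ldots,c_{n-1})\in\mathcal{C}^{\perp}$, the $\gf(q)$-inner product of $\bx$ with the trace vector $\Tr_{q^m/q}(\bc)=(\Tr_{q^m/q}(c_0),\ldots,\Tr_{q^m/q}(c_{n-1}))$. Since each $x_i$ lies in $\gf(q)$ and $\Tr_{q^m/q}$ is $\gf(q)$-linear, one has $x_i\Tr_{q^m/q}(c_i)=\Tr_{q^m/q}(x_ic_i)$, whence
$$\langle \bx,\Tr_{q^m/q}(\bc)\rangle=\Tr_{q^m/q}\!\left(\sum_{i=0}^{n-1}x_ic_i\right)=\Tr_{q^m/q}(\langle \bx,\bc\rangle).$$
This identity is the crux of the argument: it shows that $\bx\in(\Tr_{q^m/q}(\mathcal{C}^{\perp}))^{\perp}$ if and only if $\Tr_{q^m/q}(\langle \bx,\bc\rangle)=0$ for every $\bc\in\mathcal{C}^{\perp}$, thereby reducing the whole statement to understanding when the trace of the $\gf(q^m)$-inner product vanishes.

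The inclusion $\mathcal{C}|_{\gf(q)}\subseteq(\Tr_{q^m/q}(\mathcal{C}^{\perp}))^{\perp}$ is then immediate: if $\bx\in\mathcal{C}|_{\gf(q)}\subseteq\mathcal{C}=(\mathcal{C}^{\perp})^{\perp}$, then $\langle \bx,\bc\rangle=0$ for all $\bc\in\mathcal{C}^{\perp}$, so a fortiori its trace vanishes. For the reverse inclusion I would exploit that $\mathcal{C}^{\perp}$ is a $\gf(q^m)$-linear, not merely $\gf(q)$-linear, code. Suppose $\bx\in\gf(q)^n$ satisfies $\Tr_{q^m/q}(\langle \bx,\bc\rangle)=0$ for all $\bc\in\mathcal{C}^{\perp}$. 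Fixing $\bc$ and replacing it by $\lambda\bc\in\mathcal{C}^{\perp}$ for an arbitrary $\lambda\in\gf(q^m)$ gives $\Tr_{q^m/q}(\lambda\langle \bx,\bc\rangle)=0$ for all $\lambda$. Because the trace form is nondegenerate on $\gf(q^m)$, this forces $\langle \bx,\bc\rangle=0$; as $\bc$ was arbitrary, $\bx\in(\mathcal{C}^{\perp})^{\perp}=\mathcal{C}$, and since $\bx\in\gf(q)^n$ we conclude $\bx\in\mathcal{C}|_{\gf(q)}$.

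The main obstacle is precisely this reverse inclusion, namely passing from vanishing of the trace of the inner product to vanishing of the inner product itself. The two essential ingredients are that $\mathcal{C}^{\perp}$ is closed under scaling by all of $\gf(q^m)$ and that the trace form is nondegenerate; together they let us promote the single scalar equation ``the trace is zero'' into the full equation ``the inner product is zero.'' Everything else is the bookkeeping identity of the second paragraph, which mediates cleanly between the $\gf(q)$-dual on the outside and the $\gf(q^m)$-dual on the inside.
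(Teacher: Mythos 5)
Your proof is correct. Note that the paper states this lemma (Delsarte's theorem) as a known result and gives no proof of its own, so there is nothing to compare against; your argument --- the identity $\langle \bx,\Tr_{q^m/q}(\bc)\rangle=\Tr_{q^m/q}(\langle \bx,\bc\rangle)$ for $\bx\in\gf(q)^n$, followed by scaling $\bc$ by arbitrary $\lambda\in\gf(q^m)$ and invoking nondegeneracy of the trace form together with $(\mathcal{C}^\perp)^\perp=\mathcal{C}$ --- is precisely the standard proof of Delsarte's theorem, and all of its steps are sound.
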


\subsection{Pless power moments}

Let $\mathcal{C}$ be an $[n, k]$ code over $\gf(q)$, and denote its dual by $\mathcal{C}^{\perp}$. Let
 $A_i$ and $A^{\perp}_i$ be the number of codewords of weight $i$ in $\mathcal{C}$ and $\mathcal{C}^{\perp}$, respectively.
 The first four Pless power moments are as follows \cite[p. 259]{Huffman03}:
\begin{equation*}
\begin{split}
&\sum_{i=0}^nA_i=q^k;\\
&\sum_{i=0}^niA_i=q^{k-1}(qn-n-A_1^{\perp});\\
&\sum_{i=0}^ni^2A_i=q^{k-2}[(q-1)n(qn-n+1)-(2qn-q-2n+2)A_1^{\perp}+2A_2^{\perp}];\\
&\sum_{i=0}^ni^3A_i=q^{k-3}[(q-1)n(q^2n^2-2qn^2+3qn-q+n^2-3n+2)-(3q^2n^2-3q^2n-6qn^2+12qn\\
&\hskip 48pt+q^2-6q+3n^2-9n+6)A_1^{\perp}+6(qn-q-n+2)A_2^{\perp}-6A_3^{\perp}].
\end{split}
\end{equation*}
If $A_1^{\perp}=A_2^{\perp}=A_3^{\perp}=0$, then the fifth Pless power moment is as follows:
\begin{equation*}
\begin{split}
&\sum_{i=0}^ni^4A_i=q^{k-4}[(q-1)n(q^3n^3-3q^2n^3+6q^2n^2-4q^2n+q^2+3qn^3-12qn^2+15qn-6q-n^3\\
&\hskip 48pt+6n^2-11n+6)+24A_4^{\perp}].
\end{split}
\end{equation*}

\subsection{The Assmus-Mattson theorem}

The following theorem, which was developed by Assumus and Mattson in \cite{AM69},
provides a necessary condition for a linear code and its dual to hold simple $t$-designs.

\begin{theorem}[Assmus-Mattson Theorem] Let $C$ be an $[n,k,d]$ code over $\gf(q)$. Let $d^{\perp}$ denote the minimum distance of $C^{\perp}$. Let $w$ be the largest integer satisfying $w\leq n$ and
$$w-\left\lfloor\frac{w+q-2}{q-1}\right\rfloor < d.$$
Define $w^{\perp}$ analogously using $d^{\perp}$. Let $(A_i)_{i=0}^n$ and $(A_i^{\perp})_{i=0}^n$ denote the weight distribution of $C$ and $C^{\perp}$, respectively. Fix a positive integer $t$ with $t<d$, and let $s$ be the number of $i$ with $A_i^{\perp}\neq 0$ for $1 \leq i\leq n-t$. Suppose $s\leq d-t$. Then
\begin{itemize}
\item the codewords of weight $i$ in $C$ hold a simple $t$-design provided $A_i\neq 0$ and $d\leq i\leq w$, and
\item the codewords of weight $i$ in $C^{\perp}$ hold a simple $t$-design provided $A_i^\perp \neq 0$ and $d^{\perp}\leq i\leq \text{min}\{n-t,w^{\perp}\}$.
\end{itemize}
\end{theorem}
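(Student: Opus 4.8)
The statement to be proved is the classical Assmus--Mattson theorem, so the plan is to reconstruct its standard proof, organized around shortened codes and the MacWilliams/Pless identities.

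The plan is to reduce the assertion ``the weight-$i$ codewords hold a simple $t$-design'' to a statement about counting codewords that vanish on a prescribed set of coordinates. First I would use the bound defining $w$ to control repeated supports: I claim that for $d\leq i\leq w$, any two codewords of weight $i$ with the same support are scalar multiples. Indeed, if $c,c'$ are linearly independent of weight $i$ with common support $S$, where $|S|=i\leq w$, then among the $q-1$ codewords $c-\gamma c'$ with $\gamma\in\gf(q)^*$ each position of $S$ is zeroed by exactly one value of $\gamma$ (since $c_j,c'_j\neq 0$ there), so by pigeonhole some $\gamma$ zeroes at least $\lfloor (i+q-2)/(q-1)\rfloor$ positions; the corresponding nonzero codeword then has weight at most $i-\lfloor(i+q-2)/(q-1)\rfloor<d$, contradicting the minimum distance $d$. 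Hence each block in $\mathcal{B}_i(C)$ arises from exactly $q-1$ codewords of weight $i$, and ``holding a simple $t$-design'' becomes equivalent to requiring that, for every $t$-subset $T$ of coordinates, the number of weight-$i$ codewords whose support contains $T$ be a constant $\lambda(q-1)$ independent of $T$.

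Next, by inclusion--exclusion over subsets of $T$, the number of weight-$i$ codewords with $T\subseteq\support(c)$ equals $\sum_{T'\subseteq T}(-1)^{|T'|}B_i(T')$, where $B_i(T')=\#\{c\in C:\wt(c)=i,\ \support(c)\cap T'=\emptyset\}$. So it suffices to prove that $B_i(T')$ depends only on $|T'|$ for all $T'$ with $|T'|\leq t$. Now $B_i(T')$ is exactly the weight-$i$ coefficient of the weight enumerator of the shortened code $C_{T'}$ (codewords of $C$ vanishing on $T'$, then punctured there), a code of length $n-|T'|$. Two structural facts will drive the argument: the shortened code $C_{T'}$ has minimum distance at least $d$, so its weight enumerator vanishes in degrees $1,\dots,d-1$; and, by the standard duality between shortening and puncturing, the dual of $C_{T'}$ is $C^{\perp}$ punctured on $T'$.

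The heart of the proof, and the step I expect to be the \emph{main obstacle}, is to show that the weight enumerator of $C_{T'}$ is forced to be independent of the particular $T'$. The idea is to feed the two structural facts into the MacWilliams identities (equivalently the Pless power moments) for the pair $(C_{T'},C_{T'}^{\perp})$. The vanishing of the low-degree terms confines the unknown weights of $C_{T'}$ to a small index set, while the hypothesis that $C^{\perp}$ has only $s\leq d-t$ nonzero weights in the range $[1,n-t]$ bounds the number of free parameters contributed by the punctured-dual side. Counting equations against unknowns, the resulting linear system has a Vandermonde/Krawtchouk coefficient structure and hence full rank, so it has a unique solution; crucially, its right-hand side is expressible through quantities — the dimensions, the global weight distributions of $C$ and $C^{\perp}$, and $|T'|$ — that do not see the identity of $T'$. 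This is the delicate accounting where the exact hypothesis $s\leq d-t$ must be used, and it yields that $B_i(T')=A_i^{(T')}$ depends only on $|T'|$.

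Finally I would assemble the pieces: combining the uniqueness result with the inclusion--exclusion reduction shows that, for each $i$ with $d\leq i\leq w$ and $A_i\neq 0$, every $t$-set lies in the same number of supports, so $\mathbb{D}_i(C)$ is a $t$-$(n,i,\lambda)$ design. The statement for $C^{\perp}$ follows by the identical argument with the roles of $C$ and $C^{\perp}$ interchanged, using $d^{\perp}$ and $w^{\perp}$ in place of $d$ and $w$ together with the same bound $s\leq d-t$; here the admissible range is $d^{\perp}\leq i\leq\min\{n-t,w^{\perp}\}$, since the shortened-code minimum-distance control now caps $i$ at $w^{\perp}$ while the $t$-design structure requires $i\leq n-t$.
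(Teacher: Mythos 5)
The paper itself offers no proof of this statement: it is quoted verbatim as a known result from \cite{AM69} (see also \cite[p.~303]{Huffman03}), so your proposal can only be judged against the standard proof. Your outer scaffolding agrees with that proof and is sound: the pigeonhole argument showing that two weight-$i$ codewords with the same support are proportional whenever $i\leq w$, the reduction by inclusion--exclusion to showing that the avoiding counts $B_i(T')=A_i(C_{T'})$ depend only on $|T'|$, and the shorten/puncture duality $(C_{T'})^{\perp}=(C^{\perp})^{T'}$ are all correct.

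The genuine gap sits exactly at the step you flag as the heart of the argument: you attach the few-weights hypothesis to the wrong code. You pair the shortened primal $C_{T'}$ with the \emph{punctured} dual $(C^{\perp})^{T'}$ and claim that the hypothesis ``$C^{\perp}$ has only $s\leq d-t$ nonzero weights in $[1,n-t]$'' bounds the free parameters on the punctured-dual side. It does not: puncturing spreads weights --- a dual codeword of weight $j$ can land on any weight in $[j-|T'|,\,j]$, and dual codewords of weight exceeding $n-t$ also survive puncturing --- so $(C^{\perp})^{T'}$ may have many more than $s$ nonzero weights, its weight distribution is not a priori independent of $T'$, and the Vandermonde system with $T'$-free right-hand side that you invoke simply does not exist in this pairing. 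The standard proof uses the opposite pairing: the \emph{shortened} dual $(C^{\perp})_{T'}$ has all its nonzero weights among the $s'$ weights of $C^{\perp}$ lying in $[1,n-|T'|]$ (shortening preserves weights), while its dual, the \emph{punctured} primal $C^{T'}$, has minimum distance at least $d-|T'|\geq s'$; hence the first $s'$ Pless power moments of $(C^{\perp})_{T'}$ have right-hand sides depending only on $n$, $q$, $|T'|$ and the dimension, and a genuine Vandermonde system determines the $s'$ unknown frequencies independently of the identity of $T'$. (Here one must also check the propagation your plan quietly needs for subsets $T'$ with $t'=|T'|\leq t$: since at most one new dual weight occurs for each integer in $(n-t,\,n-t']$, one has $s'\leq s+(t-t')\leq d-t'$.) From this, the distribution of $C^{T'}$ is $T'$-independent by MacWilliams, the design property for $C^{\perp}$ follows directly from the shortened-dual distributions, and the design property for $C$ is recovered not from your $B_i(T')$ directly but from the punctured counts $A_u(C^{T''})$, $T''\subseteq T'$, by a Möbius/induction step on the weight (starting from the observation that $A_{d-t}(C^{T})$ counts exactly the weight-$d$ codewords whose support contains $T$). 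Your proposal can be repaired along these lines, but as written its central determination step fails.
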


The reader is referred to \cite{TDXing2020} and \cite[Chapter 16]{Dingbook18} for a generalization of the Assmus-Mattson theorem.

\subsection{Ovoid codes and their designs}\label{sec-ovoidcodedes}

Let $q>2$ be a prime power. An \emph{ovoid} in the projective space $\PG(3, \gf(q))$ is a set of $q^2+1$ points such that
no three of them are colinear (i.e., on the same line).
A \emph{classical ovoid} $\cV$ can be defined as the set of all points given by
\begin{eqnarray}
\cV=\{(0,0,1, 0)\} \cup \{(x,\, y,\, x^2+xy +ay^2,\, 1): x,\, y \in \gf(q)\},
\end{eqnarray}
where $a \in \gf(q)$ is such that the polynomial $x^2+x+a$ has no root in $\gf(q)$.
Such ovoid is called an \emph{elliptic quadric}\index{quadric}, as the points
come from a non-degenerate elliptic quadratic form.  When $q$ is odd,
every ovoid in $\PG(3, \gf(q))$ is equivalent to the elliptic quadric.

For $q=2^{2e+1}$ with $e \geq 1$, there is an ovoid which is not an elliptic quadric
and is called the \emph{Tits oviod}\index{Tits ovoid}. It is defined by
\begin{eqnarray}
\cT=\{(0,0,1,0)\}\cup \{(x,\,y,\, x^{\sigma} + xy +y^{\sigma+2},\,1): x, \, y \in \gf(q)\},
\end{eqnarray}
where $\sigma=2^{e+1}$. It is known that the Tits ovoid is not equivalent to the
elliptic quadric.

An \emph{ovoid code} is a linear code over $\gf(q)$ with parameters $[q^2+1, 4, q^2]$.
Every ovoid can be used to construct an ovoid code and vice versa.
The next theorem summarizes information on ovoid codes  \cite[Chapter 13]{Dingbook18}.

\begin{theorem} \label{thm-ovoidcodeinf}
Let $q >2$ be a prime power.
\begin{itemize}
\item  Every ovoid code $\C$ over $\gf(q)$ must have parameters $[q^2+1, 4, q^2-q]$ and weight enumerate $1+(q^2-q)(q^2+1)z^{q^2-q}+(q-1)(q^2+1)z^{q^2}$. When $q$ is odd, all ovoid codes over $\gf(q)$ are monomially equivalent.
When $q$ is even, the elliptic quadric code and the Tits ovoid code over $\gf(q)$ are not monomially equivalent.
\item The dual of every  ovoid code $\C$ over $\gf(q)$ must have have parameters $[q^2+1, q^2-3, 4]$.
\item The minimum weight codewords in an ovoid code over $\gf(q)$ support a $3$-$(q^2+1, q^2-q, (q-2)(q^2-q-1))$ design and the complementary design of this design is a Steiner system $S(3, 1+q, 1+q^2)$. When $q$ is odd, all these $3$-$(q^2+1, q^2-q, (q-2)(q^2-q-1))$ designs are isomorphic. The minimum weight codewords in the dual of  an ovoid code over $\gf(q)$ support a $3$-$(q^2+1, 4, q-2)$ design.
\end{itemize}
\end{theorem}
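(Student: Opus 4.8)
The plan is to make the ovoid code explicit and convert every claim into a statement about how the planes of $\PG(3,\gf(q))$ cut the ovoid. Fix an ovoid $\mathcal{O}=\{P_0,\ldots,P_{q^2}\}$ and let $G$ be the $4\times(q^2+1)$ matrix whose $i$-th column is a coordinate vector of $P_i$, so that the ovoid code is $\C=\{\ba G:\ba\in\gf(q)^4\}$. Because an ovoid spans $\PG(3,\gf(q))$, the matrix $G$ has rank $4$ and $\C$ has dimension $4$. For $\ba\neq\bzero$ the $i$-th entry of $\ba G$ is $\ba\cdot P_i$, which vanishes exactly when $P_i$ lies on the plane $H_\ba:\ \ba\cdot\bx=0$; hence $\wt(\ba G)=q^2+1-|\mathcal{O}\cap H_\ba|$, and $\ba,\ba'$ produce the same support precisely when they define the same plane. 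Thus the entire theorem reduces to the combinatorics of the plane sections of $\mathcal{O}$.

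First I would establish (or quote) the geometric core: every plane meets $\mathcal{O}$ either in exactly one point (a tangent plane) or in exactly $q+1$ points (a secant plane), and there are precisely $q^2+1$ tangent planes, one through each point. The dichotomy follows from the fact that a section is an arc of $\PG(2,\gf(q))$ together with the incidence count that, for fixed $P\in\mathcal{O}$, each of the $q^2+q+1$ lines through $P$ carries at most one further point of $\mathcal{O}$; this forces exactly $q^2$ secant lines and $q+1$ tangent lines, the latter being coplanar and spanning the unique tangent plane at $P$. Substituting the two section sizes into $\wt(\ba G)=q^2+1-|\mathcal{O}\cap H_\ba|$ gives weight $q^2$ for the $q^2+1$ tangent planes and weight $q^2-q$ for the remaining $q^3+q=q(q^2+1)$ secant planes; multiplying by the $q-1$ scalars fixing each plane yields the enumerator $1+(q^2-q)(q^2+1)z^{q^2-q}+(q-1)(q^2+1)z^{q^2}$ and the minimum distance $q^2-q$, which is the first bullet. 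This paragraph carries the genuine difficulty: proving the section dichotomy and the tangent-plane count uniformly in $q$ is exactly where the defining ``no three collinear'' condition must be exploited through arc bounds and counting.

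For the dual I would work with the columns of $G$. The dimension is $q^2+1-4=q^2-3$, and $d^\perp$ is the least size of a linearly dependent set of columns. Since no three points of $\mathcal{O}$ are collinear, every set of at most three columns is independent, so $d^\perp\ge4$; conversely a secant plane supplies $q+1\ge4$ coplanar points, any four of which are dependent, so $d^\perp=4$. This proves the second bullet and shows that the weight-$4$ codewords of $\C^\perp$ are supported exactly on the $4$-subsets of $\mathcal{O}$ that lie in a common secant plane.

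The design statements come from the same section geometry, using that a non-collinear triple lies in a unique plane, necessarily secant. The $(q+1)$-point secant sections therefore form a Steiner system $S(3,q+1,q^2+1)$, which is the complement of the family of minimum-weight supports of $\C$; hence those supports form a $3$-$(q^2+1,q^2-q,\lambda)$ design, and $\binom{q^2+1}{3}\lambda=\binom{q^2-q}{3}\,q(q^2+1)$ gives $\lambda=(q-2)(q^2-q-1)$. Dually, a fixed triple lies in one secant plane with $q-2$ further points, so it belongs to exactly $q-2$ of the coplanar $4$-sets, and the weight-$4$ supports in $\C^\perp$ form a $3$-$(q^2+1,4,q-2)$ design; alternatively the Assmus--Mattson theorem applied to $\C^\perp$ with $t=3$ reproves the $3$-design property at one stroke, since $(\C^\perp)^\perp=\C$ has its only nonzero weight in the range $1\le i\le q^2-2$ at $q^2-q$, whence $s=1\le d(\C^\perp)-t=1$. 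Finally the equivalence assertions rest on the classification of ovoids: for odd $q$ every ovoid is projectively equivalent to the elliptic quadric $\cV$, which induces monomial equivalence of the codes and isomorphism of the designs, while for $q=2^{2e+1}$ the Tits ovoid $\cT$ is projectively inequivalent to $\cV$ and this inequivalence passes to monomial inequivalence of the two codes; the even-$q$ case is the second nontrivial input, relying on the deep fact that the Tits ovoid is not an elliptic quadric.
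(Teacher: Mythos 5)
Your reconstruction is essentially correct, but first a point about the paper itself: the paper does not prove this theorem at all. It is stated as a summary of known facts with a citation to \cite[Chapter 13]{Dingbook18}, so there is no in-paper argument to compare against; what you have written is the standard textbook proof. The verifiable parts all check out. With $\wt(\ba G)=q^2+1-|\mathcal{O}\cap H_{\ba}|$, the tangent/secant dichotomy yields the two weights $q^2$ and $q^2-q$ with frequencies $(q-1)(q^2+1)$ and $(q^2-q)(q^2+1)$, and your counts of $q^2+1$ tangent and $q^3+q$ secant planes sum to all $q^3+q^2+q+1$ planes, confirming implicitly that no plane misses the ovoid. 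The dual-distance argument is clean, including the point (which you should make explicit) that a dependence among four coplanar columns must have all four coefficients nonzero, since any dependence on three or fewer columns would contradict the no-three-collinear condition. Both $\lambda$ computations, $(q-2)(q^2-q-1)$ for the complementary-of-Steiner design and $q-2$ for the dual, are correct, and the Assmus--Mattson shortcut with $s=1=d^{\perp}-t$ works as stated.

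Two caveats. First, you lean on three quoted deep inputs: the plane-section dichotomy together with the coplanarity of the $q+1$ tangent lines at a point; Barlotti's classification of ovoids for odd $q$; and the projective inequivalence of the Tits ovoid and the elliptic quadric for even $q$. You flag this honestly, and since the paper itself cites the entire theorem this is fair parity, but be aware that the coplanarity of the tangent lines is where the genuine geometric work lies, and that for the even-$q$ bullet you also need the (easy, but unstated) converse that monomial equivalence of two ovoid codes forces $\PGL$-equivalence of the underlying point sets --- a monomial matrix $M$ with $CM=C'$ gives $GM=AG'$ for some $A\in\GL(4,\gf(q))$, matching columns up to scalars. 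Second, a definitional subtlety: the paper defines an ovoid code by its parameters rather than as the code spanned by an ovoid, so read literally the first bullet asserts the weight enumerator for \emph{every} code with those parameters, whereas your argument covers only codes constructed from ovoids; closing this gap requires the converse correspondence (every code with the stated parameters arises from an ovoid, a Griesmer-bound argument given in \cite{Dingbook18}). Since the paper's own definition is internally inconsistent --- it says $[q^2+1,4,q^2]$ while the theorem asserts minimum distance $q^2-q$ --- your geometric reading is a reasonable repair, but the mismatch deserves a remark. Finally, you assert without proof that an ovoid spans $\PG(3,\gf(q))$; this needs the planar arc bound (a plane set with no three collinear has at most $q+2$ points, which is less than $q^2+1$ for $q>2$), a one-line fix that should be included.
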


When $q$ is even, there is a cyclic-code construction of the elliptic quadric \cite[Chapter 13]{Dingbook18}.
When $q$ is odd, no cyclic code can be used to construct an ovoid in $\PG(3, \gf(q))$. But there is a $\lambda$-constacyclic
construction of any ovoid in $\PG(3, \gf(q))$ for odd $q$ due to the following theorem, which is known in
the literature.

\begin{theorem}\label{thm-constacodeelliptic}
Let $q>2$ be a prime power, $\alpha$ be a primitive element of $\gf(q^4)$, $\gamma=\alpha^{q+1}$ and
$\lambda=\alpha^{(q^2+1)(q+1)}$. Let $h(x)$ be the minimal polynomial of $\gamma$ over $\gf(q)$, and
let $\C[q]$ denote the $\lambda$-constacyclic code of length $q^2+1$ over $\gf(q)$ with check polynomial $h(x)$. Then
$\C[q]$ is an ovoid code, which is monomially equivalent to the elliptic quadric code.
\end{theorem}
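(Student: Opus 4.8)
The plan is to use the trace representation to realise $\C[q]$ geometrically as a point set in $\PG(3,\gf(q))$, and then to pin that point set down by exhibiting an explicit elliptic quadratic form that vanishes on it.

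First I would record the relevant orders. Since $q^4-1=(q-1)(q+1)(q^2+1)$, the element $\lambda=\alpha^{(q^2+1)(q+1)}$ has order $q-1$, so $\lambda$ generates $\gf(q)^*$ and $\C[q]$ is a genuine $\lambda$-constacyclic code of length $n=q^2+1$; likewise $\gamma=\alpha^{q+1}$ has order $(q-1)(q^2+1)$ and satisfies $\gamma^{q^2+1}=\lambda$, so $\gamma$ is a primitive $rn$-th root of unity with $r=\ord(\lambda)=q-1$, exactly as required by Lemma~\ref{lem-01}. A short computation gives $\ord_{(q-1)(q^2+1)}(q)=4$, so the $q$-cyclotomic coset $C_1$ has size $4$; hence $h(x)$ has degree $4$ and $\dim\C[q]=4$. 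Lemma~\ref{lem-01} then yields $\C[q]=\{\mathbf{c}(a):a\in\gf(q^4)\}$ with $\mathbf{c}(a)_t=\Tr_{q^4/q}(a\gamma^{-t})$ for $0\le t\le q^2$. Fixing a $\gf(q)$-basis of $\gf(q^4)$, the $t$-th column of the resulting generator matrix is the image of $\gamma^{-t}$ under a fixed $\gf(q)$-linear isomorphism $\gf(q^4)\cong\gf(q)^4$ (coming from the nondegenerate form $\langle x,y\rangle=\Tr_{q^4/q}(xy)$), so as projective points the columns are exactly $\{[\gamma^{-t}]:0\le t\le q^2\}$. These $q^2+1$ points are pairwise distinct, since $[\gamma^{-s}]=[\gamma^{-t}]$ forces $\gamma^{t-s}\in\gf(q)^*=\langle\gamma^{q^2+1}\rangle$, i.e. $(q^2+1)\mid(t-s)$, which is impossible for $0\le s,t\le q^2$ unless $s=t$.

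The heart of the argument is to show these $q^2+1$ points form an elliptic quadric. The key observation is that $x^{q^2+1}=\Norm_{q^4/q^2}(x)\in\gf(q^2)$ and that $\gamma^{t(q^2+1)}=\lambda^t\in\gf(q)$. I would pick $c'\in\gf(q^2)^*$ with $\Tr_{q^2/q}(c')=0$ (possible since the trace is surjective with one-dimensional kernel) and set $Q(x)=\Tr_{q^2/q}\!\big(c'\,x^{q^2+1}\big)$. Expanding $Q(x+y)-Q(x)-Q(y)$ shows $Q$ is a quadratic form over $\gf(q)$ in the four coordinates of $x$, and $Q(\gamma^{-t})=\lambda^{-t}\,\Tr_{q^2/q}(c')=0$ for every $t$; thus every column point lies on the quadric $\{Q=0\}$. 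To read off the type I would count affine zeros directly: for $x\ne0$ put $z=\Norm_{q^4/q^2}(x)$, which ranges over $\gf(q^2)^*$ with each value attained $(q^4-1)/(q^2-1)=q^2+1$ times, while $\Tr_{q^2/q}(c'z)=0$ admits $q-1$ nonzero values of $z$. Hence $Q$ has $1+(q-1)(q^2+1)=q^3-q^2+q$ zeros in $\gf(q^4)$, which is precisely the zero count of a nondegenerate elliptic quadratic form in four variables (every other rank or type yields a different count), so its projective zero set has $(q^3-q^2+q-1)/(q-1)=q^2+1$ points and is an elliptic quadric.

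Since our $q^2+1$ distinct points all lie on this elliptic quadric, they coincide with it and therefore form an ovoid. Consequently the columns of the generator matrix form an ovoid projectively equivalent to the elliptic quadric, so $\C[q]$ is an ovoid code; by Theorem~\ref{thm-ovoidcodeinf} it has parameters $[q^2+1,4,q^2-q]$ and is monomially equivalent to the elliptic quadric code. The main obstacle, and the only genuinely creative step, is guessing the form $Q$: once one notices that raising to the power $q^2+1$ sends $\gamma^t$ into $\gf(q)$ via the relative norm, the vanishing $Q(\gamma^{-t})=0$ is automatic and the type is decided by a one-line zero count. The remaining care is minor, namely (i) choosing $c'\ne0$ with $\Tr_{q^2/q}(c')=0$ so that $Q\not\equiv0$, and (ii) in even characteristic, checking that the same zero count still isolates the elliptic type; the odd-$q$ case, which is the one of interest here, is entirely clean.
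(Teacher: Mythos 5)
Your proposal is correct, and it is worth noting up front that the paper itself gives no proof of this theorem at all: it is stated with the remark that it ``is known in the literature,'' so your argument is not competing with an internal proof but supplying the missing one. Your route --- trace representation via Lemma~\ref{lem-01} (after checking $\ord(\lambda)=q-1$, $\ord(\gamma)=(q-1)(q^2+1)$, $\gamma^{q^2+1}=\lambda$ and $\ord_{(q-1)(q^2+1)}(q)=4$), identification of the generator-matrix columns with the projective points $[\gamma^{-t}]$, and the explicit quadratic form $Q(x)=\Tr_{q^2/q}\bigl(c'\,x^{q^2+1}\bigr)$ with $\Tr_{q^2/q}(c')=0$ --- is essentially the standard proof of this known result, and all the computations check: the points are pairwise distinct since $\gamma^k\in\gf(q)^*$ forces $(q^2+1)\mid k$; the vanishing $Q(\gamma^{-t})=\lambda^{-t}\Tr_{q^2/q}(c')=0$ is immediate; and the exact zero count $1+(q-1)(q^2+1)=q^3-q^2+q$, obtained from the surjectivity of $\Norm_{q^4/q^2}$ with fibers of size $q^2+1$, matches only the nondegenerate elliptic type among all equivalence classes of quaternary quadratic forms (the other possible counts being $q^4$, $q^3$, $2q^3-q^2$, $q^2$ and $q^3+q^2-q$), so the $q^2+1$ points fill out an elliptic quadric. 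Two small refinements: the caveat you flag for even characteristic does resolve cleanly --- in the char-$2$ classification the defective forms such as $x_1x_2+x_3^2$ contribute counts $q^3$ or $q^2$, never $q^3-q^2+q$, so the same one-line count isolates the elliptic type for all $q>2$; and your final appeal to Theorem~\ref{thm-ovoidcodeinf} for the monomial equivalence is unnecessary and, for even $q$, insufficient (that theorem asserts equivalence of all ovoid codes only for odd $q$), but no appeal is needed: since your column point set literally \emph{equals} the elliptic quadric, the two generator matrices differ only by a column permutation, nonzero column scalings, and a row-space-preserving change of basis, which is exactly a monomial equivalence with the elliptic quadric code for every $q>2$.
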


Notice that the $q$ in Theorem \ref{thm-constacodeelliptic} could be even and odd. and
the $\lambda$ defined in Theorem \ref{thm-constacodeelliptic} is a generator of $\gf(q)^*$. Hence,
the code $\C[q]$ is neither cyclic nor negacyclic. In this paper, we will present a construction of a negacyclic
ovoid code over $\gf(q)$ for $q \equiv 3 \pmod{4}$.

\section{A family of cyclic codes  supporting 3-designs}\label{sec-cyclic-codes}

Throughout this section, let $m \geq 2$ be an integer, $s$ be an integer with $1 \leq s \leq m-1$,  $p$ be an odd prime, $q=p^m$, and $n=q+1$. The two $q$-cyclotomic cosets $C_{(p^s-1)/2}^{(q,n)}$ and $C_{(p^s+1)/2}^{(q,n)}$ are disjoint, as they are given by
$$
C_{(p^s-1)/2}^{(q,n)} =\left\{ \frac{p^s-1}{2}, \, q+1-  \frac{p^s-1}{2}\right\}
$$
and
$$
C_{(p^s+1)/2}^{(q,n)} =\left\{ \frac{p^s+1}{2}, \, q+1-  \frac{p^s+1}{2}\right\}.
$$
Recall that $\beta$ is a primitive $n$-th root of unity in $\gf(q^2)$ and $g_i(x)$ is the minimal polynomial of $\beta^i$ over
$\gf(q)$.

Let $\mathcal{C}(\frac{p^s-1}{2},\frac{p^s+1}{2})$ denote the cyclic code of length $n=q+1$ over $\gf(q)$ with check polynomial $g_{\frac{p^s-1}{2}}(x)g_{\frac{p^s+1}{2}}(x)$.
In this section, we study the parameters of $\mathcal{C}(\frac{p^s-1}{2},\frac{p^s+1}{2})$ and its dual, and prove that the code and its dual hold 3-designs.

In order to obtain the possible weights of the codewords in $\mathcal{C}(\frac{p^s-1}{2},\frac{p^s+1}{2})$,
 we will make use of the following lemma.

\begin{lemma}\label{lem-add}
Let $q$ be an odd prime power,
then the following statements hold.
\begin{enumerate}
\item[{\rm (i)}] Let $z \in \gf(q^2)\setminus \gf(q)$ be fixed, then
$$U_{q+1}\setminus \{1\}=\left\{ \frac{z+u}{z-u}\, :\, u \in \gf(q)\right\}.$$
\item[{\rm (ii)}] Let $\beta' \in U_{q+1} \setminus \{\pm 1\}$ be fixed, then
$$U_{q+1}\setminus \{\beta'\}=\left\{ \frac{\alpha'\beta'+1}{\alpha'+\beta'}\, :\, \alpha' \in \gf(q)\right\}.$$
\end{enumerate}
\end{lemma}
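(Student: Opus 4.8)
The plan is to handle both parts by the same device: realize $U_{q+1}=\{w\in\gf(q^2):w^{q+1}=1\}$ as the group of norm-one elements, equivalently $\{w:w^q=w^{-1}\}$, of cardinality $q+1$, and exhibit in each part a fractional-linear (M\"obius) map $\phi\colon\gf(q)\to\gf(q^2)$ that parametrizes it. For each map I would prove three things in order: (a) $\phi$ is well defined (the denominator never vanishes) and its image lies in $U_{q+1}$; (b) $\phi$ is injective; and (c) since $|\gf(q)|=q=|U_{q+1}|-1$, an injection into the $(q+1)$-element set $U_{q+1}$ must omit exactly one element, which I would pin down as the image of the point at infinity of the projective line, i.e.\ the ratio of the leading coefficients of $\phi$.

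For part (i) the property of $z$ that I would actually invoke is $z^q=-z$, equivalently $\Tr_{q^2/q}(z)=0$; such a nonzero $z$ exists and automatically lies in $\gf(q^2)\setminus\gf(q)$, and $z-u\neq0$ for every $u\in\gf(q)$ since otherwise $z=u\in\gf(q)$. Writing $w=(z+u)/(z-u)$ and applying Frobenius with $u^q=u$ and $z^q=-z$ gives $w^q=(z^q+u)/(z^q-u)=(z-u)/(z+u)=w^{-1}$, so $w\in U_{q+1}$. Injectivity follows by cross-multiplying $\phi(u_1)=\phi(u_2)$, which simplifies to $2(u_1-u_2)z=0$ and hence $u_1=u_2$ because $q$ is odd and $z\neq0$. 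Step (c) then determines the unique omitted root of unity as the ratio of leading coefficients, and one checks directly that it has no preimage in $\gf(q)$.

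Part (ii) runs on the identical template, now using $\beta'^q=\beta'^{-1}$. For $w=(\alpha'\beta'+1)/(\alpha'+\beta')$, multiplying the numerator and denominator of $w^q$ by $\beta'$ and using $\alpha'^q=\alpha'$ yields $w^q=(\alpha'+\beta')/(\alpha'\beta'+1)=w^{-1}$, so $w\in U_{q+1}$; here the denominator is nonzero because the only elements of $U_{q+1}$ lying in $\gf(q)$ are $\pm1$, so $\beta'\notin\gf(q)$ and $\alpha'+\beta'\neq0$. Cross-multiplying $\phi(\alpha_1)=\phi(\alpha_2)$ collapses to $(\alpha_1-\alpha_2)(\beta'^2-1)=0$, and this is exactly the place where the hypothesis $\beta'\neq\pm1$ is consumed to force $\alpha_1=\alpha_2$. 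Counting again leaves a single omitted element; solving $\phi(\alpha')=\beta'$ forces $\beta'^2=1$, a contradiction, so the omitted element is $\beta'$ and the image is $U_{q+1}\setminus\{\beta'\}$ as claimed.

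I expect the only genuinely delicate point -- the \emph{main obstacle} -- to be step (c): verifying that $\phi$ lands inside $U_{q+1}$ rather than merely inside $\gf(q^2)$, and then correctly identifying the single excluded $(q+1)$-th root of unity. This is also precisely where the standing hypotheses are used, namely that $q$ is odd together with $z^q=-z$ in part (i) and $\beta'\neq\pm1$ in part (ii); dropping any of these breaks either the membership $w\in U_{q+1}$ or the injectivity of $\phi$.
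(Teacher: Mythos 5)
The paper states Lemma \ref{lem-add} without proof (it is invoked as a known fact), so there is no in-paper argument to compare against; judged on its own merits, your proof of part (ii) is complete and correct, and follows the standard route: the map $\alpha'\mapsto(\alpha'\beta'+1)/(\alpha'+\beta')$ is well defined and lands in $U_{q+1}$ because $U_{q+1}\cap\gf(q)=\{\pm1\}$, injectivity consumes $\beta'\neq\pm1$ exactly where you say, and the omitted value is the image $\beta'$ of the point at infinity. (Minor omission: to conclude $w\in U_{q+1}$ from $w^q=w^{-1}$ you should also note $w\neq0$, i.e.\ $\alpha'\beta'+1\neq0$, which again follows from $\beta'\notin\gf(q)$.)

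Part (i), however, has a genuine gap. First, you silently replace the hypothesis ``$z\in\gf(q^2)\setminus\gf(q)$ arbitrary'' by ``$z^q=-z$''; the statement quantifies over every such $z$, and proving it for one convenient trace-zero choice does not prove it as stated. Second --- and this is where your sketch would actually fail if written out --- your own step (c) identifies the omitted element as the ratio of leading coefficients of $\phi(u)=(u+z)/(-u+z)$, which is $-1$, not $1$; moreover $\phi(0)=1$, so $1$ certainly lies in the image. What your argument proves (correctly) for $z^q=-z$ is $U_{q+1}\setminus\{-1\}=\{(z+u)/(z-u):u\in\gf(q)\}$, not the displayed identity. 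In fact the literal statement of (i) in the paper is false: $w=(z+u)/(z-u)$ satisfies $w^{q+1}=1$ iff $2u\,\Tr_{q^2/q}(z)=0$, so when $\Tr_{q^2/q}(z)\neq0$ the right-hand side is not even contained in $U_{q+1}$ (e.g.\ $q=3$, $\gf(9)=\gf(3)(i)$ with $i^2=-1$, $z=1+i$, $u=1$ gives $w=1+i$, whose norm is $2\neq1$), and even when $\Tr_{q^2/q}(z)=0$ the excluded element is $-1$. The intended standard statement, provable for arbitrary $z\in\gf(q^2)\setminus\gf(q)$ by exactly your three-step template, is $U_{q+1}\setminus\{1\}=\{(z+u)/(z^q+u):u\in\gf(q)\}$: here $w^q=(z^q+u)/(z+u)=w^{-1}$ by Frobenius, injectivity follows from $z\neq z^q$, and $1$ has no preimage since $z+u=z^q+u$ is impossible. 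You should either repair (i) to this form or explicitly flag the misstatement rather than assert that the omitted-element check succeeds; note that only part (ii) is used later in the paper (in the proof of Lemma \ref{conj-21march338}), so the paper's main results are unaffected.
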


We will need also the next lemma.

\begin{lemma}\label{conj-21march338}
Let $(a,b,c,d)
\in \gf(q^2)^4\setminus \{(0,0,0,0)\}$. When $y$ runs over $U_{q+1}$, the possible number of the solutions of the equation
\begin{equation}\label{eq:sol}
ay+by^{p^s}+cy^{p^s+1}+d=0
\end{equation}
is $0$, $1$, $2$ or $p^{\gcd(s,m)}+1$.
\end{lemma}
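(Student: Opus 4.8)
The plan is to study the equation
\begin{equation*}
ay + by^{p^s} + cy^{p^s+1} + d = 0
\end{equation*}
as $y$ ranges over $U_{q+1}$, and to count its solutions. The key structural feature is that $U_{q+1}$ is the group of $(q+1)$-th roots of unity in $\gf(q^2)$, which is exactly the set of $y \in \gf(q^2)^*$ with $y^{q+1}=1$, equivalently $y^q = y^{-1}$. First I would exploit this to turn the equation into a polynomial identity over $\gf(q^2)$ by clearing the fractional structure: multiplying through by $y^{-1}$ (permissible since $y \neq 0$ on $U_{q+1}$) or by substituting $y^q=y^{-1}$, one can rewrite the condition so that the relevant quantities are expressed through the conjugation map $z \mapsto z^q$. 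The aim is to recognise the left-hand side, after using $y^q=y^{-1}$, as being governed by a low-degree polynomial whose degree controls the number of roots.

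The main obstacle is the exceptional count $p^{\gcd(s,m)}+1$, which signals that in a degenerate regime the equation collapses to a \emph{linearized} (additive) equation rather than an ordinary one. I would therefore separate two cases according to whether the coefficients force the equation to become $\gf(p)$-linear in a suitable variable. When $c \neq 0$, I expect the equation to reduce, after clearing denominators and applying $y^q=y^{-1}$, to a genuine polynomial equation in $y$ of degree bounded by $p^s+1$, but whose number of roots \emph{lying on $U_{q+1}$} is generically at most $2$; the point is that the constraint $y \in U_{q+1}$ together with taking $q$-th powers produces a second polynomial relation, and the two relations are consistent only on a small intersection. Concretely, applying the Frobenius $z\mapsto z^q$ to the original equation and using $y^q=y^{-1}$ yields a conjugate equation
\begin{equation*}
a^q y^{-1} + b^q y^{-p^s} + c^q y^{-p^s-1} + d^q = 0,
\end{equation*}
and combining this with the original by elimination should leave either an identity (the degenerate case) or an equation of controlled degree.

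The crux is to show that in the degenerate case the solution set forms a coset-like object of size $p^{\gcd(s,m)}+1$. Here the substitution from Lemma~\ref{lem-add}, writing $y=(z+u)/(z-u)$ or $y=(\alpha'\beta'+1)/(\alpha'+\beta')$ with the new variable ranging over $\gf(q)$, should transform Equation~\eqref{eq:sol} into an equation in the $\gf(q)$-variable, and the map $u \mapsto u^{p^s}$ appearing there is $\gf(p)$-linear. The number of solutions of an affine-linearized equation $L(u)+\text{const}=0$ where $L$ is a $\gf(p)$-linearized polynomial built from $u^{p^s}$ and $u$ is either $0$ or the size of the kernel of $L$, and that kernel is an $\gf(p)$-subspace whose dimension is $\gcd(s,m)$ when nonzero; this is exactly where the quantity $p^{\gcd(s,m)}$ arises, from $\gcd(x^{p^s}-x,\,x^{p^m}-x)=x^{p^{\gcd(s,m)}}-x$. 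The extra $+1$ comes from the point excluded by the substitution (the value $y=1$ or $y=\beta'$ not covered by the parametrization) being checked separately. I would finish by verifying that the non-degenerate case yields at most $2$ solutions by the degree-$2$ (after elimination) argument, and that the counts $0$ and $1$ are the remaining possibilities for an equation of degree at most $2$ restricted to $U_{q+1}$, thereby exhausting the list $\{0,1,2,p^{\gcd(s,m)}+1\}$.
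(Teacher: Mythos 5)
Your core mechanism for the exceptional count --- the M\"obius parametrization of $U_{q+1}$ from Lemma~\ref{lem-add}, reduction to an affine linearized equation in a $\gf(q)$-variable, the kernel count $p^{\gcd(s,m)}$ coming from $\gcd(x^{p^s}-x,\,x^{p^m}-x)=x^{p^{\gcd(s,m)}}-x$, and the excluded point of the parametrization supplying the $+1$ --- is exactly the paper's, but your sketch omits the one step that makes it work. Substituting $y=(\alpha'\beta'+1)/(\alpha'+\beta')$ into (\ref{eq:sol}) and clearing $(\alpha'+\beta')^{p^s+1}$ produces the top term $(a\beta'+b\beta'^{p^s}+c\beta'^{p^s+1}+d)\,\alpha'^{p^s+1}$; for a generic base point $\beta'$ this coefficient does not vanish, the equation in $\alpha'$ has degree $p^s+1$, and no affine-linearized structure is available. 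The paper's pivotal move is to argue only in the case of more than two solutions (the counts $0,1,2$ are all permitted by the statement, so nothing needs proving otherwise) and to choose the base point $\beta'$ to be a \emph{solution} of (\ref{eq:sol}) with $\beta'\neq \pm 1$ (possible since at most two solutions can equal $\pm 1$). That choice does double duty: it annihilates the $\alpha'^{p^s+1}$ term, leaving an equation of the shape $x^{p^s}+\varphi x=\nu$ with $x$ ranging over $\gf(q)$, and it guarantees that the point excluded by the parametrization is itself a solution, which is where the $+1$ honestly comes from --- in your version, if $\beta'$ is not a solution, the excluded point contributes nothing and your count is off by one while the equation stays of high degree. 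You also need the paper's short subcase: if the coefficient of $\alpha'^{p^s}$ in the reduced equation also vanishes, the equation is at most linear in $\alpha'$, giving at most two solutions in total and contradicting the standing assumption.

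Two smaller points. First, your opening gambit --- conjugating by Frobenius using $y^q=y^{-1}$ and ``eliminating'' to get a degree-$2$ bound in the non-degenerate case --- is both unsubstantiated (no concrete elimination is exhibited, and it is not clear it yields degree $2$) and unnecessary, since $0,1,2$ are admissible outcomes and the lemma only requires pinning down what happens when the count exceeds $2$. Second, your kernel claim needs care about the field of definition: $\varphi$ and $\nu$ a priori lie in $\gf(q^2)$ while $x$ ranges over $\gf(q)$, and the kernel of $x\mapsto x^{p^s}+\varphi x$ \emph{inside} $\gf(q)$ has size $1$ or $p^{\gcd(s,m)}$ only after one observes (as the paper does) that a second root forces $\varphi=-x_0^{p^s-1}\in\gf(q)$; over the algebraic closure the kernel has size $p^s$, so the restriction to $\gf(q)$ is where $\gcd(s,m)$ enters, not a generic property of the linearized polynomial itself.
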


\begin{proof}
If the equation in (\ref{eq:sol}) does not have more than two solutions, there is nothing to prove. We now assume that the equation in (\ref{eq:sol}) has more than two solutions. Let $\beta'$ be a solution of the equation in (\ref{eq:sol}) and $\beta'\neq \pm 1$. From Lemma \ref{lem-add}, we know that any element in $U_{q+1}\setminus \{\beta'\}$ can be expressed as $\frac{\alpha'\beta'+1}{\alpha'+\beta'}$, where $\alpha' \in \gf(q)$. Then the equation in (\ref{eq:sol}) becomes
\begin{equation}\label{add-new}
a\left(\frac{\alpha'\beta'+1}{\alpha'+\beta'}\right)+b\left(\frac{\alpha'\beta'+1}{\alpha'+\beta'}\right)^{p^s}
+c\left(\frac{\alpha'\beta'+1}{\alpha'+\beta'}\right)^{p^s+1}+d=0,
\end{equation}
where $\beta'$ is a fixed constant and $\alpha'$ runs over $\gf(q)$.
Multiplying by  $(\alpha'+\beta')^{p^s+1}$ both sides of the equation in (\ref{add-new}), we have
\begin{equation}\label{conj1}
\begin{split}
0=&(a\beta'+b \beta'^{p^s}+c\beta'^{p^s+1}+d)\alpha'^{p^s+1}+(a+b\beta'^{p^s+1}+c\beta'^{p^s}+d\beta')\alpha'^{p^s}\\
+&(a\beta'^{p^s+1}+b+c\beta'+d\beta'^{p^s})\alpha'+(a\beta'^{p^s}+b\beta'+c+b\beta'^{p^s+1})\\
=&(a+b\beta'^{p^s+1}+c\beta'^{p^s}+d\beta')\alpha'^{p^s}
+(a\beta'^{p^s+1}+b+c\beta'+d\beta'^{p^s})\alpha'+(a\beta'^{p^s}+b\beta'+c+d\beta'^{p^s+1})
\end{split}
\end{equation}
 since $\beta'$ is a solution of the equation in (\ref{eq:sol}). It is clear that a solution $x\in U_{q+1}\setminus \{\beta'\}$ to the equation in (\ref{eq:sol}) corresponds to a solution $\alpha' \in \gf(q)$ of the equation in (\ref{conj1}).
If the coefficient of $\alpha'^{p^s}$ is zero, there is at most one solution to the equation in (\ref{conj1}), which is contradictory to our assumption that the equation in (\ref{eq:sol}) has more than 2 solutions.

Hence, the coefficient of $\alpha'^{p^s}$ is nonzero, which means that the equation in (\ref{conj1}) can be transformed to
\begin{equation}\label{eq:aff}
x^{p^s}+\varphi x=\nu,
\end{equation}
where $\varphi,\nu\in \gf(q^2)$ and $x \in \gf(q)$. Obviously, the corresponding linear equation is
\begin{equation}\label{eq:lin}
x^{p^s}+\varphi x=0.
\end{equation}
Then we can assert that $\varphi \in \gf(q)$ if the equation in (\ref{eq:lin}) has more than one solution in the finite field $\gf(q)$. Hence, the equation in (\ref{eq:lin})
has either exactly one solution or exactly $p^{\gcd(m,s)}$ solutions in $\gf(q)$. From the theory of linearized polynomials, we know that the affine equation in (\ref{eq:aff}) either has no solution or has the same number of solutions as the linear equation in (\ref{eq:lin}). The desired conclusion then follows.
\end{proof}

With the preparations above, we now determine the possible weights of the code $\mathcal{C}(\frac{p^s-1}{2},\frac{p^s+1}{2})$.

\begin{lemma}\label{lem:1019}
Let notation be the same as before. Then $\mathcal{C}(\frac{p^s-1}{2},\frac{p^s+1}{2})$ is a $[q+1, 4]$ cyclic code whose possible nonzero weights are in the set $\left\{q+1,q,q-1,q-p^{\gcd(m,s)}\right\}$.
\end{lemma}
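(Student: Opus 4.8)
The plan is to use the trace representation of the cyclic code $\mathcal{C}(\frac{p^s-1}{2},\frac{p^s+1}{2})$ to express each codeword as a function on $U_{q+1}$, then relate its Hamming weight to the number of zeros of a polynomial equation of exactly the form treated in Lemma~\ref{conj-21march338}. First I would apply Lemma~\ref{lem-01} (the trace representation) with $\lambda=1$, $r=1$, $n=q+1$, and the two nonzero exponents $i_1=\frac{p^s-1}{2}$, $i_2=\frac{p^s+1}{2}$. Since the $q$-cyclotomic cosets $C_{(p^s-1)/2}^{(q,n)}$ and $C_{(p^s+1)/2}^{(q,n)}$ each have size $2$ and are disjoint, the check polynomial $g_{\frac{p^s-1}{2}}(x)g_{\frac{p^s+1}{2}}(x)$ has degree $4$, so $\mathcal{C}(\frac{p^s-1}{2},\frac{p^s+1}{2})$ is a $[q+1,4]$ code. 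The trace representation then writes a typical codeword as
\begin{equation*}
\mathbf{c}(a_1,a_2)=\left(\Tr_{q^2/q}\!\left(a_1\beta^{-t\frac{p^s-1}{2}}+a_2\beta^{-t\frac{p^s+1}{2}}\right)\right)_{t=0}^{q},
\end{equation*}
where $a_1,a_2\in\gf(q^2)$. As $t$ ranges over $\{0,1,\dots,q\}$, the quantity $y=\beta^{-t}$ ranges exactly once over $U_{q+1}$, so up to reindexing the coordinates, the codeword entries are $\Tr_{q^2/q}(a_1 y^{(p^s-1)/2}+a_2 y^{(p^s+1)/2})$ as $y$ runs over $U_{q+1}$.

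Next I would convert the weight computation into a counting of zeros. The weight of $\mathbf{c}(a_1,a_2)$ equals $(q+1)$ minus the number of $y\in U_{q+1}$ for which $\Tr_{q^2/q}(a_1 y^{(p^s-1)/2}+a_2 y^{(p^s+1)/2})=0$. The key algebraic step is to recognize that on $U_{q+1}$ one has $y^{q+1}=1$, hence $y^q=y^{-1}$, so the conjugate $\Tr_{q^2/q}(w)=w+w^q$ of a monomial $a\,y^{k}$ becomes $a\,y^k+a^q y^{-k}$. Writing out the trace for the two exponents and clearing denominators by multiplying through by a suitable power $y^{(p^s-1)/2}$ (to remove negative powers), the equation $\Tr_{q^2/q}(a_1 y^{(p^s-1)/2}+a_2 y^{(p^s+1)/2})=0$ transforms into a polynomial equation in $y$ of precisely the shape
\begin{equation*}
Ay+By^{p^s}+Cy^{p^s+1}+D=0,\qquad y\in U_{q+1},
\end{equation*}
for coefficients $A,B,C,D\in\gf(q^2)$ that are explicit $\gf(q)$-linear combinations of $a_1,a_2$ and their conjugates. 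I would verify that $(A,B,C,D)=(0,0,0,0)$ forces $a_1=a_2=0$, so for any nonzero codeword the hypothesis of Lemma~\ref{conj-21march338} is met.

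Applying Lemma~\ref{conj-21march338}, the number of solutions $y\in U_{q+1}$ is $0$, $1$, $2$, or $p^{\gcd(s,m)}+1$; correspondingly the weight is $q+1$, $q$, $q-1$, or $q+1-(p^{\gcd(s,m)}+1)=q-p^{\gcd(m,s)}$. This yields exactly the claimed weight set $\{q+1,q,q-1,q-p^{\gcd(m,s)}\}$. \textbf{The main obstacle} I expect is the bookkeeping in the reduction step: one must carefully use $y^q=y^{-1}$ to eliminate the trace, choose the correct monomial multiplier so that the exponents collapse to $\{0,1,p^s,p^s+1\}$ rather than some shifted set, and confirm that the resulting $(A,B,C,D)$ genuinely covers the generic case (in particular that the coefficient structure does not accidentally exclude the value $p^{\gcd(m,s)}+1$ or fail to achieve a nonzero vector for every nonzero $(a_1,a_2)$). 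A secondary subtlety is the handling of the degenerate fixed solutions $\beta'=\pm 1$ inside Lemma~\ref{conj-21march338}, which I would address by noting that $\pm 1\in U_{q+1}$ contributes only finitely many exceptional cases that do not change the list of possible solution counts. Once the monomial exponents are matched to the pattern $ay+by^{p^s}+cy^{p^s+1}+d$, the weight list follows immediately from the cited lemma.
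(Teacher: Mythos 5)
Your proposal is correct and follows essentially the same route as the paper's own proof: the trace representation via Lemma~\ref{lem-01}, the substitution $y=\beta^{-t}\in U_{q+1}$, clearing the trace using $y^q=y^{-1}$ to reach the equation $Ay+By^{p^s}+Cy^{p^s+1}+D=0$ with $(A,B,C,D)=(a_1^q,a_1,a_2,a_2^q)$ (hence nonzero exactly when $(a_1,a_2)\neq(0,0)$), and then Lemma~\ref{conj-21march338} to get the solution counts $0,1,2,p^{\gcd(m,s)}+1$ and the weight set. The only slip is the stated multiplier: it should be $y^{(p^s+1)/2}$, not $y^{(p^s-1)/2}$, to collapse the exponents to $\{0,1,p^s,p^s+1\}$ --- precisely the bookkeeping point you yourself flagged, and the one the paper handles by factoring out $\mu^{-(p^s+1)/2}$.
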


\begin{proof}
Recall that $\beta$ is a $(q+1)$-th root of unity in $\gf(q^2)$.  We then deduce that all the solutions of $x^{q+1}-1=0$ can be expressed as $\beta^i$, where $0\leq i\leq q$.
 It is clear that
$$
(x-\beta^{i})(x-\beta^{qi})=(x-\beta^{i})(x-\beta^{-i})=x^2-(\beta^{i}+\beta^{-i})x+1 =x^2-\tr_{q^2/q}(\beta^{i})x+1  \in \gf(q)[x].
$$
Then the degree of $g_i(x)$ is $1$ or $2$.
It is easy to check that $(q+1)\, \nmid \frac{(p^s-1)(q-1)}{2}$ and $(q+1)\, \nmid \frac{(p^s+1)(q-1)}{2}$, as $1 \leq s <m$.
Then $\beta^{\frac{p^s-1}{2}}$ and $\beta^{\frac{p^s+1}{2}}$ are not elements in $\gf(q)$.
Hence, $$\deg(g_{(p^s-1)/2}(x))=\deg(g_{(p^s+1)/2}(x))=2.$$ This means that the degree of the check polynomial of $\mathcal{C}(\frac{p^s-1}{2},\frac{p^s+1}{2})$ is $4$. Consequently, the dimension of $\mathcal{C}(\frac{p^s-1}{2},\frac{p^s+1}{2})$ is $4$.

It follows from Lemma \ref{lem-01} that the trace expression of $\mathcal{C}(\frac{p^s-1}{2},\frac{p^s+1}{2})$ is given by
\begin{equation*}
\mathcal{C}\left(\frac{p^s-1}{2},\frac{p^s+1}{2}\right)=\left\{\mathbf{c}(a,b)=\left({\rm Tr}_{q^2/q}\left(a\beta^{-\frac{(p^s-1)i}{2}}+b\beta^{-\frac{(p^s+1)i}{2}}\right)_{i=0}^q\right)\,:\,a,b \in \gf(q^2)\right\}.
\end{equation*}
Let $\mu=\beta^{-i}$, then $\mu \in U_{q+1}$. It is easily seen that
\begin{equation*}
\begin{split}
{\rm Tr}_{q^2/q}\left(a\mu^{\frac{(p^s-1)}{2}}+b\mu^{\frac{(p^s+1)}{2}}\right)
&=a\mu^{\frac{p^s-1}{2}}+b\mu^{\frac{p^s+1}{2}}+a^q\mu^{-\frac{p^s-1}{2}}+b^q\mu^{-\frac{p^s+1}{2}}\\
&=\mu^{-\frac{p^s+1}{2}}\left(a\mu^{p^s}+b\mu^{p^s+1}+c^q\mu+b^q\right).
\end{split}
\end{equation*}
Hence, if $(a,b)\neq (0,0)$, from Lemma \ref{conj-21march338} we know that the number of $\mu \in U_{q+1}$ such that $${\rm Tr}_{q^2/q}\left(a\mu^{\frac{(p^s-1)}{2}}+b\mu^{\frac{(p^s+1)}{2}}\right)=0$$ is $0$, $1$, $2$ or $p^{\gcd(m,s)}+1$. As a result, for $(a,b)\neq (0,0)$ we have
$$\wt(\mathbf{c}(a,b))\in \left\{q+1,q,q-1,q-p^{\gcd(m,s)}\right\}.$$
This completes the proof.
\end{proof}

In order to obtain the parameters of the dual of the code $\mathcal{C}(\frac{p^s-1}{2},\frac{p^s+1}{2})$, we need the following lemmas.
The following is a well known result.
\begin{lemma}\label{lemma5}
Let $a$ be odd and $u,v$ be positive integers, then
\[\gcd(a^u+1,a^v-1)=\left\{ \begin{array}{lll}
           {2}, & \, \,\, \text{if $v_2(v)\leq v_2(u)$}, \\
           {a^{\gcd(u,v)}+1}, & \, \,\, \text{if $v_2(v)> v_2(u)$}. \end{array}  \right.\]
Moreover,
\[\gcd(a^u+1,a^v+1)=\left\{ \begin{array}{lll}
           {2}, & \, \,\, \text{if $v_2(v)\neq v_2(u)$}, \\
           {a^{\gcd(u,v)}+1}, & \, \,\, \text{if $v_2(v)= v_2(u)$}. \end{array}  \right.\]
\end{lemma}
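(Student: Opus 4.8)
The plan is to reduce everything to the classical identity $\gcd(a^i-1,a^j-1)=a^{\gcd(i,j)}-1$ by exploiting the factorization $a^u+1 \mid a^{2u}-1$, and then to sharpen the resulting divisibility by a short congruence argument. Throughout I write $d=\gcd(u,v)$ and recall that, since $a$ is odd, each of $a^u+1$, $a^v-1$, $a^v+1$ is even, so every gcd in sight is divisible by $2$. First I would record two elementary divisibility facts that drive the whole proof: if $k$ is odd then $a^d+1 \mid (a^d)^k+1=a^{dk}+1$, while if $k$ is even then $a^d+1 \mid (a^d)^k-1=a^{dk}-1$; and $d\mid u$ gives $a^d-1 \mid a^u-1$. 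The parity of $u/d$ and $v/d$ is governed by $2$-adic valuations, since $v_2(u/d)=v_2(u)-\min(v_2(u),v_2(v))$ and similarly for $v/d$; this is exactly why the statement splits according to the comparison of $v_2(u)$ and $v_2(v)$.

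For the first identity I set $g=\gcd(a^u+1,a^v-1)$ and combine $g\mid a^u+1\mid a^{2u}-1$ with $g\mid a^v-1$ to get $g\mid a^{\gcd(2u,v)}-1$. The key computation is $\gcd(2u,v)$ via valuations: when $v_2(v)\le v_2(u)$ one finds $\gcd(2u,v)=d$, and when $v_2(v)>v_2(u)$ one finds $\gcd(2u,v)=2d$. In the first case $g\mid a^d-1\mid a^u-1$, so $g\mid (a^u+1)-(a^u-1)=2$, forcing $g=2$. In the second case $g\mid a^{2d}-1$; here $u/d$ is odd, so writing $u=(2j+1)d$ and reducing modulo $g$, the relation $a^{2d}\equiv 1$ gives $a^u\equiv a^d$, while $a^u\equiv -1$; hence $a^d\equiv -1\pmod g$, i.e. $g\mid a^d+1$. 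The matching lower bound $a^d+1\mid g$ follows from the two elementary facts: $u/d$ odd gives $a^d+1\mid a^u+1$, and $v/d$ even (because $v_2(v)>v_2(u)=v_2(d)$) gives $a^d+1\mid a^v-1$. Thus $g=a^d+1$.

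The second identity is handled the same way with $g=\gcd(a^u+1,a^v+1)$: now $g\mid a^{2u}-1$ and $g\mid a^{2v}-1$ yield $g\mid a^{\gcd(2u,2v)}-1=a^{2d}-1$. If $v_2(u)=v_2(v)$ then both $u/d$ and $v/d$ are odd, so $a^d+1\mid g$, and the same congruence argument (using $u/d$ odd) gives $g\mid a^d+1$, whence $g=a^d+1$. If $v_2(u)\ne v_2(v)$, say $v_2(u)<v_2(v)$, then $v/d$ is even, so writing $v=2td$ and reducing modulo $g$ gives $1\equiv a^v\equiv -1$, i.e. $g\mid 2$ and $g=2$; the case $v_2(u)>v_2(v)$ is symmetric. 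The part I expect to require the most care is the $2$-adic bookkeeping that pins down $\gcd(2u,v)$ (respectively $\gcd(2u,2v)$) as exactly $d$ or $2d$ in each regime, together with the congruence step that upgrades the crude bound $g\mid a^{2d}-1$ to the sharp value $a^d+1$; everything else is routine.
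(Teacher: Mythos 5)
Your proof is correct, but there is nothing in the paper to compare it against: the paper introduces this lemma with the sentence ``The following is a well known result'' and gives no proof at all, using it as a black box in Lemmas \ref{lem:oct1013-27}, \ref{lem:aaaa1027}, \ref{conj-21oct1010} and \ref{lem:aaaa1029}. Your argument is the standard one and all the delicate points check out: the reduction $g \mid a^u+1 \mid a^{2u}-1$ combined with $\gcd(a^i-1,a^j-1)=a^{\gcd(i,j)}-1$; the $2$-adic computation giving $\gcd(2u,v)=d$ when $v_2(v)\le v_2(u)$ (since then $v_2(v)<v_2(u)+1$) and $\gcd(2u,v)=2d$ when $v_2(v)>v_2(u)$; the upgrade from $g\mid a^{2d}-1$ to $g\mid a^d+1$ via $a^u\equiv a^d$ and $a^u\equiv -1 \pmod{g}$ when $u/d$ is odd; and the matching lower bounds $a^d+1\mid a^u+1$ (for $u/d$ odd) and $a^d+1\mid a^{2d}-1\mid a^v-1$ (for $v/d$ even). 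The parity bookkeeping is also right, since $v_2(d)=\min(v_2(u),v_2(v))$, and the observation that $a$ odd forces $2\mid g$ correctly settles the degenerate cases where $g\mid 2$. So your proposal is a valid, self-contained proof that fills a gap the authors deliberately left to the literature.
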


\begin{lemma}\label{lem:oct1013-27}
Let $x,y$ be two distinct elements in $U_{q+1}\setminus\{1\}$. Let
\begin{equation}\label{eq:1014}
T=\left|\begin{array}{cccc}
1, &1,&1\\
1,&x,&y\\
1,&x^{p^s},&y^{p^s}
\end{array}\right|.
\end{equation}
When $v_2(s)\leq v_2(m)$, the determinant $T\neq 0$. When $v_2(s)> v_2(m)$ and $x,y \notin U_{p^{\gcd(s,m)}+1}$,
$T\neq 0$.
\end{lemma}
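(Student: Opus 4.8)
The plan is to collapse the $3\times 3$ determinant to a single algebraic condition and then exploit the ``unit-circle'' structure of $U_{q+1}$. First I would subtract the top row from the other two, reducing $T$ to a $2\times 2$ determinant, namely
\[
T=(x-1)(y^{p^s}-1)-(y-1)(x^{p^s}-1).
\]
Since $\gf(q^2)$ has characteristic $p$, the Frobenius identity $t^{p^s}-1=(t-1)^{p^s}$ lets me rewrite this as
\[
T=(x-1)(y-1)^{p^s}-(y-1)(x-1)^{p^s}=(x-1)(y-1)\big[(y-1)^{p^s-1}-(x-1)^{p^s-1}\big].
\]
Because $x,y\neq 1$, the first two factors are nonzero, so $T=0$ is equivalent to $\big(\tfrac{x-1}{y-1}\big)^{p^s-1}=1$. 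Setting $\lambda=\tfrac{x-1}{y-1}$, the determinant vanishes exactly when $\lambda$ is a $(p^s-1)$-th root of unity in $\gf(q^2)^*$.

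Next I would translate this into subfield membership. The $(p^s-1)$-th roots of unity in the cyclic group $\gf(q^2)^*$ form the subgroup of order $\gcd(p^s-1,q^2-1)=p^{\gcd(s,2m)}-1$, which is precisely $\gf(p^d)^*$ with $d=\gcd(s,2m)$. Writing $g=\gcd(s,m)$ and $s=gs'$, $m=gm'$ with $\gcd(s',m')=1$, a short computation gives $\gcd(s,2m)=g\gcd(s',2)$, and coprimality of $s',m'$ shows $s'$ is odd precisely when $v_2(s)\le v_2(m)$. Hence $d=g$ when $v_2(s)\le v_2(m)$ and $d=2g$ when $v_2(s)>v_2(m)$, so in both cases $T=0$ is equivalent to $\lambda\in\gf(p^d)^*$, and it remains to rule this out under the stated hypotheses.

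The decisive input is $x,y\in U_{q+1}$, i.e. $x^q=x^{-1}$ and $y^q=y^{-1}$. Assuming $T=0$, I would apply the automorphism $z\mapsto z^q$ to $x-1=\lambda(y-1)$; using $1-x=\lambda(1-y)$ and cancelling the nonzero factor $1-y$ yields $x=\tfrac{\lambda}{\lambda^q}\,y$, and substituting into $x=\lambda y+(1-\lambda)$ gives $\lambda\,\tfrac{1-\lambda^q}{\lambda^q}\,y=1-\lambda$. In the case $d=g\mid m$, this forces $\lambda\in\gf(q)$, so $\lambda^q=\lambda$ and the relation collapses to $(1-\lambda)y=1-\lambda$, forcing $y=1$ (if $\lambda\neq1$) or $x=y$ (if $\lambda=1$), a contradiction; hence $T\neq0$. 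In the case $d=2g$ one checks $\lambda^q=\lambda^{p^g}=:\bar\lambda$, the conjugate of $\lambda$ over $\gf(p^g)$ (here one uses $m\equiv g\pmod{2g}$, since $m'$ is odd). After excluding $\lambda=1$ and $\bar\lambda=1$ as before, I can solve $y=\tfrac{(1-\lambda)\bar\lambda}{\lambda(1-\bar\lambda)}$, and computing the norm $N_{\gf(p^{2g})/\gf(p^g)}(y)=y\,y^{p^g}$ collapses it to $1$; the same holds for $x=\tfrac{\lambda}{\bar\lambda}y$. Thus $x,y\in U_{p^g+1}$, contradicting the hypothesis $x,y\notin U_{p^{\gcd(s,m)}+1}$, so again $T\neq0$.

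The main obstacle is the second case: organizing the $2$-adic bookkeeping to pin down $d=\gcd(s,2m)$ and, crucially, identifying $\lambda^q$ with the $\gf(p^g)$-conjugate $\bar\lambda$ rather than with $\lambda$ itself, after which the norm computation forcing $x,y\in U_{p^g+1}$ is the heart of the argument. The first case is comparatively routine once the determinant is factored. I would also note, via Lemma \ref{lemma5} (since $v_2(g)=v_2(m)$ in the second case), that $p^g+1\mid q+1$, so $U_{p^g+1}\subseteq U_{q+1}$ and the excluded set is genuinely a nontrivial subset of $U_{q+1}$.
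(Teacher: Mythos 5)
Your proof is correct, and it takes a genuinely different route through the middle of the argument than the paper does, even though both start from the same expansion $T=xy^{p^s}-x^{p^s}y-(y^{p^s}-y)+(x^{p^s}-x)$ and both conclude, in the case $v_2(s)>v_2(m)$, by forcing $x,y\in U_{p^{\gcd(s,m)}+1}$. The paper factors $T$ completely into linear forms: with $\alpha$ a primitive element of $\gf(p^s)$ it verifies $F(X,Y)=(Y-X)(1-X)(1-Y)\prod_i\left(1+\alpha^iX-(\alpha^i+1)Y\right)$, and then for each linear factor raises $1+\alpha^ix-(\alpha^i+1)y=0$ to the $q$-th power to deduce $x=\alpha^{(q-1)i}$, hence $x,y\in\gf(p^s)\cap U_{q+1}$; Lemma \ref{lemma5} then gives $\gcd(p^s-1,q+1)\in\{2,\,p^{\gcd(s,m)}+1\}$ according to the two cases, so either $x=y=-1$ (impossible) or $x,y\in U_{p^{\gcd(s,m)}+1}$. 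You instead stop the factorization at $T=(x-1)(y-1)\left[(y-1)^{p^s-1}-(x-1)^{p^s-1}\right]$ --- note that $(y-1)^{p^s-1}-(x-1)^{p^s-1}=\prod_{c\in\gf(p^s)^*}\left((y-1)-c(x-1)\right)$ is exactly the paper's linear factorization in disguise --- and convert $T=0$ into the single condition $\lambda=(x-1)/(y-1)\in\gf(p^d)^*$ with $d=\gcd(s,2m)$. Your $2$-adic bookkeeping $d=g\gcd(s',2)$ is right, the identification $\lambda^q=\lambda^{p^g}$ is valid precisely because $m/g$ is odd when $v_2(s)>v_2(m)$, and the norm computation $y\,y^{p^g}=1$ together with $x=(\lambda/\lambda^{p^g})y$ correctly forces $x,y\in U_{p^g+1}$. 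What each approach buys: the paper's proof yields the stronger structural fact that $T=0$ forces $x$ and $y$ themselves into $\gf(p^s)$ and identifies which linear factor vanishes, whereas yours avoids verifying the product identity and the per-factor analysis by isolating the obstruction in the single quantity $\lambda$; in your case $d=g$ the argument even shortens further, since $\lambda^q=\lambda$ gives $x=(\lambda/\lambda^q)y=y$ immediately. One degenerate subcase deserves a sentence in a final write-up: when $d=2g$ but $\lambda\in\gf(p^g)^*$ with $\lambda\neq 1$ (so $\bar\lambda=\lambda$), your formula returns $y=1$; the division by $1-\bar\lambda$ is still legitimate there since $\bar\lambda=1$ iff $\lambda=1$, and the conclusion $y\in U_{p^g+1}$ (or simply $y=1$) still yields the contradiction, but this should be said explicitly.
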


\begin{proof}
Let $\alpha$ be a primitive element of $\gf(p^s)$. Let $h=\lcm(s, 2m)$. Then $\gf(p^s)$ and $\gf(q^2)$ are subfields of $\gf(p^h)$.
Define the bivariate polynomial
$$
F(X, Y)= XY^{p^s}-X^{p^s}Y-(Y^{p^s}-Y)+(X^{p^s}-X) \in \gf(p^h)[X,Y].
$$
It can be verified that
\begin{eqnarray}
F(X, Y)=(Y-X)(1-X)(1-Y)\prod_{i=1}^{(p^s-3)/2}(1+\alpha^iX-(\alpha^i+1)Y)\prod_{i=(p^s+1)/2}^{p^s-1}(1+\alpha^iX-(\alpha^i+1)Y).
\end{eqnarray}
For any $x, y \in \gf(p^h)$, we have then
\begin{eqnarray*}
T &=& \left|\begin{array}{cccc}
1, &1,&1\\
1,&x,&y\\
1,&x^{p^s},&y^{p^s}
\end{array}\right| \\
&=& xy^{p^s}-x^{p^s}y-(y^{p^s}-y)+(x^{p^s}-x) \\
&=& F(x, y) \\
&=& (y-x)(1-x)(1-y)\prod_{i=1}^{(p^s-3)/2}(1+\alpha^ix-(\alpha^i+1)y)\prod_{i=(p^s+1)/2}^{p^s-1}(1+\alpha^ix-(\alpha^i+1)y.
\end{eqnarray*}
It is now clear that $T=0$ if and only if $x=y$, or $x=1$ or $y=1$ or $x$ and $y$ satisfy $1+\alpha^i x=(1+\alpha^i)y$
for some $i$ with $1\leq i\leq \frac{p^s-3}{2}$ or $\frac{p^s+1}{2}\leq i\leq p^s-1$.

Suppose that for some $x$ and $y$ in $U_{q+1}\setminus \{1\}$ we have
\begin{equation}\label{eq:1026}
1+\alpha^ix-(\alpha^i+1)y=0
\end{equation}
for some $i$ with $1\leq i\leq \frac{p^s-3}{2}$ or $\frac{p^s+1}{2}\leq i\leq p^s-1$.
It is clear that
\begin{equation*}
\begin{split}
(1+\alpha^ix-(\alpha^i+1)y)^q&=1+\frac{\alpha^{qi}}{x}-\frac{(\alpha^i+1)^q}{y}=1+\frac{\alpha^{qi}}{x}-
 \frac{(\alpha^{i}+1)^{q+1}}{y(\alpha^{i}+1)}\\
& =1+\frac{\alpha^{qi}}{x}-\frac{(\alpha^i+1)^{q+1}}{1+\alpha^ix}=0.
 \end{split}
\end{equation*}
Hence,
 $$\frac{x+\alpha^{qi}}{x}=\frac{(\alpha^i+1)^{q+1}}{1+\alpha^ix},$$
  which is the same as
$x+\alpha^ix^2+\alpha^{qi}+\alpha^{(q+1)i}x=x(\alpha^{(q+1)i}+1+\alpha^{qi}+\alpha^i),$
which implies that
\begin{equation}\label{eq:1025}
x=\alpha^{(q-1)i}.
\end{equation}
From the equation in (\ref{eq:1025}) we know that $x \in \gf(p^s)$ since $\alpha$ is a primitive element of $\gf(p^s)$.
Then from the equation in (\ref{eq:1026}) we know that $y \in \gf(p^s)$.
Hence, we know that $x,y\in \gf(p^s)$ if $T=0$.
There are two cases.

\noindent {\bf Case 1:} $v_2(s)\leq v_2(m)$. From Lemma \ref{lemma5} we have $x^2=y^2=1$ since $x,y \in U_{q+1}$ and $q=p^m$, which are contradictory to $x\neq y\neq 1$. Hence, in this case, $T\neq 0$.

\noindent {\bf Case 2:} $v_2(s)> v_2(m)$. From Lemma \ref{lemma5} we obtain $x,y \in U_{p^{\gcd(s,m)}+1}$ since $x,y \in U_{q+1}$ and $q=p^m$, which implies that $T\neq 0$ if $x,y \notin U_{p^{\gcd(s,m)}+1}$.
This competes the proof.
\end{proof}

We now determine the parameters of the dual of the code $\mathcal{C}(\frac{p^s-1}{2},\frac{p^s+1}{2})$.

\begin{lemma}\label{lem:oct10-27}
The code $\mathcal{C}(\frac{p^s-1}{2},\frac{p^s+1}{2})^\perp$ has parameters $[q+1,q-3, 4]$.
\end{lemma}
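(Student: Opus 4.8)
The plan is to determine the parameters $[q+1, q-3, 4]$ of $\mathcal{C}(\frac{p^s-1}{2},\frac{p^s+1}{2})^\perp$ by handling the length, dimension, and minimum distance separately. The length $q+1$ and the dimension $q-3$ are immediate: by Lemma~\ref{lem:1019} the code $\mathcal{C}(\frac{p^s-1}{2},\frac{p^s+1}{2})$ is a $[q+1,4]$ code, so by the standard duality $\dim \mathcal{C}^\perp = (q+1)-4 = q-3$. The entire burden of the proof therefore falls on establishing that the minimum distance of the dual is exactly $4$. I would split this into an upper bound ($d^\perp \leq 4$) and a lower bound ($d^\perp \geq 4$).

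For the lower bound $d^\perp \geq 4$, the natural approach is to show that every $3$ columns of a parity-check matrix of $\mathcal{C}^\perp$ are linearly independent; equivalently, that any $3$ columns of a generator matrix of $\mathcal{C}$ (which serves as a parity-check matrix for $\mathcal{C}^\perp$) are linearly independent. From the trace representation in Lemma~\ref{lem:1019}, a generator matrix of $\mathcal{C}(\frac{p^s-1}{2},\frac{p^s+1}{2})$ has columns indexed by $\mu = \beta^{-i} \in U_{q+1}$, built from the monomials $\mu^{\frac{p^s-1}{2}}, \mu^{\frac{p^s+1}{2}}$ and their conjugates. After factoring out the nonvanishing power $\mu^{-\frac{p^s+1}{2}}$ from each column (as was already done in the proof of Lemma~\ref{lem:1019}), proving $3$-wise independence reduces to showing that a suitable $3\times 3$ determinant in three distinct elements of $U_{q+1}$ is nonzero. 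This is precisely what Lemma~\ref{lem:oct1013-27} furnishes: the determinant
\begin{equation*}
T=\left|\begin{array}{ccc}
1 & 1 & 1\\
1 & x & y\\
1 & x^{p^s} & y^{p^s}
\end{array}\right|
\end{equation*}
is nonzero for distinct $x,y \in U_{q+1}\setminus\{1\}$ (under the stated conditions on $v_2(s)$ versus $v_2(m)$). The main obstacle I anticipate is organizing the column structure so that the $3\times 3$ minors that arise from any three coordinates $\mu_1,\mu_2,\mu_3$ reduce cleanly to the Vandermonde-like form of $T$; one must normalize each of the three chosen columns, exploit the conjugate pairing coming from $\mathrm{Tr}_{q^2/q}$, and confirm that the resulting reduction matches the hypotheses of Lemma~\ref{lem:oct1013-27} (in particular that we may assume the relevant arguments lie in $U_{q+1}\setminus\{1\}$ and avoid $U_{p^{\gcd(s,m)}+1}$ when $v_2(s)>v_2(m)$). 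Care will be needed because the case $v_2(s)>v_2(m)$ admits weight $q-p^{\gcd(m,s)}$ codewords in $\mathcal{C}$, so the $3$-wise independence argument must be shown to go through uniformly, or a short separate argument provided.

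For the upper bound $d^\perp \leq 4$, I would argue that a codeword of weight $4$ exists, equivalently that there exist four distinct coordinates among which the corresponding four columns of the generator matrix of $\mathcal{C}$ are linearly dependent. Since $\mathcal{C}$ has dimension $4$, its generator matrix has $4$ rows and $q+1 \geq 10$ columns, so any $5$ columns are dependent; to get dependence among exactly $4$ columns one exhibits a $4\times 4$ singular minor, or alternatively invokes the analogue of Lemma~\ref{lem:oct1013-27} showing the corresponding $4\times 4$ determinant can vanish for a suitable choice of four points in $U_{q+1}$. A cleaner route would be to combine the first few Pless power moments with the weight set $\{q+1,q,q-1,q-p^{\gcd(m,s)}\}$ from Lemma~\ref{lem:1019}: solving the resulting linear system for the $A_i$ forces a specific weight distribution, from which the dual weight distribution—and hence $d^\perp = 4$—can be read off via the Pless identities. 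I expect the determinant/linear-dependence argument for the lower bound to be the genuinely hard part, as it requires the detailed structural input of Lemma~\ref{lem:oct1013-27}, whereas the upper bound should follow either from a direct construction or from the power-moment bookkeeping.
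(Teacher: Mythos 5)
Your reduction of the lower bound $d^\perp \geq 4$ to Lemma~\ref{lem:oct1013-27} is exactly the paper's route, but the point you defer (``care will be needed\ldots or a short separate argument provided'') is precisely where the real content of the paper's proof lies, and your plan as stated would fail there. Write $\ell=\gcd(s,m)$. When $v_2(s)>v_2(m)$ and $x,y\in U_{p^{\ell}+1}$, the determinant $T$ vanishes \emph{identically}: $v_2(s)>v_2(m)$ forces $s/\ell$ to be even, so $p^s\equiv 1 \pmod{p^{\ell}+1}$ and hence $x^{p^s}=x$, $y^{p^s}=y$, making two rows of $T$ coincide. Thus ``every $3$ columns are linearly independent over $\gf(q^2)$'' is simply false in that regime; what must be proved instead is that no dependency with coefficients in $\gf(q)^*$ exists, i.e.\ a field-of-definition argument is unavoidable. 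The paper supplies it: in this residual case $(p^{\ell}+1)\mid (p^s-1)/2$, so $x^{(p^s-1)/2}=y^{(p^s-1)/2}=1$ and the putative weight-$3$ equations collapse to $1+a_1+a_2=0$ and $1+a_1x+a_2y=0$, giving $a_1=(y-1)/(x-y)$ and $a_2=(x-1)/(y-x)$; imposing $a_1^q=a_1$, $a_2^q=a_2$ together with $x^q=x^{-1}$, $y^q=y^{-1}$ forces $x=y=1$, a contradiction. Without this Galois-descent step your proof is incomplete --- it is a missing idea, not a routine verification. (The paper also rules out $d^\perp=2$ by a short separate computation ending in $a_1=-a_1$; your $3$-column framing would have to absorb weight $2$ into the same problematic case analysis.)

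On the upper bound, the paper's argument is different from, and simpler than, both of your suggestions: by the Singleton bound $d^\perp\leq 5$, and $d^\perp=5$ would make $\mathcal{C}(\frac{p^s-1}{2},\frac{p^s+1}{2})^\perp$, hence $\mathcal{C}(\frac{p^s-1}{2},\frac{p^s+1}{2})$ itself, MDS with minimum distance $q-2$, which is excluded by the weight set $\{q+1,q,q-1,q-p^{\ell}\}$ of Lemma~\ref{lem:1019}, since $p^{\ell}\geq 3$ means $q-2$ is not a possible weight. Your Pless-moment route has a circularity to watch: the first four power moments contain $A_1^\perp$, $A_2^\perp$, $A_3^\perp$ as unknowns, so one cannot solve for the weight distribution of $\mathcal{C}(\frac{p^s-1}{2},\frac{p^s+1}{2})$ and ``read off'' $d^\perp=4$ until $d^\perp\geq 4$ is already established. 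Used \emph{after} the lower bound (as the paper does in Theorem~\ref{theorem1}, where the fifth moment yields $A_4^\perp>0$) it is legitimate, but as an independent upper-bound argument it does not close; likewise your direct-construction alternative would require actually exhibiting a singular $4\times 4$ minor, which the MDS contradiction avoids entirely.
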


\begin{proof}
 We follow the notation in the proof of Lemma \ref{lem:1019}. From Lemma \ref{lem:1019}, we know that
the dimension of the code $\mathcal{C}(\frac{p^s-1}{2},\frac{p^s+1}{2})^\perp$ is $q-3$.
Let $d^\perp$ be the minimum Hamming distance of the code $\mathcal{C}(\frac{p^s-1}{2},\frac{p^s+1}{2})^\perp$. It is easily seen that $d^\perp \geq 2$. It follows from the Singleton bound that
$d^\perp \leq 5.$

If $d^\perp =5$, then $\mathcal{C}(\frac{p^s-1}{2},\frac{p^s+1}{2})^\perp$ is an MDS code. This means that $\mathcal{C}(\frac{p^s-1}{2},\frac{p^s+1}{2})$ has parameters $[q+1,4,q-2]$, which is contradictory to Lemma \ref{lem:1019}. Hence,
$2\leq d^\perp \leq 4$. In the following, we prove that
$d^\perp \neq 2$ and $d^\perp \neq 3$.

If $d^\perp =2$, by definition we have
\begin{eqnarray}\label{eq:oct1027}
\begin{cases}
1+a_1\beta^{\frac{(p^s-1)i}{2}}=0,  \\
1+a_1\beta^{\frac{(p^s+1)i}{2}}=0
\end{cases}
\end{eqnarray}
for some $a_1\in \gf(q)^*$ and $1\leq i\leq q$. Then $a_1\beta^{i}=1$. Raising both sides of $a_1\beta^{i}=1$ to the $(q-1)$-th power, we obtain
$\beta^{(q-1)i}=1.$
Since $\gcd(q-1,q+1)=2$, we know that $i=\frac{q+1}{2}$.
Then substituting the value of $i=\frac{q+1}{2}$ into the equations in (\ref{eq:oct1027}), we can assert that
\begin{eqnarray*}
\begin{cases}
1+(-1)^{\frac{p^s-1}{2}}a_1=0,  \\
1+(-1)^{\frac{p^s+1}{2}}a_1=0.
\end{cases}
\end{eqnarray*}
Solving the two equations above, we get that $a_1=0$, which is contradictory to the assumption that $a_1 \neq 0$. Hence, $d^\perp  \neq 2$.

If $d^\perp =3$, by definition we have
\begin{eqnarray}\label{eq:oct1019}
\begin{cases}
1+a_1\beta^{\frac{(p^s-1)i_1}{2}}+a_2\beta^{\frac{(p^s-1)i_2}{2}}=0,  \\
1+a_1\beta^{\frac{(p^s+1)i_1}{2}}+a_2\beta^{\frac{(p^s+1)i_2}{2}}=0
\end{cases}
\end{eqnarray}
for somne $a_1,a_2\in \gf(q)^*$ and $1\leq i_1\neq i_2\leq q$. Let $x=\beta^{i_1}$ and $y=\beta^{i_2}$. It is clear that $x$ and $y$ are two distinct elements in
 $U_{q+1}$. Then the equations in (\ref{eq:oct1019}) can be written as
\begin{eqnarray}\label{eq:1013}
\begin{cases}
1+a_1x^{\frac{p^s-1}{2}}+a_2y^{\frac{p^s-1}{2}}=0,  \\
1+a_1x^{\frac{p^s+1}{2}}+a_2y^{\frac{p^s+1}{2}}=0,
\end{cases}
\end{eqnarray}
 where  $x\neq y\neq 1$. Raising to the $q$-th power  both sides of the equations in (\ref{eq:1013}), we obtain
$1+a_1x^{-\frac{p^s-1}{2}}+a_2y^{-\frac{p^s-1}{2}}=0$ and $1+a_1x^{-\frac{p^s+1}{2}}+a_2y^{-\frac{p^s+1}{2}}=0$.
Then we can deduce that \[\left|\begin{array}{cccc}
1, &x^{-\frac{p^s+1}{2}},&y^{-\frac{p^s+1}{2}}\\
1,&x^{-\frac{p^s-1}{2}},&y^{-\frac{p^s-1}{2}}\\
1,&x^{\frac{p^s-1}{2}},&y^{\frac{p^s-1}{2}}
\end{array}\right|=0,\]
which holds if and only if
\begin{equation}\label{eq:1031}
\begin{split}
\left|\begin{array}{cccc}
1, &1,&1\\
1,&x,&y\\
1,&x^{p^s},&y^{p^s}
\end{array}\right|=0.
\end{split}
\end{equation}
Consequently, from Lemma \ref{lem:oct1013-27} we know that $v_2(s)> v_2(m)$ and $x,y \in U_{p^{\gcd(s,m)}+1}$ if there exist $x$, $y$ such that the equations in (\ref{eq:1013}) hold. If $v_2(s)> v_2(m)$ and $x,y \in U_{p^{\gcd(s,m)}+1}$, then
 the equations in (\ref{eq:1013}) become
\begin{eqnarray*}
\begin{cases}
1+a_1+a_2=0,  \\
1+a_1x+a_2y=0
\end{cases}
\end{eqnarray*}
since $(p^s-1)/(p^{\gcd(s,m)}+1)$ is even.
Solving the equations above, we get
$$a_1=\frac{y-1}{x-y}\,\, \text{and}\,\,a_2=\frac{x-1}{y-x}.$$
Since $(p^{\gcd(s,m)}+1)\, |\, (q+1)$, we have $x^q=1/x$ and $y^q=1/y$. Then from $a_1^q=a_1$ and $a_2^q=a_2$, we obtain
$$\frac{1/y-1}{1/x-1/y}=\frac{y-1}{x-y}\,\,\text{and}\,\,\frac{1/x-1}{1/y-1/x}=\frac{x-1}{y-x},$$
which implies that $x=y=1$. It is contradictory to $x\neq y\neq 1$.

Combining the discussions above, we deduce that that $d^\perp =4$. This competes the proof.
\end{proof}

Below we present the main results of this section.

\begin{theorem}\label{theorem1}
Let $\ell=\gcd(m,s)$. Then the following statements hold.
\begin{enumerate}
\item[{\rm (i)}] The code $\mathcal{C}(\frac{p^s-1}{2},\frac{p^s+1}{2})$ over $\gf(q)$ has parameters $[q+1,4,q-p^\ell]$ and weight enumerator
\begin{equation*}
\begin{split}
&1+\frac{(q^4-q^3-q^2+q)}{p^{3\ell}-p^\ell}z^{q-p^\ell}+\frac{(q^2-1)(p^\ell q^2+p^\ell q-2q^2)}{2p^\ell-2}z^{q-1}+\frac{(q^2-1)(q^2-q+p^\ell)}{p^\ell}z^q\\
&+\frac{p^\ell(q^4-q^3-q^2+q)}{2+2p^\ell}z^{q+1}.
\end{split}
\end{equation*}
In addition, the minimum weight codewords of $\mathcal{C}(\frac{p^s-1}{2},\frac{p^s+1}{2})$ support a $3$-$(q+1,q-p^\ell,\lambda)$ design with
$$\lambda=\frac{(q-p^\ell)(q-p^\ell-1)(q-p^\ell-2)}{p^{3\ell}-p^\ell}.$$
The complementary design of this design is a $3$-$(q+1,p^\ell+1,1)$ design, i.e., a Steiner system $S(3,p^\ell+1, q+1)$.

\item[{\rm (ii)}] The code $\mathcal{C}(\frac{p^s-1}{2},\frac{p^s+1}{2})^\perp$ over $\gf(q)$ has parameters $[q+1,q-3,4]$ and
the minimum weight codewords of $\mathcal{C}(\frac{p^s-1}{2},\frac{p^s+1}{2})^\perp$ support a $3$-$(q+1,4,\lambda^\perp)$ design with
$$\lambda^\perp=\frac{p^\ell+4p^{2\ell}-2p^{3\ell}-2p^{4\ell}+p^{5\ell}-2}{(p^{2\ell}-1)^2}.$$
If $p=3$ and $\ell=1$, then $\lambda^\perp=1$, i.e., the minimum weight codewords of $\mathcal{C}(\frac{p^s-1}{2},\frac{p^s+1}{2})^\perp$ support a Steiner system $S(3,4, q+1)$.
\end{enumerate}
\end{theorem}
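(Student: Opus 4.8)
The plan is to first pin down the complete weight distribution of $\mathcal{C}:=\mathcal{C}(\frac{p^s-1}{2},\frac{p^s+1}{2})$ from the data already in hand, then feed this into the Assmus--Mattson theorem to manufacture the $3$-designs, and finally read off the design parameters by elementary block counting.

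\emph{Step 1 (weight enumerator).} By Lemma~\ref{lem:1019} the code $\mathcal{C}$ has dimension $4$ and its only possible nonzero weights are $q+1,\,q,\,q-1,\,q-p^\ell$ with $\ell=\gcd(m,s)$, and by Lemma~\ref{lem:oct10-27} its dual has parameters $[q+1,q-3,4]$, so $d^\perp=4$ and in particular $A_1^\perp=A_2^\perp=A_3^\perp=0$. Substituting $A_1^\perp=A_2^\perp=A_3^\perp=0$, $n=q+1$, $k=4$ into the first four Pless power moments turns them into four linear equations in the four unknowns $A_{q-p^\ell},A_{q-1},A_q,A_{q+1}$. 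The coefficient matrix is the Vandermonde matrix built from the four distinct weights $q+1,q,q-1,q-p^\ell$ (distinct because $p^\ell\ge 3$), hence invertible, so the system has a unique solution; carrying out the routine elimination yields exactly the stated weight enumerator. One checks that $A_{q-p^\ell}=\frac{q(q-1)^2(q+1)}{p^\ell(p^{2\ell}-1)}>0$ since $p^\ell\le p^{m-1}<q$, which confirms that the minimum distance is indeed $q-p^\ell$.

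\emph{Step 2 (applying Assmus--Mattson).} The decisive move is to invoke the Assmus--Mattson theorem with $\mathcal{C}^\perp$ playing the role of the code $C$ in that statement, because its dual $\mathcal{C}$ has only four nonzero weights and is therefore trivial to analyze (applying the theorem to $\mathcal{C}$ directly would require the full weight distribution of $\mathcal{C}^\perp$, which we do not have). Take $t=3$; since the minimum distance of $\mathcal{C}^\perp$ is $4$ we have $t<4$. The relevant count $s$ is the number of nonzero weights of $(\mathcal{C}^\perp)^\perp=\mathcal{C}$ lying in $[1,n-t]=[1,q-2]$; among $q+1,q,q-1,q-p^\ell$ only $q-p^\ell$ is $\le q-2$, so $s=1\le 4-3$, matching the hypothesis $s\le d-t$ with $d=4$ the minimum distance of $\mathcal{C}^\perp$. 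Evaluating the two cut-off parameters of the theorem for $q\ge 9$ gives $w=4$ for $\mathcal{C}^\perp$ and $w^\perp=q-p^\ell$ for $\mathcal{C}$, each being the largest integer for which $w-\lfloor (w+q-2)/(q-1)\rfloor$ stays below the relevant minimum distance (a short floor-function check), and $\min\{n-t,w^\perp\}=q-p^\ell$ since $p^\ell>2$. Hence the theorem produces precisely two simple $3$-designs: one from the weight-$(q-p^\ell)$ codewords of $\mathcal{C}$ and one from the weight-$4$ codewords of $\mathcal{C}^\perp$.

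\emph{Step 3 (design parameters).} In both codes the relevant codewords are minimum-weight, so if two of them share a support $S$, subtracting a scalar multiple annihilating one coordinate of $S$ yields a codeword of weight $<d$, forcing it to be zero; thus every such support is carried by exactly $q-1$ codewords and the number of blocks is $A_w/(q-1)$. For part~(i), $b=A_{q-p^\ell}/(q-1)=\frac{q(q-1)(q+1)}{p^\ell(p^{2\ell}-1)}$, and $\binom{q+1}{3}\lambda=\binom{q-p^\ell}{3}b$ gives the stated $\lambda$; the complementary design has block size $q+1-(q-p^\ell)=p^\ell+1$, and the analogous relation $\binom{q+1}{3}\lambda^c=\binom{p^\ell+1}{3}b$ simplifies to $\lambda^c=1$, i.e.\ a Steiner system $S(3,p^\ell+1,q+1)$. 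For part~(ii) I still need $A_4^\perp$: because $A_1^\perp=A_2^\perp=A_3^\perp=0$, the fifth Pless power moment is available, and its left-hand side $\sum_i i^4A_i$ is now known from Step~1, so the moment can be solved for $A_4^\perp$; then $b^\perp=A_4^\perp/(q-1)$ and $\binom{q+1}{3}\lambda^\perp=\binom{4}{3}b^\perp$ yield the displayed value of $\lambda^\perp$, which collapses to $1$ when $(p,\ell)=(3,1)$.

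\emph{Main obstacle.} The conceptual heart is Step~2: recognizing that Assmus--Mattson must be anchored on $\mathcal{C}^\perp$ so that $s=1$, together with the floor-function verification pinning $w=4$ and $w^\perp=q-p^\ell$. The heaviest labor is purely computational, namely solving the $4\times4$ moment system in closed form and, above all, extracting $A_4^\perp$ from the fifth power moment and simplifying it to the rational expression defining $\lambda^\perp$; these steps are mechanical but error-prone, and this is where I expect the real effort to lie.
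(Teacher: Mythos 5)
Your proposal is correct and follows essentially the same route as the paper: the same combination of Lemmas \ref{lem:1019} and \ref{lem:oct10-27}, the first four Pless power moments to pin down the weight distribution, the Assmus--Mattson theorem (with $t=3$, $s=1$) for both designs, the block count $b=A_w/(q-1)$ with equation (\ref{eq:kbt}) for $\lambda$ and the Steiner complement, and the fifth power moment for $A_4^\perp$ and $\lambda^\perp$. The only differences are that you spell out details the paper leaves implicit (the Vandermonde invertibility of the moment system, the scalar-multiple argument justifying $b=A_w/(q-1)$, and the explicit $w$, $w^\perp$ cutoff checks), which is harmless.
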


\begin{proof}
By Lemma \ref{lem:1019}, when $(a,b)$ runs over $\gf(q)^2\setminus \{(0,0)\}$, the possible nonzero weights of the codewords in $\mathcal{C}(\frac{p^s-1}{2},\frac{p^s+1}{2})$ are
$$q+1,\,\,q,\,\,q-1\,\,\text{and}\,\,q-p^\ell.$$

Denote $w_0=q+1$, $w_1=q$, $w_2=q-1$ and $w_3=q-p^\ell$. Let $A_{w_i}$ be the number of the codewords with weight $w_i$ in $\mathcal{C}(\frac{p^s-1}{2},\frac{p^s+1}{2})$, where $0\leq i \leq 3$. From Lemma \ref{lem:oct10-27}, we know that the code $\mathcal{C}(\frac{p^s-1}{2},\frac{p^s+1}{2})$ has
 the minimum distance $4$. From the first four Pless power moments, we then have
\[\left\{ \begin{array}{lll}
\sum_{i=0}^3{w_i}=q^4-1,\\
\sum_{i=0}^3w_iA_{w_i}=q^3(q^2-1),\\
\sum_{i=0}^3w_i^2A_{w_i}=q^4(q^2-1),\\
\sum_{i=0}^3w_i^3A_{w_i}=q^7-q^4-q^3+q^2.
\end{array}\right. \]
Solving this system of equations, we obtain
\[\left\{ \begin{array}{lll}
A_{w_0}=p^\ell(q^4-q^3-q^2+q)/(2+2p^\ell),\\
A_{w_1}=(q^2-1)(q^2-q+p^\ell)/p^\ell,\\
A_{w_2}=(q^2-1)(p^\ell q^2+p^\ell q-2q^2)/(2p^\ell-2),\\
A_{w_3}=(q^4-q^3-q^2+q)/(p^{3\ell}-p^\ell).
\end{array}\right. \]
It follows from the Assmus-Mattson Theorem that the minimum weight codewords in $\mathcal{C}(\frac{p^s-1}{2},\frac{p^s+1}{2})$ support a $3$-$(q+1,q-p^\ell,\lambda)$ design. Since the number of the supports of the minimum weight codewords in $\mathcal{C}(\frac{q^s-1}{2},\frac{q^s+1}{2})$ is
\begin{equation}\label{eq:1027-even}
b=\frac{A_{q-p^\ell}}{q-1}=\frac{(q-1)q(q+1)}{p^{3\ell}-p^\ell}.
\end{equation}
From (\ref{eq:kbt}) we deduce
$$\lambda=\frac{(q-p^\ell)(q-p^\ell-1)(q-p^\ell-2)}{p^{3\ell}-p^\ell}.$$

By definition, we know that the complements of the supports of the minimum weight codewords in $\mathcal{C}(\frac{p^s-1}{2},\frac{p^\ell+1}{2})$ support
a $3$-$(q+1,p^\ell+1,\lambda')$ design and the number of the supports is given in (\ref{eq:1027-even}). Then from the equation in (\ref{eq:kbt}) we deduce
$\lambda'=1$. Hence, the complementary design of the design supported by the minimum weight codewords in $\mathcal{C}(\frac{p^s-1}{2},\frac{p^s+1}{2})$  is a Steiner system $S(3,p^{\ell}+1, q+1)$.

From the fifth Pless power moment, we then obtain the number of codewords with weight $4$ in $\mathcal{C}^(\frac{p^s-1}{2},\frac{p^s+1}{2})^\perp$, which is given by
\begin{equation*}
A_4^{\perp}=\frac{q(q+1)(q-1)^2(p^\ell+4p^{2\ell}-2p^{3\ell}-2p^{4\ell}+p^{5\ell}-2)}{24(p^{2\ell}-1)^2}.
\end{equation*}
Then by the Assmus-Mattson Theorem again,  we deduce that the codewords of weight $4$ in $\mathcal{C}(\frac{p^s-1}{2},\frac{p^s+1}{2})^\perp$ support a $3$-$(q+1,4,\lambda^\perp)$ design with
$$\lambda^\perp=\frac{p^\ell+4p^{2\ell}-2p^{3\ell}-2p^{4\ell}+p^{5\ell}-2}{(p^{2\ell}-1)^2}.$$
This competes the proof.
\end{proof}

\begin{example}
Let $p=3$, $s=3$ and $m=4$. Then $\mathcal{C}(13,14)$ has parameters $[82, 4, 78]$ and weight enumerator
$$1+1771200x^{78}+1158560x^{80}+14176160x^{81}+15940800x^{82}.$$
The dual code $\mathcal{C}(13,14)^\perp$ has parameters $[82, 78,4]$. The codewords of weight $78$ in $\mathcal{C}(13,14)$ support a $3$-$(82,78,19019)$ design, and the codewords of weight $4$ in $\mathcal{C}(13,14)^\perp$ support a $3$-$(82,4,1)$ design.
\end{example}

\begin{example}
Let $p=5$, $s=1$ and $m=3$. Then $\mathcal{C}(2,3)$ has parameters $[126, 4, 120]$ and weight enumerator
$$1+2018100x^{120}+92767500x^{124}+48450024x^{125}+100905000x^{126}.$$
The dual code $\mathcal{C}(2,3)^\perp$ has parameters $[126, 122,4]$. The codewords of weight $120$ in $\mathcal{C}(2,3)$ support a $3$-$(126,120,14042)$ design, and the codewords of weight $4$ in $\mathcal{C}(2,3)^\perp$ support a $3$-$(126,4,3)$ design.
\end{example}

\begin{remark}
The code $\C(\frac{p^m-1}{2}, \frac{p^m+1}{2})$ is an MDS code coder $\gf(p^m)$ with parameters $[p^m+1, 3, p^m-1]$.
\end{remark}

\begin{remark}\label{remark1}
{\em We have the following comments on the codes  $\mathcal{C}(\frac{p^s-1}{2},\frac{p^s+1}{2})$ and their duals.
These comments clarify the contributions of this section and explain why these codes are interesting.
\begin{itemize}
\item We inform that a linear code over $\gf(q)$ with the same parameters as the cyclic code $\mathcal{C}(\frac{p^s-1}{2},\frac{p^s+1}{2})$
was presented in \cite{XCQ22}. But the code in  \cite{XCQ22} is not cyclic. Hence, the code
$\mathcal{C}(\frac{p^s-1}{2},\frac{p^s+1}{2})$ is more interesting, as it is cyclic.
\item
When $(s, p)=(1,3)$, Theorem \ref{theorem1} becomes the main result of \cite{DT20}.
When $(s, p)=(1,7)$, Theorem \ref{theorem1} becomes the main result of \cite{XTL21}.
Hence, Theorem \ref{theorem1} is a generalization of the works in \cite{DT20} and \cite{XTL21}.

\item
In \cite{LDMTT}, using a group-theoretic approach the authors presented a family of BCH codes of length $\delta^m+1$ over $\gf(\delta)$ whose minimum weight
codewords support a Steiner system $S(3, \delta+1, \delta^m+1))$, which is isomorphic to the Witt spherical geometry design
with the same parameters, where $\delta$ is any prime power.
The $3$-design supported by  $\mathcal{C}(\frac{p^s-1}{2},\frac{p^s+1}{2})$ is not a Steiner system, but its complementary
design
is a Steiner system. In this sense, Theorem \ref{theorem1} complements the work of \cite{LDMTT}. Notice that
$\mathcal{C}(\frac{p^s-1}{2},\frac{p^s+1}{2})$ is not a BCH code.

\item The parameters of the $3$-$(q+1, q-p^\ell, \lambda)$ design supported by the minimum weight codewords in
$\mathcal{C}(\frac{p^s-1}{2},\frac{p^s+1}{2})$ are not new, as the complementary design of this design is a Steiner
system $S(3, p^\ell+1, q+1)$. It is open if this design is isomorphic to the complementary design of the Witt spherical geometry
design with parameters $3$-$(q+1, p^\ell +1, 1)$ or the $3$-design presented in \cite{XCQ22}.
\end{itemize}
}
\end{remark}

\begin{open}\label{open-22221}
Is the $3$-design supported by the minimum weight codewords in $\mathcal{C}(\frac{p^s-1}{2},\frac{p^s+1}{2})$
isomorphic to the complementary design of the Witt spherical geometry design with parameters $3$-$(q+1, p^\ell +1, 1)$?
\end{open}

\section{Two families of negacyclic codes supporting 3-designs}\label{sec-negacyclic-codes}\label{sec-negacyclicc}

Throughout this section, let $q=p^m$ and $n=q^2+1$, where $p$ is an odd prime and $m\geq 2$ is a positive integer.
Let $\alpha$ be a primitive element of $\gf(q^4)$ and put $\delta=\alpha^{(q^2-1)/2}$. Then $\delta^n=-1$ and $\delta$
is a primitive $2n$-th root of unity. For each $i$ with $0 \leq i <2n$, let $g'_i(x)$ denote the minimal polynomial of
$\delta^i$ over $\gf(q^2)$. It is easy to check that the following $q^2$-cyclotomic cosets,
\begin{eqnarray*}
C_1^{(q^2, 2n)}  &=&  \{1, q^2\}, \\
C_{q^2+q+1}^{(q^2, 2n)}  &=&  \{q^2+q+1, 2q^2-q+2\}, \\
C_{q^2-q+1}^{(q^2, 2n)}  &=&   \{q^2-q+1, q\},
\end{eqnarray*}
have size $2$ and are pairwise disjoint. It then follows that
\begin{eqnarray*}
g'_1(x) &=& (x- \delta)(x-\delta^{q^2}), \\
g'_{q^2+q+1}(x) &=& (x- \delta^{q^2+q+1})(x-\delta^{2q^2-q+2}), \\
g'_{q^2-q+1}(x) &=& (x- \delta^{q^2-q+1})(x-\delta^{q}).
\end{eqnarray*}
In this section, we consider the two families of negacyclic codes of length $n$ over $\gf(q^2)$ with check polynomials
$g'_1(x) g'_{q^2+q+1}(x) $ and $g'_1(x) g'_{q^2-q+1}(x) $ and their duals. We prove that these codes and their duals
support 3-designs under certain conditions. We also determine the subfield subcodes of these negacyclic codes.

\subsection{The first family of negacyclic codes supporting 3-designs }

 Throughout this subsection, we always assume that $q\equiv 1 \pmod 4$. Let $\mathcal{C}(1,q^2+q+1)$ denote the negacyclic code of length $n=q^2+1$ over $\gf(q^2)$
with check polynomial $g'_1(x)g'_{q^2+q+1}(x)$. In this subsection, we study the parameters of $\mathcal{C}(1,q^2+q+1)$ and its dual, and prove that the code and its dual support 3-designs. The following lemma will be needed later.

\begin{lemma}\label{lem:aaaa1027}
Let $(a_0,a_1,a_2,a_3)
\in \gf(q^2)^4$.  Define
$$N_0=|\{x\,:\,a_0x+a_1x^{q^2}+a_2x^{q^2+q+1}+a_3x^{q^4+q^3+q^2}=0\,\,\text{and}\,\,x\in U_{2(q^2+1)}\}|$$
and
$$N_1=|\{y\,:\,a_0y+a_1y^q+a_2y^{q+1}+a_3=0\,\,\text{and}\,\,y\in U_{q^2+1}\}|.$$
Then $N_0=2N_1$.
\end{lemma}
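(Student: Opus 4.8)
The plan is to exploit the two-to-one squaring map from $U_{2(q^2+1)}$ onto $U_{q^2+1}$, together with the fact that $x^{q^2+1}\in\{1,-1\}$ for every $x\in U_{2(q^2+1)}$, to collapse the long exponents in the defining equation of $N_0$. Write $M=q^2+1$ and, for $x\in U_{2M}$, set $\epsilon=x^{M}$, so that $\epsilon^{2}=1$. Since $q$ is odd, $M$ is even, and reducing the exponents modulo $2M$ gives $x^{q^2}=\epsilon x^{-1}$, $x^{q^2+q+1}=\epsilon x^{q}$ and $x^{q^4+q^3+q^2}=x^{-q}$. Substituting these into the equation defining $N_0$ turns it into
\[
a_0x+a_1\epsilon x^{-1}+a_2\epsilon x^{q}+a_3x^{-q}=0 .
\]

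First I would multiply through by the nonzero quantity $x^{q}$, obtaining $a_0x^{q+1}+a_1\epsilon x^{q-1}+a_2\epsilon x^{2q}+a_3=0$, in which every exponent of $x$ is now even. Setting $y=x^2\in U_{M}$ and noting that $\epsilon=x^{M}=y^{(q^2+1)/2}$ depends only on $y$, the equation becomes an identity purely in $y$, namely
\[
a_0y^{(q+1)/2}+a_1y^{q(q+1)/2}+a_2y^{(q+1)^2/2}+a_3=0 .
\]
Because the squaring map $U_{2M}\to U_{M}$ is two-to-one and this equation depends on $x$ only through $y=x^2$, each $y\in U_{M}$ satisfying it lifts to exactly two $x\in U_{2M}$ satisfying the original equation. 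Hence $N_0=2\,|S|$, where $S$ is the solution set in $U_{M}$ of the last displayed equation.

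It then remains to identify $|S|$ with $N_1$. The key observation is that the exponents $(q+1)/2,\ q(q+1)/2,\ (q+1)^2/2,\ 0$ are exactly $\tfrac{q+1}{2}$ times the exponents $1,q,q+1,0$ occurring in the equation defining $N_1$. Thus, writing $y'=y^{(q+1)/2}$, the equation cutting out $S$ reads $a_0y'+a_1y'^{q}+a_2y'^{q+1}+a_3=0$, which is precisely the equation defining $N_1$. The final step is to verify that the power map $\rho\colon y\mapsto y^{(q+1)/2}$ is a bijection of $U_{M}$, i.e.\ that $\gcd((q+1)/2,\,q^2+1)=1$: since $\gcd(q+1,q^2+1)=2$ and the standing hypothesis $q\equiv1\pmod 4$ makes $(q+1)/2$ odd, any common divisor must be an odd divisor of $2$, hence $1$. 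Then $\rho$ maps $S$ bijectively onto the solution set of the $N_1$-equation, giving $|S|=N_1$ and therefore $N_0=2N_1$.

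I expect the two bookkeeping points to be the only real obstacles: the reduction of the large exponent $q^4+q^3+q^2$ modulo $2(q^2+1)$, which hinges on $M=q^2+1$ being even, and the bijectivity of $\rho$, for which the parity of $(q+1)/2$ controlled by $q\equiv1\pmod 4$ is exactly what is needed. Once these are settled, the remaining content is simply the observation that the reduced equation factors through the squaring map.
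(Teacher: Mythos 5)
Your proof is correct, and it is essentially the paper's argument in lightly disguised form: your composite substitution $x\mapsto x^2\mapsto (x^2)^{(q+1)/2}=x^{q+1}$ is exactly the paper's single change of variable $y=x^{q+1}$, which is two-to-one from $U_{2(q^2+1)}$ onto $U_{q^2+1}$ because $\gcd(q+1,2(q^2+1))=2$ when $q\equiv 1\pmod 4$ --- the same fact you split into the two-to-one squaring map plus the bijectivity of $y\mapsto y^{(q+1)/2}$ via $\gcd((q+1)/2,q^2+1)=1$. All your exponent reductions modulo $2(q^2+1)$ and the use of the standing hypothesis $q\equiv 1\pmod 4$ check out, so the proposal is a valid proof taking the same route.
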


\begin{proof}
Since $x\in U_{2(q^2+1)}$, we have
$$x^{q^4+q^3+q^2}=x^{(q^4+q^3+q^2) \bmod 2(q^2+1)} =x^{-q}.$$
Then
\begin{equation}\label{eq:1028}
a_0x+a_1x^{q^2}+a_2x^{q^2+q+1}+a_3x^{q^4+q^3+q^2}=x^{-q}\left(a_0x^{q+1}+a_1x^{q^2+q}+a_2x^{q^2+2q+1}+a_3\right).
\end{equation}

Let $y=x^{q+1}$. It follows from $q\equiv 1 \pmod 4$ and Lemma \ref{lemma5}  that $\gcd(2(q^2+1),q+1)=2$. Then
$y$ takes on each element in $U_{q^2+1}$ exactly twice when $x$ runs over all elements in $ U_{2(q^2+1)}$.
Hence, from the equation in (\ref{eq:1028}) we  have
\begin{equation*}
\begin{split}
N_0&=\left|\{x\,:\,a_0y+a_1y^q+a_2y^{q+1}+a_3=0,\,\,x\in U_{2(q^2+1)}\,\,\text{and}\,\,y=x^{q+1}\}\right|\\
&=2\left|\{y\,:\,a_0y+a_1y^q+a_2y^{q+1}+a_3=0\,\,\text{and}\,\,y\in U_{q^2+1}\}\right| \\
&=2N_1.
\end{split}
\end{equation*}
 This completes the proof.
\end{proof}

Notice that the condition $q \equiv 1 \pmod{4}$ is necessary for the statement of Lemma \ref{lem:aaaa1027} to be true.
This explains why we made it clear at the very beginning of this subsection that $q \equiv 1 \pmod{4}$ is assumed throughout this subsection.
We now determine the possible weights of the code $\mathcal{C}(1,q^2+q+1)$.

\begin{lemma}\label{lem:16}
$\mathcal{C}(1,q^2+q+1)$ is a $[q^2+1, 4]$ negacyclic code over $\gf(q^2)$ whose possible nonzero weights are in the set $\left\{q^2+1,q^2,q^2-1,q^2-q\right\}$.
\end{lemma}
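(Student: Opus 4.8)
The plan is to mirror the proof of Lemma~\ref{lem:1019}, replacing the set $U_{q+1}$ and the Frobenius exponent $p^s$ used there by $U_{q^2+1}$ and the exponent $q$. First I would settle the dimension. Since the two $q^2$-cyclotomic cosets $C_1^{(q^2,2n)}$ and $C_{q^2+q+1}^{(q^2,2n)}$ both have size $2$ and are disjoint, the check polynomial $g'_1(x)g'_{q^2+q+1}(x)$ has degree $4$, so $\mathcal{C}(1,q^2+q+1)$ has dimension $4$. Applying Lemma~\ref{lem-01} with $\lambda=-1$ (so $r=\ord(\lambda)=2$), $\gamma=\delta$, and $m_1=m_2=2$, the code has the trace representation
\begin{equation*}
\mathcal{C}(1,q^2+q+1)=\left\{\mathbf{c}(a,b)=\left({\rm Tr}_{q^4/q^2}\left(a\delta^{-t}+b\delta^{-(q^2+q+1)t}\right)\right)_{t=0}^{n-1}\,:\,a,b\in\gf(q^4)\right\}.
\end{equation*}

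Next I would express the weight of $\mathbf{c}(a,b)$ through a root count. Writing $x=\delta^{-t}$ and expanding ${\rm Tr}_{q^4/q^2}(z)=z+z^{q^2}$, each coordinate becomes
$$f(x)=a_0x+a_1x^{q^2}+a_2x^{q^2+q+1}+a_3x^{q^4+q^3+q^2},\qquad (a_0,a_1,a_2,a_3)=(a,a^{q^2},b,b^{q^2}),$$
which is exactly the form appearing in Lemma~\ref{lem:aaaa1027}. As $t$ runs over $0,\dots,n-1$, the argument $x=\delta^{-t}$ runs over the half $A=\{\delta^{-t}:0\le t\le n-1\}$ of $U_{2(q^2+1)}$. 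Because $q$ is odd, $q^2+q+1$ is odd, and a direct computation gives $f(-x)=-f(x)$; hence the fixed-point-free involution $x\mapsto -x=\delta^{n}x$, which carries $A$ bijectively onto its complement in $U_{2(q^2+1)}$, preserves the zero set of $f$. Consequently exactly half of the zeros of $f$ on $U_{2(q^2+1)}$ lie in $A$, so the number of vanishing coordinates of $\mathbf{c}(a,b)$ equals $N_0/2$, which by Lemma~\ref{lem:aaaa1027} equals $N_1$, where $N_1$ counts the solutions $y\in U_{q^2+1}$ of $a_0y+a_1y^{q}+a_2y^{q+1}+a_3=0$. Thus $\wt(\mathbf{c}(a,b))=q^2+1-N_1$ for $(a,b)\neq(0,0)$, noting that $(a_0,a_1,a_2,a_3)\neq(0,0,0,0)$ in this case.

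The remaining and main step is to show $N_1\in\{0,1,2,q+1\}$. Here I would run the argument of Lemma~\ref{conj-21march338} almost verbatim under the substitution $q\mapsto q^2$ and $p^s\mapsto q$: using Lemma~\ref{lem-add}(ii) over $\gf(q^2)$ to parametrize $U_{q^2+1}\setminus\{\beta'\}$ by $\alpha'\in\gf(q^2)$, the equation reduces, after clearing denominators, to an affine linearized equation $x^{q}+\varphi x=\nu$ over $\gf(q^2)$; its companion linear equation $x^{q}+\varphi x=0$ has either $1$ solution or $p^{\gcd(m,2m)}=p^{m}=q$ solutions, and an affine linearized equation has either no solution or exactly as many as its linear part. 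Hence the solution count is $0$, $1$, $2$, or $q+1$, and combining with $\wt(\mathbf{c}(a,b))=q^2+1-N_1$ yields the weight set $\{q^2+1,\,q^2,\,q^2-1,\,q^2-q\}$. I expect the delicate points to be the symmetry bookkeeping that converts the $U_{2(q^2+1)}$-count $N_0$ into the coordinate count over the half $A$, and verifying that the parametrization and linearized-polynomial analysis of Lemma~\ref{conj-21march338} transfer cleanly to base field $\gf(q^2)$ and exponent $q$; the hypothesis $q\equiv 1\pmod 4$ is what makes Lemma~\ref{lem:aaaa1027} (and hence the relation $N_0=2N_1$) available.
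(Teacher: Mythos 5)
Your proposal is correct and takes essentially the same route as the paper's proof: dimension $4$ from the degree-$4$ check polynomial, the trace representation via Lemma \ref{lem-01}, the halving of the zero count through the odd-exponent symmetry $f(-x)=-f(x)$ combined with Lemma \ref{lem:aaaa1027} (so $\wt(\mathbf{c}(a,b))=q^2+1-N_0/2=q^2+1-N_1$), and finally $N_1\in\{0,1,2,q+1\}$ from Lemma \ref{conj-21march338} under the substitution $q\mapsto q^2$, $p^s\mapsto q$. The only cosmetic difference is that you rerun the linearized-polynomial argument explicitly in the new setting, whereas the paper simply cites Lemma \ref{conj-21march338}, which amounts to the same substitution.
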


\begin{proof}
Recall that
 $\delta \in  {\rm GF}(q^4)$ is a primitive $2(q^2+1)$-th root of unity.
At the beginning of Section  \ref{sec-negacyclicc} it was showed that
\begin{eqnarray*}
g'_1(x) &=& (x- \delta)(x-\delta^{q^2}), \\
g'_{q^2+q+1}(x) &=& (x- \delta^{q^2+q+1})(x-\delta^{2q^2-q+2}),
\end{eqnarray*}
which are two distinct divisors of $x^n+1$.
Thus, $\deg(g'_1(x) g'_{q^2+q+1}(x) )=4$, which means that the dimension of the code $\mathcal{C}(1,q^2+q+1)$ is $4$.

We now find out the possible nonzero weights of  the code $\mathcal{C}(1,q^2+q+1)$.
According to Lemma \ref{lem-01}, the trace expression of $\mathcal{C}(1,q^2+q+1)$ is given by
\begin{equation}\label{eq:cot10-5}
\mathcal{C}(1,q^2+q+1)=\left\{\mathbf{c}(a,b)=\left({\rm Tr}_{q^4/q^2}\left(a\delta^{-i}+b\delta^{-(q^2+q+1)i}\right)\right)_{i=0}^{q^2}\,:\,a,b \in \gf(q^4)\right\}.
\end{equation}
Let $x \in U_{2(q^2+1)}$, then
\begin{equation}\label{eq:Tr42}
\begin{split}
{\rm Tr}_{q^4/q^2}\left(ax+bx^{q^2+q+1}\right)=ax+a^{q^2}x^{q^2}+bx^{q^2+q+1}+b^{q^2}x^{q^2(q^2+q+1)}.
\end{split}
\end{equation}
It is easily seen that the following two conclusions are true:
\begin{itemize}
\item If $\delta'\in U_{2(q^2+1)}$ is a solution of ${\rm Tr}_{q^4/q^2}\left(ax+bx^{q^2+q+1}\right)=0$, then $-\delta'\in U_{2(q^2+1)}$ is also a solution of ${\rm Tr}_{q^4/q^2}\left(ax+bx^{q^2+q+1}\right)=0$.
\item Let $\delta''\in U_{2(q^2+1)}$. Then $\delta''\in \{\delta^i\,:\,0\leq i\leq q^2\}$ if and only if $-\delta'' \notin \{\delta^i\,:\,0\leq i\leq q^2\}$.
\end{itemize}
Hence, from the equations in (\ref{eq:cot10-5}) and (\ref{eq:Tr42})  we can deduce that
\begin{equation*}
wt(\mathbf{c}(a,b))=q^2+1-N_0/2=q^2+1-N_1,
\end{equation*}
where $N_0$ and $N_1$ were defined in Lemma \ref{lem:aaaa1027}.
Combining Lemma \ref{conj-21march338} with Lemma \ref{lem:aaaa1027}, we can obtain the desired possible nonzero weights in the code $\mathcal{C}(1,q^2+q+1)$.
This completes the proof.
\end{proof}

\begin{lemma}\label{conj-21oct1010}
The dual code $\mathcal{C}(1,q^2+q+1)^\perp$ over $\gf(q^2)$ has parameters $[q^2+1,q^2-3,4]$.
\end{lemma}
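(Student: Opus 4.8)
The length and dimension require no work: by Lemma \ref{lem:16} the code $\mathcal{C}(1,q^2+q+1)$ has length $q^2+1$ and dimension $4$, so its Euclidean dual has length $q^2+1$ and dimension $(q^2+1)-4=q^2-3$. The whole content of the statement is therefore the claim $d^\perp=4$, and I would establish it by sandwiching: first $d^\perp\ge 4$ by ruling out dual words of weight $1,2,3$, and then $d^\perp\le 4$ by a short Singleton/MDS argument. To treat the low weights uniformly I would first extract a clean parity-check description of the dual. Starting from the trace representation (\ref{eq:cot10-5}) and using the $\gf(q^2)$-linearity and non-degeneracy of ${\rm Tr}_{q^4/q^2}$, a vector $\mathbf{u}=(u_0,\dots,u_{q^2})\in\gf(q^2)^{q^2+1}$ lies in $\mathcal{C}(1,q^2+q+1)^\perp$ if and only if $\sum_i u_i\delta^{-i}=0$ and $\sum_i u_i\delta^{-(q^2+q+1)i}=0$ in $\gf(q^4)$. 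Writing $t_i=\delta^{-i}\in U_{2(q^2+1)}$ and applying the Frobenius $x\mapsto x^{q^2}$ (which fixes each $u_i$) adjoins the two conjugate relations; since $t_i^{q^2(q^2+q+1)}=t_i^{-q}$ for $t_i\in U_{2(q^2+1)}$, a weight-$w$ dual word supported on $\{i_1,\dots,i_w\}$ corresponds to a nonzero $\gf(q^2)$-vector in the kernel of the $4\times w$ matrix whose $j$-th column is $(t_{i_j},\,t_{i_j}^{q^2},\,t_{i_j}^{q^2+q+1},\,t_{i_j}^{-q})^{T}$.

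For $d^\perp\ge 2$ note that each such column is nonzero. For weight $2$, equating the two expressions for $u_2/u_1$ coming from the first relation $u_1t_1+u_2t_2=0$ and the relation $u_1t_1^{q^2+q+1}+u_2t_2^{q^2+q+1}=0$ forces $(t_1/t_2)^{q(q+1)}=1$; since $t_1/t_2\in U_{2(q^2+1)}$ and $\gcd(q(q+1),2(q^2+1))=\gcd(q+1,2(q^2+1))=2$ (Lemma \ref{lem:aaaa1027}, which is exactly where $q\equiv 1\pmod 4$ is used, via Lemma \ref{lemma5}), this gives $t_1=\pm t_2$. Here $t_1=t_2$ means the two positions coincide, and $t_1=-t_2$ forces $i_2-i_1\equiv n\pmod{2n}$, impossible for distinct positions in $\{0,\dots,q^2\}$. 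Hence there is no weight-$2$ word, and the same gcd computation shows that the elements $w_i:=t_i^{q+1}\in U_{q^2+1}$ are pairwise distinct for distinct positions — the fact I will need below.

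The crux is weight $3$. A weight-$3$ word forces the $4\times 3$ matrix above to have rank $\le 2$ \emph{over} $\gf(q^4)$, so I may scale the $j$-th column by $t_{i_j}^{q}\in\gf(q^4)^*$ (legitimate for a rank question) to obtain columns $(w_j,\,w_j^{q},\,w_j^{q+1},\,1)^{T}$ with $w_j\in U_{q^2+1}$. The $3\times 3$ minor on rows $1,2,4$ must then vanish, i.e.
\begin{equation*}
G(w_1,w_2,w_3):=\left|\begin{array}{ccc}1&1&1\\ w_1&w_2&w_3\\ w_1^{q}&w_2^{q}&w_3^{q}\end{array}\right|=0.
\end{equation*}
I would show instead that $G\ne 0$ for pairwise distinct $w_j$. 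Subtracting the third column and using the identity $t^{q}-c^{q}=(t-c)^{q}$ gives $G=(w_1-w_3)(w_2-w_3)\big[(w_2-w_3)^{q-1}-(w_1-w_3)^{q-1}\big]$, so $G=0$ would force $\lambda:=(w_1-w_3)/(w_2-w_3)\in\gf(q)^*$, that is $w_1=\lambda w_2+(1-\lambda)w_3$. Applying $x\mapsto x^{q^2}$, using $w_j^{q^2}=w_j^{-1}$ (since $w_j\in U_{q^2+1}$) and $\lambda^{q^2}=\lambda$, produces a second relation; eliminating $w_1$ between the two collapses to $\lambda(\lambda-1)(w_2-w_3)^2=0$, whence $\lambda\in\{0,1\}$ and hence $w_1=w_2$, a contradiction. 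This yields $d^\perp\ge 4$.

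Finally, the Singleton bound gives $d^\perp\le (q^2+1)-(q^2-3)+1=5$; were $d^\perp=5$, the dual would be MDS, making $\mathcal{C}(1,q^2+q+1)$ an MDS $[q^2+1,4,q^2-2]$ code, but $q^2-2\notin\{q^2+1,q^2,q^2-1,q^2-q\}$ for odd $q\ge 3$, contradicting Lemma \ref{lem:16}. Combining with $d^\perp\ge 4$ gives $d^\perp=4$. I expect the weight-$3$ step to be the main obstacle, both in setting the determinant up correctly over $\gf(q^4)$ — one must adjoin the conjugate equations so that $\gf(q^4)$-column scaling becomes admissible — and in the computation that $G\ne 0$; the remaining steps are routine consequences of Lemmas \ref{lem:16}, \ref{lem:aaaa1027} and \ref{lemma5}.
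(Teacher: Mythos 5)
Your proposal is correct, and while its skeleton (dimension from Lemma \ref{lem:16}, the Singleton/MDS argument for $d^\perp\le 4$ since $q^2-2$ is not among the possible weights, then excluding weights $2$ and $3$) matches the paper's, the decisive weight-$3$ step takes a genuinely different route. The paper conjugates the two parity-check equations by the $q$-power Frobenius to get $(a_1^q-(-1)^{i_1}a_1)\delta^{qi_1}=-(a_2^q-(-1)^{i_2}a_2)\delta^{qi_2}$ and then runs a three-case analysis according to whether these coefficients vanish, each case ending in a contradiction about $i_1,i_2$ via order/gcd bookkeeping with the signs $\delta^{(q^2+1)i}=(-1)^i$; crucially, that analysis exploits $a_1,a_2\in\gf(q^2)^*$. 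You instead adjoin the two $q^2$-conjugate parity checks, scale the $j$-th column by $t_{i_j}^q$ (which I verified is legitimate: $t^{q^2(q^2+q+1)}=t^{-q}$ on $U_{2(q^2+1)}$, and column scaling preserves rank), rewrite everything in $w_j=t_{i_j}^{q+1}\in U_{q^2+1}$ --- exactly the reduction underlying Lemma \ref{lem:aaaa1027}, with $q\equiv 1\pmod 4$ entering only through $\gcd(q+1,2(q^2+1))=2$, which also gives both your weight-$2$ exclusion and the pairwise distinctness of the $w_j$ --- and then kill the minor on rows $1,2,4$ via the factorization $G=(w_1-w_3)(w_2-w_3)\bigl[(w_2-w_3)^{q-1}-(w_1-w_3)^{q-1}\bigr]$ and the elimination $\lambda(\lambda-1)(w_2-w_3)^2=0$ using $\lambda\in\gf(q)^*$ and $w_j^{q^2}=w_j^{-1}$ (I checked this algebra; it is right, and note $\lambda=0$ gives $w_1=w_3$, $\lambda=1$ gives $w_1=w_2$, both contradictions). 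This mirrors the paper's own treatment of the cyclic family (Lemma \ref{lem:oct1013-27} feeding Lemma \ref{lem:oct10-27}), transplanted to the negacyclic setting, so your argument in effect unifies the two sections; it also buys something slightly stronger than the paper's, namely that the $3\times 3$ matrix on rows $1,2,4$ is nonsingular over $\gf(q^4)$, so there is no kernel vector at all, not merely none with entries in $\gf(q^2)^*$. Your weight-$2$ argument ($(t_1/t_2)^{q(q+1)}=1$ forces $t_1=\pm t_2$, both impossible for distinct positions in $\{0,\dots,q^2\}$) is a compressed equivalent of the paper's (derive $a_1\delta^{(q^2+q)i}=1$, raise to the $(q^2-1)$-th power, pin down $i=\frac{q^2+1}{2}$, sign contradiction).
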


\begin{proof}
It follows from Lemma \ref{lem:16} that
the dimension of the code $\mathcal{C}(1,q^2+q+1)^\perp$ is $q^2-3$.
Let $d^\perp$ denote the minimum Hamming distance of the code $\mathcal{C}(1,q^2+q+1)^\perp$.
With an analysis similar as in the proof of Lemma \ref{lem:oct10-27}, we know that
$2\leq d^\perp \leq 4.$
Below we prove that $d^\perp \neq 2$ and $d^\perp \neq 3$. We follow the notation in the proof of Lemma \ref{lem:16}.

If $d^\perp =2$, by definition we have
\begin{eqnarray}\label{eq:oct1008}
\begin{cases}
1+a_1\delta^i=0,  \\
1+a_1\delta^{(q^2+q+1)i}=0
\end{cases}
\end{eqnarray}
for some $a_1\in \gf(q^2)^*$ and $1\leq i\leq q^2$. Then $a_1\delta^{(q^2+q)i}=1$. Raising both sides of $a_1\delta^{(q^2+q)i}=1$ to the $(q^2-1)$-th power, we obtain
\begin{equation}\label{eq:1028-01}
\delta^{(q^2+q)(q^2-1)i}=1.
\end{equation}
It is clear that $v_2(2(q^2+1))=2$ and $\gcd(q^2-1,q^2+1)=2$. From Lemma \ref{lemma5}, we know that $\gcd(q+1,q^2+1)=2$.
As a result, we have
$$\gcd\left((q^2+q)(q^2-1),2(q^2+1)\right)=4.$$
Since $\delta^{2(q^2+1)}=1$ and $1\leq i\leq q^2$, from the equation in (\ref{eq:1028-01}) we have $\delta^{4i}=1$ and  $i=\frac{q^2+1}{2}$.  Since $q\equiv 1 \pmod 4$, it is easy to verify that
$$\frac{(q^2+q+1)(q^2+1)}{2}\equiv \frac{3(q^2+1)}{2} \pmod {2(q^2+1)}.$$
 Substituting the value of $i=\frac{q^2+1}{2}$ into the equation in (\ref{eq:oct1008}), we arrive at that
\begin{eqnarray*}
\begin{cases}
1+a_1\delta^{\frac{q^2+1}{2}}=0,  \\
1-a_1\delta^{\frac{q^2+1}{2}}=0.
\end{cases}
\end{eqnarray*}
Solving the two equations above, we get that $a_1=0$, which is contradictory to the assumption that $a_1 \neq 0$. Hence, $d^\perp  \neq 2$.

If $d^\perp =3$, by definition we have
\begin{eqnarray}\label{eq:1010}
\begin{cases}
1+a_1\delta^{i_1}+a_2\delta^{i_2}=0,  \\
1+a_1\delta^{\left(q^2+q+1\right){i_1}}+a_2\delta^{\left(q^2+q+1\right){i_2}}=0
\end{cases}
\end{eqnarray}
for some $a_1,a_2\in \gf(q^2)^*$ and $1\leq i_1\neq i_2\leq q^2$. It is obvious that
 $\delta^{(q^2+1){i_1}}=(-1)^{i_1}$ and $\delta^{(q^2+1){i_2}}=(-1)^{i_2}.$
Then the equations in
(\ref{eq:1010}) becomes
\begin{eqnarray}\label{eq:oct02}
\begin{cases}
1+a_1\delta^{i_1}+a_2\delta^{i_2}=0,  \\
1+(-1)^{i_1} a_1\delta^{qi_1}+(-1)^{i_2}a_2\delta^{qi_2}=0,
\end{cases}
\end{eqnarray}
which implies that
\begin{equation}\label{eq:aadf}
(a_1^q-(-1)^{i_1}a_1)\delta^{qi_1}=-(a_2^q-(-1)^{i_2}a_2)\delta^{q{i_2}}.
\end{equation}
There are the following three cases.

\noindent{\bf Case 1:} $a_1^q-(-1)^{i_1}a_1=0$. In this case, we have $a_1^q=(-1)^{i_1}a_1$, which implies that $a_1^2 \in \gf(q)$. From the equation in (\ref{eq:aadf}), we have $a_2^q=(-1)^{i_2}a_2$ and $a_2^2 \in \gf(q)$.
It then follows from the first equation of (\ref{eq:oct02}) that
\begin{equation}\label{eq110901}
1+a_1\delta^{i_1}=-a_2\delta^{i_2}
\end{equation}
Raising both sides of the equation in (\ref{eq110901}) to the $q^2$-th power, we get that
\begin{equation}\label{eq1109}
1+a_1(-1)^{-i_1}\delta^{-i_1}=-a_2(-1)^{-i_2}\delta^{-i_2}
\end{equation}
since $\delta^{q^2i}=(-1)^{-i}\delta^{-i}$ for any positive integer $i$.
Multiplying both sides of the equation in (\ref{eq110901}) with the corresponding two sides of the equation in (\ref{eq1109}) yields
$$1+a_1\delta^{i_1}+a_1(-1)^{-i_1}\delta^{-i_1}+(-1)^{-i_2}a_1^2=(-1)^{-i_2}a_2^2.$$
Since $a_1^2, a_2^2 \in \gf(q)$, we have $a_1\delta^{i_1}+a_1(-1)^{-i_1}\delta^{-i_1}\in \gf(q)$. Then we deduce that
$$\delta^{2i_1}+\delta^{-2i_1}\in \gf(q)$$
since
$$(a_1\delta^{-i_1}+a_1(-1)^{i_1}\delta^{i_1})^2=a_1^2(\delta^{2i_1}+\delta^{-2i_1}+1)\in \gf(q),$$
which implies that $\delta^{2i_1} \in \gf(q^2)$.  Since $\delta$ is a primitive $2(q^2+1)$-th root of unity, we deduce that
$(q^2+1)\,|\,(q^2-1)i_1,$ which implies that
$i_1=\frac{q^2+1}{2}$ since $\gcd((q^2+1),q(q-1))=2$ and $1\leq i_1\leq q^2$.
Similarly, we can obtain that $i_2=\frac{q^2+1}{2}$, which is contradictory to our earlier assumption that $i_1\neq i_2$.

\noindent{\bf Case 2:} $a_2^q-(-1)^{i_2}a_2=0$. In this case, it follows from (\ref{eq:aadf}) that $a_1^q-(-1)^{i_1}a_1=0$,
which was proved to be impossible in Case 1.

\noindent{\bf Case 3:} $a_1^q-(-1)^{i_1}a_1\neq0$ and $a_2^q-(-1)^{i_2}a_2\neq0$. From the equations in (\ref{eq:oct02}) we have
$$\delta^{q(i_1-i_2)}=-\left(\frac{a_2^q-(-1)^{i_2}a_2}{a_1^q-(-1)^{i_1}a_1}\right).$$
It is easily seen that
$$\left(\frac{a_2^q-(-1)^{i_2}a_2}{a_1^q-(-1)^{i_1}a_1}\right)^q=\left(\frac{a_2-(-1)^{i_2}a_2^q}{a_1-(-1)^{i_1}a_1^q}\right)
=(-1)^{i_1-i_2}\left(\frac{a_2^q-(-1)^{i_2}a_2}{a_1^q-(-1)^{i_1}a_1}\right)$$
 since $a_1,a_2 \in \gf(q^2)^*$. Then
$\delta^{2q(i_1-i_2)(q-1)}=1$. Since  $\delta$ is a primitive $2(q^2+1)$-th root of unity, we deduce that
$(q^2+1)\,|\,q(q-1)(i_1-i_2),$ which implies that
$i_1-i_2=\frac{q^2+1}{2}$ since $\gcd((q^2+1),q(q-1))=2$.
Substituting the value of $i_1=i_2+\frac{q^2+1}{2}$ into the first equation in (\ref{eq:oct02}), we obtain that
$$1+a_1\delta^{i_2}\delta^{\frac{q^2+1}{2}}+a_2\delta^{i_2}=0.$$
Since $q\equiv 1 \pmod 4$, we have $2(q^2+1)\, | \, \frac{(q-1)(q^2+1)}{2}$, which implies that $\delta^{\frac{q^2+1}{2}}\in \gf(q)$. Then
$$\delta^{i_2}=-\frac{1}{a_2+a_1\delta^{\frac{q^2+1}{2}}}\in \gf(q^2),$$
which implies that $\delta^{(q^2-1)i_2}=1$. From $\delta^{2(q^2+1)}=1$ and $1\leq i_2\leq q^2$, we deduce that $i_2=\frac{q^2+1}{2}$. It then follows from $i_1=i_2+\frac{q^2+1}{2}$ that $i_1=q^2+1$, which is contradictory to our earlier assumption that $1\leq i_1 \leq q^2$.

Summarizing the conclusions in the three cases above,  we conclude that $d^\perp \neq3$. Consequently,
$d^\perp = 4$ This completes the proof.
\end{proof}

The following theorem documents the main results of this subsection.

\begin{theorem}\label{theorem2}
Let $q\equiv 1 \pmod 4$. Then the following statements hold.
\begin{enumerate}
\item[{\rm (i)}] The code $\mathcal{C}(1,q^2+q+1)$ over $\gf(q^2)$ has parameters $[q^2+1,4,q^2-q]$ and weight enumerator
\begin{equation*}
\begin{split}
1+(q^5-q)z^{q^2-q}+\frac{(q^4-1)(q-1)q^3}{2}\left(z^{q^2-1}+z^{q^2+1}\right)+\left(q^7-q^5+q^4-q^3+q-1\right)z^{q^2}.
\end{split}
\end{equation*}
Furthermore, the minimum weight codewords of $\mathcal{C}(1,q^2+q+1)$ support a $3$-$(q^2+1,q^2-q,\lambda)$ design with
$$\lambda=(q^2-q-1)(q-2).$$
The complementary design of this design is a Steiner system $S(3,q+1, q^2+1)$.

\item[{\rm (ii)}] The negacyclic code $\mathcal{C}(1,q^2+q+1)^\perp$ over $\gf(q^2)$ has parameters $[q^2+1,q^2-3,4]$ and
the minimum weight codewords of $\mathcal{C}(1,q^2+q+1)^\perp$ support a $3$-$(q^2+1,4,q-2)$ design.
\end{enumerate}
\end{theorem}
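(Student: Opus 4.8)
The plan is to follow the template of the proof of Theorem~\ref{theorem1}, now over the larger field $\gf(q^2)$. First I would nail down the weight distribution of $\mathcal{C}(1,q^2+q+1)$ from the two facts already established: its nonzero weights lie in $\{q^2+1,q^2,q^2-1,q^2-q\}$ (Lemma~\ref{lem:16}), and its dual has minimum distance $4$ (Lemma~\ref{conj-21oct1010}), so that $A_1^\perp=A_2^\perp=A_3^\perp=0$. With these vanishing dual weights, the first four Pless power moments (taking field size $q^2$, dimension $k=4$, length $n=q^2+1$, so that $q^k=q^8$) yield four linear equations in the four unknowns $A_{q^2+1},A_{q^2},A_{q^2-1},A_{q^2-q}$. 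Solving this nonsingular system gives the stated weight enumerator; in particular $A_{q^2-q}=q^5-q>0$ forces the minimum distance to be exactly $q^2-q$, so $\mathcal{C}(1,q^2+q+1)$ has parameters $[q^2+1,4,q^2-q]$.

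Next I would produce the $3$-designs via the Assmus--Mattson theorem, applied to $D:=\mathcal{C}(1,q^2+q+1)^\perp$ with $D^\perp=\mathcal{C}(1,q^2+q+1)$ carrying only the four weights above. Here $t=3<d(D)=4$, and the integer $s$ counts the weights of $D^\perp$ inside $[1,n-t]=[1,q^2-2]$; of the four weights only $q^2-q$ falls in that range, so $s=1\le d(D)-t=1$ and the hypothesis holds with equality. The first conclusion then shows that the weight-$4$ (minimum weight) codewords of $D$ support a simple $3$-design, giving part~(ii); the second conclusion, after checking $w^\perp=q^2-q$, confines the designs in $D^\perp$ to precisely the weight-$(q^2-q)$ codewords, giving the design of part~(i).

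To evaluate the index $\lambda$ I would count blocks: distinct minimum weight codewords share a support iff they are scalar multiples, so the number of blocks is $b=A_{q^2-q}/(q^2-1)=q(q^2+1)$. Substituting $n=q^2+1$, $\kappa=q^2-q$, $t=3$ into (\ref{eq:kbt}) and using the factorization $q^2-q-2=(q-2)(q+1)$ gives $\lambda=(q^2-q-1)(q-2)$. The same block count with block size $q+1$ yields $\lambda'=1$ for the complementary design, identifying it as the Steiner system $S(3,q+1,q^2+1)$. For the index $\lambda^\perp$ of part~(ii) I would invoke the fifth Pless power moment (legitimate since $A_1^\perp=A_2^\perp=A_3^\perp=0$) to extract $A_4^\perp$ from the now-known $A_i$, whence $b^\perp=A_4^\perp/(q^2-1)$ and (\ref{eq:kbt}) deliver $\lambda^\perp=q-2$; this agrees with the dual ovoid design of Theorem~\ref{thm-ovoidcodeinf}.

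The analytic heart of the argument—counting solutions of the norm-type equation in $U_{q^2+1}$—has already been absorbed into Lemmas~\ref{lem:16} and~\ref{conj-21oct1010}, so what remains is essentially bookkeeping. The main obstacle is therefore organizational: one must apply the Assmus--Mattson theorem with $\mathcal{C}(1,q^2+q+1)^\perp$ (rather than the $4$-dimensional code) in the role of $C$, so that the four-valued weight spectrum of the primal code makes $s=1$ and the critical inequality $s\le d-t$ holds only barely; and one must compute $w$ and $w^\perp$ accurately enough to see that the design ranges collapse to the single minimum weight in each code. I expect no genuinely new estimate to be needed beyond these verifications and the routine solution of the Pless systems.
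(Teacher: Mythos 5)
Your proposal is correct and follows essentially the same route as the paper's proof: Lemmas~\ref{lem:16} and~\ref{conj-21oct1010} feed the first four Pless power moments to pin down the weight enumerator, the Assmus--Mattson theorem (with $s=1\le d-t$) yields both $3$-designs, the block count $b=A_{q^2-q}/(q^2-1)=q^3+q$ with (\ref{eq:kbt}) gives $\lambda=(q^2-q-1)(q-2)$ and $\lambda'=1$ for the complementary Steiner system, and the fifth power moment gives $A_4^\perp$ and hence $\lambda^\perp=q-2$. Your explicit verification of $s$, $w$, and $w^\perp$ merely makes precise what the paper leaves implicit.
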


\begin{proof}
By Lemma \ref{lem:16}, the possible nonzero weights of the codewords in $\mathcal{C}(1,q^2+q+1)$ are
$q^2+1,\,\,q^2,\,\,q^2-1\,\,\text{and}\,\,q^2-q.$
Denote $w_0=q^2+1$, $w_1=q^2$, $w_2=q^2-1$ and $w_3=q^2-q$. Let $A_{w_i}$ denote the number of the codewords with weight $w_i$ in $\mathcal{C}(1,q^2+q+1)$, where $0\leq i \leq 3$. By Lemma \ref{conj-21oct1010}, the minimum Hamming distance of $\mathcal{C}(1,q^2+q+1)^\perp$
 is $4$. From the first four Pless power moments we then have
\[\left\{ \begin{array}{lll}
\sum_{i=0}^3{w_i}=q^8-1,\\
\sum_{i=0}^3w_iA_{w_i}=q^6(q^4-1),\\
\sum_{i=0}^3w_i^2A_{w_i}=q^8(q^4-1),\\
\sum_{i=0}^3w_i^3A_{w_i}=q^{16}+2q^{14}-2q^{12}-2q^{10}+q^4.
\end{array}\right. \]
Solving this system of equations, we obtain
\[\left\{ \begin{array}{lll}
A_{w_3}=q^5-q,\\
A_{w_0}=A_{w_2}=(q^4-1)(q-1)q^3/2,\\
A_{w_1}=q^7-q^5+q^4-q^3+q-1.\\
\end{array}\right. \]
It follows from the Assmus-Mattson Theorem that the minimum weight codewords in $\mathcal{C}(1,q^2+q+1)$ support a $3$-$(q^2+1,q^2-q,\lambda)$ design. Note that the number of the supports of the minimum weight codewords in $\mathcal{C}(1,q^2+q+1)$ is
\begin{equation}\label{eq:1019-even}
b=\frac{A_{q^2-q}}{q^2-1}=\frac{q^5-q}{q^2-1}=q^3+q.
\end{equation}
Then from (\ref{eq:kbt}) we deduce
$$\lambda=(q^2-q-1)(q-2).$$

By definition, the complements of the supports of the minimum weight codewords in $\mathcal{C}(1,q^2+q+1)$ support
a $3$-$(q^2+1,q+1,\lambda')$ design and the number of the supports is given in (\ref{eq:1019-even}). Then from (\ref{eq:kbt}) we deduce
$\lambda'=1$. Hence, the complementary design of the design supported by the minimum weight codewords in $\mathcal{C}(1,q^2+q+1)$ is a Steiner system $S(3,q+1, q^2+1)$.

From the fifth Pless power moment, we then obtain the number of codewords with weight $4$ in $\mathcal{C}(1,q^2+q+1)^\perp$,
 which is given by
\begin{equation}\label{eq:B3}
A_4^{\perp}=\frac{q^2(q-2)(q^2+1)(q^2-1)^2}{24}.
\end{equation}
Then it follows from the Assmus-Mattson Theorem again that the codewords of weight $4$ in $\mathcal{C}(1,q^2+q+1)^\perp$
 support a $3$-$(q^2+1,4,q-2)$ design.  This completes the proof.
\end{proof}

\begin{example}
Let $q=5$. Then $\mathcal{C}(1,31)$ has parameters $[26, 4, 20]$ and weight enumerator
$$1+3120x^{20}+ 156000 x^{24}+ 75504  x^{25}+ 156000 x^{26}.$$
The dual code $\mathcal{C}(1,31)^\perp$ has parameters $[26, 22,4]$. The codewords of weight $20$ in $\mathcal{C}(1,31)$ support a $3$-$(26,20,57)$ design, and the codewords of weight $4$ in $\mathcal{C}(1,31)^\perp$ support a $3$-$(26,4,3)$ design.
\end{example}

\begin{example}
Let $q=9$. Then $\mathcal{C}(1,91)$ has parameters $[82, 4, 72]$ and weight enumerator
$$1+59040x^{72}+19128960x^{80}+4729760x^{81}+19128960x^{82}.$$
The dual code $\mathcal{C}(1,91)^\perp$ has parameters $[82, 78,4]$. The codewords of weight $72$ in $\mathcal{C}(1,91)$ support a $3$-$(82,72,497)$ design, and the codewords of weight $4$ in $\mathcal{C}(1,91)^\perp$ support a $3$-$(82,4,7)$ design.
\end{example}

The following theorem documents  the subfield subcode of $\mathcal{C}(1,q^2+q+1)$ over $\gf(q)$, which is denoted by
$\mathcal{C}(1,q^2+q+1)|_{\gf(q)}$.

\begin{theorem}\label{theorem3}
It holds that $\mathcal{C}(1,q^2+q+1)|_{\gf(q)}=\{\mathbf{0}\}$, i.e., the zero code.
\end{theorem}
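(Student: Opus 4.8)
The plan is to translate the triviality of the subfield subcode into a purely combinatorial statement about $q$-cyclotomic cosets modulo $2n=2(q^2+1)$, and then verify that statement by a short computation. Recall that $\mathcal{C}(1,q^2+q+1)$ is the negacyclic code over $\gf(q^2)$ whose nonzeros are $\delta^j$ for $j$ in $N=\{1,\,q^2,\,q^2+q+1,\,2q^2-q+2\}$, so its set of zeros (as exponents of $\delta$) is $T=\{1,3,\dots,2n-1\}\setminus N$, the remaining odd residues modulo $2n$. Since $\lambda=-1\in\gf(q)$, the subfield subcode $\mathcal{C}(1,q^2+q+1)|_{\gf(q)}=\mathcal{C}(1,q^2+q+1)\cap\gf(q)^n$ is again a negacyclic code, now over $\gf(q)$. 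A $\gf(q)$-codeword $c(x)$ vanishes at $\delta^j$ for every $j\in T$, and because its coefficients lie in $\gf(q)$ it automatically vanishes at $\delta^{qj}$ as well; hence the defining set of $\mathcal{C}(1,q^2+q+1)|_{\gf(q)}$ is the smallest $q$-invariant set $\overline{T}\supseteq T$, i.e.\ the union of all $q$-cyclotomic cosets modulo $2n$ that meet $T$. Consequently $\mathcal{C}(1,q^2+q+1)|_{\gf(q)}=\{\mathbf 0\}$ precisely when $\overline{T}$ exhausts all odd residues, which happens if and only if no $q$-cyclotomic coset modulo $2n$ is entirely contained in $N$.

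So the whole proof reduces to computing the $q$-cyclotomic cosets modulo $2n$ of the four elements of $N$ and checking that none of them stays inside $N$. The useful congruences, all valid because $q$ is odd, are $q^4\equiv 1$, $q^3\equiv q^2-q+1$, and $q(q^2+q+1)\equiv 2q^2+1\equiv -1 \pmod{2(q^2+1)}$. A direct iteration then gives
\[
C_{1}^{(q,2n)}=\{1,\,q,\,q^2,\,q^2-q+1\}, \qquad C_{q^2+q+1}^{(q,2n)}=\{q^2+q+1,\,2q^2+1,\,2q^2-q+2,\,q^2+2\},
\]
and one checks the four listed residues in each coset are pairwise distinct for $q\geq 3$, so both cosets have size $4$. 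Since $1,q^2\in C_1^{(q,2n)}$ and $q^2+q+1,\,2q^2-q+2\in C_{q^2+q+1}^{(q,2n)}$, these two cosets already contain all four elements of $N$, hence they are exactly the $q$-cyclotomic cosets modulo $2n$ that meet $N$.

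Finally, $C_1^{(q,2n)}$ contains $q$, which is not in $N$, and $C_{q^2+q+1}^{(q,2n)}$ contains $q^2+2$ and $2q^2+1$, which are not in $N$; hence neither coset is contained in $N$, and since cosets are disjoint, no $q$-coset whatsoever can be contained in $N$. Therefore $\overline{T}$ is the full set of odd residues, the generator polynomial of the subfield subcode is $x^n+1$, and $\mathcal{C}(1,q^2+q+1)|_{\gf(q)}=\{\mathbf 0\}$, as claimed. The same conclusion can be reached without the defining-set formalism through the Delsarte theorem (Lemma~\ref{lem:oct10-05-01}), by showing ${\rm Tr}_{q^2/q}(\mathcal{C}(1,q^2+q+1)^\perp)=\gf(q)^n$, or directly from the trace representation \eqref{eq:cot10-5} by imposing $c_i^q=c_i$ for all $i$ and forcing $a=b=0$. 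I expect the only delicate points to be the careful reduction of exponents modulo $2(q^2+1)$ (where the oddness of $q$, hence $v_2(q^2+1)=1$, is what makes $q^4\equiv 1$ and $q(q^2+q+1)\equiv -1$ hold) and the clean justification that the subfield subcode's defining set is precisely the $q$-closure $\overline{T}$; once these are in place, the rest is immediate.
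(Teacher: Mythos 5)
Your proof is correct, and it takes a genuinely more direct route than the paper's. The paper argues on the dual side: it first identifies the zeros of $\mathcal{C}(1,q^2+q+1)^\perp$ as $\delta^{2q^2+1}$, $\delta^{q^2+2}$, $\delta^{q^2-q+1}$, $\delta^{q}$ via Lemma \ref{lem-sdjoin2}, then applies the Delsarte theorem (Lemma \ref{lem:oct10-05-01}) to write $\mathcal{C}(1,q^2+q+1)|_{\gf(q)}=\left({\rm Tr}_{q^2/q}(\mathcal{C}(1,q^2+q+1)^\perp)\right)^{\perp}$, and shows the trace code has no zeros at all (hence is $\gf(q)^{q^2+1}$) by checking that each candidate zero lies in a $q$-cyclotomic coset modulo $2(q^2+1)$ that also contains a nonzero of $\mathcal{C}(1,q^2+q+1)^\perp$. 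You instead stay on the primal side: the subfield subcode is itself negacyclic over $\gf(q)$ with defining set the $q$-closure $\overline{T}$ of the zero set $T$, and triviality is equivalent to no $q$-coset lying inside the nonzero set $N$. The computational heart is identical in the two arguments --- your cosets $C_1^{(q,2n)}=\{1,q,q^2,q^2-q+1\}$ and $C_{q^2+q+1}^{(q,2n)}=\{q^2+q+1,2q^2+1,2q^2-q+2,q^2+2\}$ are exactly the paper's $C_1^{(q,2(q^2+1))}$ and $C_{q^2+2}^{(q,2(q^2+1))}$, and both proofs reduce to observing that these cosets spill outside the four relevant exponents. What your framing buys is the complete elimination of the dual code, the trace code, and Delsarte's theorem, at the modest cost of justifying that the defining set of the subfield subcode is precisely $\overline{T}$; your justification (Frobenius applied to evaluations of polynomials with $\gf(q)$ coefficients, together with $x^{q^2+1}+1$ being squarefree since $\gcd(2(q^2+1),q)=1$) is sound, and your disjointness observation correctly rules out any coset other than these two being contained in $N$. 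A minor remark: as you note, all the exponent reductions need only that $q$ is odd, so your argument (like the paper's) does not actually use the subsection's standing hypothesis $q\equiv 1 \pmod 4$, which is needed only for the design-theoretic statements.
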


\begin{proof}
By the definition of $\mathcal{C}(1,q^2+q+1)$, it was shown earlier that $\delta$, $\delta^{q^2}$, $\delta^{q^2+q+1}$ and
$\delta^{2q^2+2-q}$
are all the nonzeros of $\mathcal{C}(1,q^2+q+1)$. Then by Lemma \ref{lem-sdjoin2}, all zeros of $\mathcal{C}(1,q^2+q+1)^\perp$ are $\delta^{2q^2+1}$, $\delta^{q^2+2}$, $\delta^{q^2-q+1}$ and $\delta^{q}$. According to Lemma \ref{lem-01}, the nonzeros of $\mathcal{C}(1,q^2+q+1)^\perp$ are the nonzeros of ${\rm Tr}_{q^2/q}(\mathcal{C}(1,q^2+q+1)^\perp)$. Then the possible zeros of ${\rm Tr}_{q^2/q}(\mathcal{C}(1,q^2+q+1)^\perp)$ are $\delta^{2q^2+1}$, $\delta^{q^2+2}$, $\delta^{q^2-q+1}$ or $\delta^{q}$. By Lemma \ref{lem:oct10-05-01} we know that
$$\mathcal{C}\left(1,q^2+q+1\right)\big|_{\gf(q)}=\left({\rm Tr}_{q^2/q}\left(\mathcal{C}\left(1,q^2+q+1\right)^\perp\right)\right)^{\perp}.$$
Hence, in order to obtain the desired result, we only need to show that $\delta^{2q^2+1}$, $\delta^{q^2+2}$, $\delta^{q^2-q+1}$ and $\delta^{q}$ are nonzeros of ${\rm Tr}_{q^2/q}(\mathcal{C}(1,q^2+q+1)^\perp)$.

Let the $q$-cyclotomic coset of $s$ modulo $2(q^2+1)$ be denoted by $C_s^{(q, 2(q^2+1))}$. Then
$$C_1^{(q, 2(q^2+1))}=\{1,\,\,q,\,\,q^2,\,\,q^2-q+1\}$$
and
$$C_{q^2+2}^{(q, 2(q^2+1))}=\{q^2+2,\,\,q^2+q+1,\,\,2q^2+2-q,\,\,2q^2+1\}.$$
If $\delta^{2q^2+1}$ is a zero of ${\rm Tr}_{q^2/q}(\mathcal{C}(1,q^2+q+1)^\perp)$, then from $C_{q^2+2}$ we know that $\delta^{q^2+q+1}$ is also a zero of ${\rm Tr}_{q^2/q}(\mathcal{C}(1,q^2+q+1)^\perp)$, which is contradictory to the fact that $\delta^{q^2+q+1}$ is a nonzero of  $\mathcal{C}(1,q^2+q+1)^\perp$. Similarly, we can show that $\delta^{q^2+2}$, $\delta^{q^2-q+1}$ and $\delta^{q}$ are nonzeros of ${\rm Tr}_{q^2/q}(\mathcal{C}(1,q^2+q+1)^\perp)$.
This completes the proof.
\end{proof}

\begin{remark}\label{remark-negacyccode1}
{\em  We have the following comments on the negacyclic code $\mathcal{C}(1,q^2+q+1)$ and its dual.
The comments below summarize the contributions of this subsection and explain why these codes are
interesting.
\begin{itemize}
\item The condition $q \equiv 1 \pmod{4}$ in Theorem \ref{theorem2} is necessary. The conclusions in Theorem \ref{theorem2}
          are not true in the case $q \equiv 3 \pmod{4}$.
\item
Since $\mathcal{C}(1,q^2+q+1)|_\gf(q)=\{\mathbf{0}\}$, the code $\mathcal{C}(1,q^2+q+1)$ over $\gf(q^2)$ is not a lifted
code over $\gf(q^2)$ of an ovoid code over $\gf(q)$.
\item By definition, any ovoid code over $\gf(q^2)$ has parameters $[q^4+1, 4, q^4-q^2]$. Hence, the negacyclic code
$\mathcal{C}(1,q^2+q+1)$ cannot be an ovoid code. This is also clear from the facts that the negacyclic code
$\mathcal{C}(1,q^2+q+1)$ has four nonzero weights and any ovoid code has two nonzero weights.
\item While any ovoid code meets the Griesmer bound, the negacyclic code
$\mathcal{C}(1,q^2+q+1)$ does not meet the Griesmer bound.
\item As shown above, the code $\mathcal{C}(1,q^2+q+1)$ over $\gf(q^2)$ is different from any ovoid code in several senses in coding theory. However, when $q \equiv 1 \pmod{4}$ the the minimum weight codewords in $\mathcal{C}(1,q^2+q+1)$ support a
$3$-$(q^2+1, q^2-q, (q^2-q^2-1)(q-2))$ design and minimum weight codewords in the elliptic quadric code over $\gf(q)$
support also a $3$-$(q^2+1, q^2-q, (q^2-q^2-1)(q-2))$ design. It is open if the two designs are isomorphic or not.
\item Any ovoid code over $\gf(q)$ corresponds to an ovoid in $\PG(3, \gf(q))$ \cite[Chapter 13]{Dingbook18}. However,
the authors are not aware of the geometric meaning of the negacyclic code
$\mathcal{C}(1,q^2+q+1)$.
\item The codes $\mathcal{C}(1,q^2+q+1)$ and their duals are the first two infinite families of negacyclic codes supporting
an infinite family of $3$-designs in the literature.
\item The codes $\mathcal{C}(1,q^2+q+1)$ may be the first family of negacyclic codes with four-weights in the literature.
\item The duals of the codes $\mathcal{C}(1,q^2+q+1)$ are almost MDS and distance-optimal.
\end{itemize}
}
\end{remark}

\begin{open}\label{open-22222}
Is the $3$-design supported by the minimum weight codewords in  the negacyclic code $\C(1,q^2+q+1)$ over $\gf(q^2)$ for $q\equiv 1 \pmod 4$ isomorphic to the $3$-design supported by  the minimum weight codewords in  the elliptic quadric code
over $\gf(q)$?
\end{open}

\begin{open}\label{open-22223}
Is there an infinite family of negacyclic ovoid codes over $\gf(q)$ for $q \equiv 1 \pmod{4}$?
\end{open}

\subsection{The second family of negacyclic codes supporting 3-designs }

Throughout this subsection, we always let $q$ be an odd prime power with $q\equiv 3 \pmod 4$. Let $\mathcal{C}(1,q^2-q+1)$ denote the negacyclic code of length $q^2+1$ over $\gf(q^2)$
with check polynomial $g'_{1}(x)g'_{q^2-q+1}(x)$. In this subsection, we consider the parameters of  $\mathcal{C}(1,q^2-q+1)$ and its dual, and prove that the code and its dual support 3-designs.  We prove that the subfield subcode  $\mathcal{C}(1,q^2-q+1)|_\gf(q)$ is an  ovoid negacyclic code.  We start with the following lemma.

\begin{lemma}\label{lem:aaaa1029}
Let $(a_0,a_1,a_2,a_3)
\in \gf(q^2)^4$.   Define
$$T_0=|\{x\,:\,a_0x+a_1x^{q^2}+a_2x^{q^2-q+1}+a_3x^{q^4-q^3+q^2}=0\,\,\text{and}\,\,x\in U_{2(q^2+1)}\}|$$
and
$$T_1=|\{y\,:\,a_0y+a_1y^q+a_2y^{q+1}+a_3=0\,\,\text{and}\,\,y\in U_{q^2+1}\}|.$$
Then $T_0=2T_1$.
\end{lemma}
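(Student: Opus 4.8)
The plan is to mirror the proof of Lemma \ref{lem:aaaa1027} exactly, since the two statements differ only in the exponent $q^2-q+1$ versus $q^2+q+1$, and the reduction machinery is insensitive to this change. The key idea is to reduce the exponents modulo $2(q^2+1)$ so that the quartic expression in $x$ over $U_{2(q^2+1)}$ collapses to a cubic expression in $y=x^{q+1}$ over $U_{q^2+1}$, then to count how many $x$ sit above each admissible $y$.

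First I would compute the reduced exponent. Since $x\in U_{2(q^2+1)}$ we have $x^{2(q^2+1)}=1$, and I claim $x^{q^4-q^3+q^2}=x^{q}$. Indeed, modulo $q^2+1$ one reduces $q^4-q^3+q^2\equiv 1-(-q)+(-1)=q$, and one must check the parity/sign bookkeeping modulo $2(q^2+1)$ carefully (this is the only place where $q\equiv 3\pmod 4$, rather than $q\equiv 1\pmod 4$, enters, paralleling the role of $q\equiv 1\pmod 4$ in the previous lemma). Granting this, I factor out $x^{-q}$:
\begin{equation*}
a_0x+a_1x^{q^2}+a_2x^{q^2-q+1}+a_3x^{q^4-q^3+q^2}
= x^{-q}\left(a_0x^{q+1}+a_1x^{q^2+q}+a_2x^{q^2+1}+a_3\right),
\end{equation*}
and since $x^{-q}\neq 0$, the zero set is unchanged after stripping this factor. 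Writing $y=x^{q+1}$ turns $x^{q+1}\mapsto y$, $x^{q^2+q}=(x^{q+1})^q\mapsto y^q$, and $x^{q^2+1}=x^{(q+1)q}\cdot x^{1-q}$; here I would verify that the bracketed polynomial becomes exactly $a_0y+a_1y^q+a_2y^{q+1}+a_3$ up to the substitution, matching the defining expression for $T_1$.

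Next I would invoke the counting step. Because $q\equiv 3\pmod 4$, Lemma \ref{lemma5} gives $\gcd\bigl(2(q^2+1),q+1\bigr)=2$ (this is the analogue of the computation in the proof of Lemma \ref{lem:aaaa1027}, and I expect it to go through identically since $v_2(q+1)\geq 2$ when $q\equiv 3\pmod 4$ makes $v_2(q+1)=v_2(2(q^2+1))$ fail in the right direction—so the gcd computation is the point to double-check). Consequently the map $x\mapsto x^{q+1}$ from $U_{2(q^2+1)}$ to $U_{q^2+1}$ is exactly two-to-one and surjective, so each solution $y\in U_{q^2+1}$ of $a_0y+a_1y^q+a_2y^{q+1}+a_3=0$ lifts to precisely two solutions $x\in U_{2(q^2+1)}$. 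This yields $T_0=2T_1$, completing the argument.

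\textbf{The main obstacle} will be the exponent reduction and the gcd computation, since the sign and $2$-adic bookkeeping modulo $2(q^2+1)$ is exactly where the hypothesis $q\equiv 3\pmod 4$ must be used, and getting $\gcd\bigl(2(q^2+1),q+1\bigr)=2$ (rather than a larger value) is what guarantees the clean two-to-one covering. Everything after that is a routine transcription of the previous lemma's proof.
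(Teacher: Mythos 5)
Your proposal has a genuine gap: the transplanted substitution $y=x^{q+1}$ fails here, and this is precisely the point where this lemma differs from Lemma \ref{lem:aaaa1027}. Your exponent reduction is right, but note it gives $x^{q^4-q^3+q^2}=x^{q}$ (indeed $q^4-q^3+q^2-q=q(q-1)(q^2+1)$ with $q(q-1)$ even, for every odd $q$, so no congruence condition enters here), \emph{not} $x^{-q}$. Consequently your displayed factorization is false: pulling out $x^{-q}$ turns the last term into $a_3x^{2q}$, not $a_3$, and the third term into $a_2x^{q^2+1}$, which is not $a_2y^{q+1}$ --- that identity needs the exponent $q^2+q+1$ of the previous lemma, since $x^{-q}\cdot y^{q+1}=x^{q^2+q+1}\neq x^{q^2-q+1}$. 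The failure is structural, not a bookkeeping slip: after factoring out $x$ the exponents are $\{0,\,q^2-1,\,q^2-q,\,q-1\}$, all multiples of $q-1$ rather than of $q+1$, and $x^{q-1}$ is not expressible as a power of $x^{q+1}$ on $U_{2(q^2+1)}$ because $q-1\equiv 2\pmod 4$ while $\gcd\bigl(q+1,2(q^2+1)\bigr)=4$. Your gcd claim is wrong in exactly the direction you flagged: for $q\equiv 3\pmod 4$ one has $v_2(q+1)\geq 2=v_2(2(q^2+1))$, so $\gcd\bigl(2(q^2+1),q+1\bigr)=4$ and $x\mapsto x^{q+1}$ is four-to-one onto $U_{(q^2+1)/2}$, not two-to-one onto $U_{q^2+1}$; the counting step collapses.

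What the hypothesis $q\equiv 3\pmod 4$ actually buys is $\gcd\bigl(2(q^2+1),q-1\bigr)=2$ (since $v_2(q-1)=1$ and $\gcd(q^2+1,q-1)=2$), and this is the paper's route: write the expression as $x\bigl(a_0+a_1x^{q^2-1}+a_2x^{q^2-q}+a_3x^{q-1}\bigr)$ and substitute $y=x^{q-1}$, a two-to-one map onto $U_{q^2+1}$, so the bracket becomes $a_0+a_1y^{q+1}+a_2y^{q}+a_3y$. One further step, which your proposal never reaches (and which the paper itself passes over silently), is that this bracket is still not literally the $T_1$ polynomial; the two counts agree because $y\mapsto y^q$ is a bijection of $U_{q^2+1}$ and, using $y^{q^2+1}=1$,
\begin{equation*}
y\left(a_0y^q+a_1y^{q^2}+a_2y^{q^2+q}+a_3\right)=a_0y^{q+1}+a_1+a_2y^{q}+a_3y,
\end{equation*}
so $y^q$ solves the $T_1$ equation if and only if $y$ solves the bracket equation. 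With the substitution $y=x^{q-1}$, the gcd value $2$, and this last observation, the two-to-one covering argument you outlined then goes through; as written, however, both pillars of your reduction (the bracket identity and the gcd) are incorrect.
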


\begin{proof}
Let $x \in U_{2(q^2+1)}$.
Then we have
$$x^{q^4-q^3+q^2}=x^{(q^4-q^3+q^2) \bmod 2(q^2+1)} =x^{q}.$$
Hence, we have
\begin{eqnarray}\label{eq:1029}
\lefteqn{ a_0x+a_1x^{q^2}+a_2x^{q^2-q+1}+a_3x^{q^4-q^3+q^2} } \nonumber \\
&=&  a_0x+a_1x^{q^2}+a_2x^{q^2-q+1}+a_3x^q \nonumber \\
&=& x(a_0+a_1x^{q^2-1}+a_2x^{q^2-q}+a_3x^{q-1}).
\end{eqnarray}

Let $y=x^{q-1}$. It follows from $q\equiv 3 \pmod 4$ and Lemma \ref{lemma5} that $\gcd(2(q^2+1),q-1)=2$.
Thus,  $y$ takes on each element in $U_{q^2+1}$ twice when $x$ runs over all elements in $U_{2(q^2+1)}$.
Hence, from the equation in (\ref{eq:1029}) we  have
\begin{equation*}
\begin{split}
T_0&=\left|\{x\,:\,a_0+a_1y^{q+1}+a_2y^{q}+a_3y=0,\,\,x\in U_{2(q^2+1)}\,\,\text{and}\,\,y=x^{q-1}\}\right|\\
&=2\left|\{y\,:\,a_0y+a_1y^q+a_2y^{q+1}+a_3=0\,\,\text{and}\,\,y\in U_{(q^2+1)}\}\right|=2T_1.
\end{split}
\end{equation*}
This completes the proof.
\end{proof}

With an analysis similar as the proof of Lemma \ref{lem:16}, combining Lemma \ref{conj-21march338} with Lemma \ref{lem:aaaa1029}, one can prove the following lemma.

\begin{lemma}\label{lem:1029}
$\mathcal{C}(1,q^2-q+1)$ is a $[q^2+1, 4]$ negacyclic code over $\gf(q^2)$ whose possible nonzero weights are in the set $\{q^2+1,q^2,q^2-1,q^2-q\}$.
\end{lemma}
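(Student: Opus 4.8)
The plan is to mirror the structure of the proof of Lemma~\ref{lem:16} exactly, since Lemma~\ref{lem:1029} is the analogue for the second family $\mathcal{C}(1,q^2-q+1)$ with $q \equiv 3 \pmod 4$. First I would establish that the dimension of $\mathcal{C}(1,q^2-q+1)$ is $4$. This follows immediately from the beginning of Section~\ref{sec-negacyclicc}, where it was shown that $g'_1(x) = (x-\delta)(x-\delta^{q^2})$ and $g'_{q^2-q+1}(x) = (x-\delta^{q^2-q+1})(x-\delta^{q})$ are two distinct quadratic divisors of $x^n+1$, so that $\deg(g'_1(x)g'_{q^2-q+1}(x))=4$, which equals the dimension of the code.

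Next I would write down the trace representation of $\mathcal{C}(1,q^2-q+1)$ using Lemma~\ref{lem-01}, obtaining codewords of the form $\mathbf{c}(a,b)=\left({\rm Tr}_{q^4/q^2}\left(a\delta^{-i}+b\delta^{-(q^2-q+1)i}\right)\right)_{i=0}^{q^2}$ with $a,b\in\gf(q^4)$. Setting $x=\delta^{-i}\in U_{2(q^2+1)}$, I would expand ${\rm Tr}_{q^4/q^2}(ax+bx^{q^2-q+1})$ as $ax+a^{q^2}x^{q^2}+bx^{q^2-q+1}+b^{q^2}x^{q^2(q^2-q+1)}$. The two observations used in Lemma~\ref{lem:16} carry over verbatim: if $\delta'\in U_{2(q^2+1)}$ is a zero of this trace then so is $-\delta'$, and exactly one of $\delta''$, $-\delta''$ lies in $\{\delta^i : 0\le i\le q^2\}$. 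Hence $\wt(\mathbf{c}(a,b)) = q^2+1 - T_0/2 = q^2+1 - T_1$, where $T_0,T_1$ are as in Lemma~\ref{lem:aaaa1029}.

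The final step is to count the possible values of $T_1$. Here I would invoke Lemma~\ref{conj-21march338}: the quantity $T_1$ is precisely the number of solutions $y\in U_{q^2+1}$ of an equation of the shape $a_0 y + a_1 y^q + a_2 y^{q+1} + a_3 = 0$, which (after matching the coefficients $a,a^{q^2},b,b^{q^2}$ to $a_0,a_1,a_2,a_3$) is an instance of equation~(\ref{eq:sol}) with base field $\gf(q^2)$ in place of $\gf(q)$ and exponent $p^s$ replaced by $q=p^m$. Thus $T_1 \in \{0,1,2,p^{\gcd(m,m)}+1\} = \{0,1,2,q+1\}$. Substituting into $\wt(\mathbf{c}(a,b)) = q^2+1-T_1$ yields the possible nonzero weights $\{q^2+1, q^2, q^2-1, q^2-q\}$, as claimed.

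The main subtlety is the application of Lemma~\ref{lem:aaaa1029}, which crucially requires $q\equiv 3 \pmod 4$ so that $\gcd(2(q^2+1),q-1)=2$ and the substitution $y=x^{q-1}$ covers each element of $U_{q^2+1}$ exactly twice; this is the analogue of the role $q\equiv 1 \pmod 4$ played for the first family, and it is the only place where the congruence condition enters. Since Lemma~\ref{lem:aaaa1029} is already proved and Lemma~\ref{conj-21march338} applies uniformly, the remaining work is routine bookkeeping, and I would simply remark that the argument is identical to that of Lemma~\ref{lem:16} with Lemma~\ref{lem:aaaa1029} substituted for Lemma~\ref{lem:aaaa1027}, rather than repeating the details.
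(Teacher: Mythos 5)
Your proposal is correct and follows exactly the route the paper intends: the paper itself gives no written proof of this lemma, saying only that it follows ``with an analysis similar as the proof of Lemma~\ref{lem:16}, combining Lemma~\ref{conj-21march338} with Lemma~\ref{lem:aaaa1029},'' and your write-up is precisely that argument, with the congruence $q\equiv 3\pmod 4$ entering only through $\gcd(2(q^2+1),q-1)=2$ in Lemma~\ref{lem:aaaa1029}. One immaterial slip: when applying Lemma~\ref{conj-21march338} with base field $\gf(q^2)=\gf(p^{2m})$ and exponent $p^m$, the solution count is $p^{\gcd(m,2m)}+1$ rather than $p^{\gcd(m,m)}+1$, but both equal $q+1$, so your conclusion stands.
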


With an analysis similar as the proof of Lemma \ref{conj-21oct1010}, from Lemma \ref{lem:1029} one can prove the following result.

\begin{lemma}\label{conj-21oct338}
The dual code $\mathcal{C}(1,q^2-q+1)^\perp$ over $\gf(q^2)$ has parameters $[q^2+1,q^2-3,4]$.
\end{lemma}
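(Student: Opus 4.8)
The plan is to follow the template of the proof of Lemma~\ref{conj-21oct1010} in structure, changing only the exponent $q^2+q+1$ to $q^2-q+1$ and the hypothesis $q\equiv 1\pmod 4$ to $q\equiv 3\pmod 4$. First I would record the dimension: Lemma~\ref{lem:1029} gives that $\mathcal{C}(1,q^2-q+1)$ is a $[q^2+1,4]$ code, so its dual has dimension $q^2+1-4=q^2-3$. Writing $d^\perp$ for the minimum distance of the dual, the Singleton bound gives $d^\perp\le (q^2+1)-(q^2-3)+1=5$; if $d^\perp=5$ the dual would be MDS, forcing $\mathcal{C}(1,q^2-q+1)$ to be a $[q^2+1,4,q^2-2]$ MDS code, which contradicts the weight $q^2-q<q^2-2$ produced in Lemma~\ref{lem:1029}. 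Hence $2\le d^\perp\le 4$, exactly as in the opening of the proof of Lemma~\ref{conj-21oct1010}, and it remains to exclude $d^\perp=2$ and $d^\perp=3$.

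For $d^\perp=2$, a weight-$2$ dual codeword yields, after normalizing one coordinate, the system
\begin{equation*}
1+a_1\delta^{i}=0, \qquad 1+a_1\delta^{(q^2-q+1)i}=0
\end{equation*}
for some $a_1\in\gf(q^2)^*$ and $1\le i\le q^2$. Eliminating $a_1$ gives $\delta^{(q^2-q)i}=1$. Since $\delta$ has order $2(q^2+1)$ and, using $q\equiv 3\pmod 4$ together with Lemma~\ref{lemma5}, one has $\gcd\!\big(q(q-1),\,2(q^2+1)\big)=2$, this forces $(q^2+1)\mid i$, which is impossible for $1\le i\le q^2$. Therefore $d^\perp\ne 2$. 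This mirrors the first half of the proof of Lemma~\ref{conj-21oct1010}, the only change being the single sign in the exponent and the congruence used to evaluate the gcd.

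The substantive part, and the step I expect to be the main obstacle, is excluding $d^\perp=3$. Here a weight-$3$ dual codeword gives
\begin{equation*}
1+a_1\delta^{i_1}+a_2\delta^{i_2}=0, \qquad 1+a_1\delta^{(q^2-q+1)i_1}+a_2\delta^{(q^2-q+1)i_2}=0
\end{equation*}
with $a_1,a_2\in\gf(q^2)^*$ and $1\le i_1\ne i_2\le q^2$. Using $\delta^{(q^2+1)i}=(-1)^i$ I would rewrite $\delta^{(q^2-q+1)i}=(-1)^i\delta^{-qi}$ (the mirror of $\delta^{(q^2+q+1)i}=(-1)^i\delta^{qi}$ used before), reducing the second equation to $1+(-1)^{i_1}a_1\delta^{-qi_1}+(-1)^{i_2}a_2\delta^{-qi_2}=0$. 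I would then run the same three-case split as in Lemma~\ref{conj-21oct1010}, according to whether $a_1^{\,q}-(-1)^{i_1}a_1$ and $a_2^{\,q}-(-1)^{i_2}a_2$ vanish. The Galois tools are unchanged: $\delta^{q^2i}=(-1)^{i}\delta^{-i}$, the identity $a_j^{\,q^2}=a_j$ for $a_j\in\gf(q^2)$, and the fact that $\delta^{(q^2+1)/2}$ is a primitive fourth root of unity. The only genuine difference is that for $q\equiv 3\pmod 4$ we have $\delta^{(q^2+1)/2}\in\gf(q^2)\setminus\gf(q)$ rather than in $\gf(q)$; but every place where Lemma~\ref{conj-21oct1010} invoked $\delta^{(q^2+1)/2}\in\gf(q)$ in fact only needs membership in $\gf(q^2)$ to conclude that a quantity such as $\delta^{i_2}=-1/(a_2+a_1\delta^{(q^2+1)/2})$ lies in $\gf(q^2)$, so the argument still produces $\delta^{(q^2-1)i_2}=1$ and hence $i_2=\tfrac{q^2+1}{2}$, contradicting $i_1\ne i_2$ and the range $1\le i_1\le q^2$. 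Collecting the three cases forces $d^\perp\ne 3$, and combined with the previous two steps this yields $d^\perp=4$. The delicate bookkeeping in that three-case analysis---tracking the parity-dependent signs $(-1)^{i_1},(-1)^{i_2}$ through the $q$-th and $q^2$-th power maps and checking that each case collapses to $i_1=i_2$ or to an out-of-range index---is where the real effort lies, while the dimension and $d^\perp=2$ steps are routine once the gcd identities of Lemma~\ref{lemma5} are in hand.
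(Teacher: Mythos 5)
Your route is the one the paper itself prescribes (it omits this proof, deferring to ``an analysis similar to Lemma~\ref{conj-21oct1010}''), and your dimension count and exclusion of $d^\perp=2$ are correct --- in fact cleaner than in the first family, since dividing the two equations eliminates $a_1$ directly and gives $\delta^{q(q-1)i}=1$, whence $(q^2+1)\mid i$ because $\gcd(q(q-1),2(q^2+1))=2$ for $q\equiv 3\pmod 4$. (One small imprecision: Lemma~\ref{lem:1029} only lists \emph{possible} weights, so the right way to rule out $d^\perp=5$ is that an MDS code would have to attain weight $q^2-2$, which is absent from the set $\{q^2+1,q^2,q^2-1,q^2-q\}$.) The genuine problem is in your plan for $d^\perp=3$: the three-case split on the vanishing of $a_j^q-(-1)^{i_j}a_j$ cannot be run here, because the elimination that produced those quantities in Lemma~\ref{conj-21oct1010} is unavailable. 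There, the $q$-th power of the first equation involved $\delta^{qi_j}$, matching the terms $(-1)^{i_j}a_j\delta^{qi_j}$ of the second equation, and subtraction gave the two-term relation (\ref{eq:aadf}). Here your rewritten second equation is $1+(-1)^{i_1}a_1\delta^{-qi_1}+(-1)^{i_2}a_2\delta^{-qi_2}=0$, with exponents $-qi_j$, so subtracting it from the $q$-th power of the first equation leaves a four-term relation and no analogue of (\ref{eq:aadf}). The correct move is to raise the \emph{second} equation to the $q$-th power: since $\delta^{-q^2 i}=(-1)^{i}\delta^{i}$, the parity signs cancel and one obtains $1+a_1^q\delta^{i_1}+a_2^q\delta^{i_2}=0$; subtracting the first equation yields
\[
(a_1^q-a_1)\delta^{i_1}+(a_2^q-a_2)\delta^{i_2}=0,
\]
so the trichotomy must be on whether $a_j\in\gf(q)$, with no parity twists at all. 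As stated, your case analysis would stall at its first step.

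With that correction the argument closes, and more simply than you anticipate. If $a_1\in\gf(q)$, the displayed relation forces $a_2\in\gf(q)$; multiplying $1+a_1\delta^{i_1}=-a_2\delta^{i_2}$ by its $q^2$-conjugate $1+(-1)^{i_1}a_1\delta^{-i_1}=-(-1)^{i_2}a_2\delta^{-i_2}$ gives $\delta^{i_1}+(-1)^{i_1}\delta^{-i_1}\in\gf(q)$, hence $\delta^{2i_1}+\delta^{-2i_1}\in\gf(q)$, hence $\delta^{2i_1}\in\gf(q^2)\cap U_{q^2+1}=\{\pm 1\}$, forcing $i_1=\frac{q^2+1}{2}$; the same applies to $i_2$, contradicting $i_1\neq i_2$ (this parallels the paper's Case 1). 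If instead both differences are nonzero, then $\delta^{i_1-i_2}=-(a_2^q-a_2)/(a_1^q-a_1)$ lies in $\gf(q)$, because each difference is negated by $x\mapsto x^q$; and $\gcd(q-1,2(q^2+1))=2$ then forces $(q^2+1)\mid(i_1-i_2)$, an immediate contradiction. So the substitution step you singled out as the delicate one --- where the first family's proof used $\delta^{(q^2+1)/2}\in\gf(q)$ --- never arises: your observation that membership in $\gf(q^2)$ would suffice there is correct, but moot, since in this family Case~3 terminates before any such substitution is needed.
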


With the preparations above, one can prove the following theorem. The proof is similar to that of Theorem \ref{theorem2}
and is omitted here.

\begin{theorem}\label{theorem4}
Let $q\equiv 3 \pmod 4$. Then the following statements hold.
\begin{enumerate}
\item[{\rm (i)}] The code $\mathcal{C}(1,q^2-q+1)$ over $\gf(q^2)$ has parameters $[q^2+1,4,q^2-q]$ and weight enumerator
\begin{equation*}
\begin{split}
1+(q^5-q)z^{q^2-q}+\frac{(q^4-1)(q-1)q^3}{2}(z^{q^2-1}+z^{q^2+1})+(q^7-q^5+q^4-q^3+q-1)z^{q^2}.
\end{split}
\end{equation*}
Furthermore, the minimum weight codewords of $\mathcal{C}(1,q^2-q+1)$ support a $3$-$(q^2+1,q^2-q,\lambda)$ design with
$$\lambda=(q^2-q-1)(q-2).$$
The complementary design of this design is a Steiner system $S(3,q+1, q^2+1)$.

\item[{\rm (ii)}] The code $\mathcal{C}(1,q^2-q+1)^\perp$ over $\gf(q^2)$ has parameters $[q^2+1,q^2-3,4]$ and
the minimum weight codewords of $\mathcal{C}(1,q^2-q+1)^\perp$ support a $3$-$(q^2+1,4,q-2)$ design.
\end{enumerate}
\end{theorem}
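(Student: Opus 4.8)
The plan is to reproduce, essentially verbatim, the argument used for Theorem~\ref{theorem2}, since the two families differ only in the choice of the second nonzero and all the decisive structural facts have already been transferred to the present setting by Lemmas~\ref{lem:1029} and~\ref{conj-21oct338}. First I would record from Lemma~\ref{lem:1029} that $\mathcal{C}(1,q^2-q+1)$ is a $[q^2+1,4]$ code whose only possible nonzero weights are $w_0=q^2+1$, $w_1=q^2$, $w_2=q^2-1$ and $w_3=q^2-q$, and from Lemma~\ref{conj-21oct338} that $\mathcal{C}(1,q^2-q+1)^\perp$ has minimum distance $4$, so that $A_1^\perp=A_2^\perp=A_3^\perp=0$. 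The hypothesis $q\equiv 3\pmod 4$ has already been consumed inside Lemma~\ref{lem:aaaa1029} (through $\gcd(2(q^2+1),q-1)=2$) and therefore inside Lemma~\ref{lem:1029}, so it does not reappear explicitly in what follows.

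With these inputs the first four Pless power moments yield exactly the same linear system in $A_{w_0},A_{w_1},A_{w_2},A_{w_3}$ as in the proof of Theorem~\ref{theorem2}, because that system depends only on $n=q^2+1$, the dimension $4$, the alphabet size $q^2$, the four weights above, and the vanishing of $A_1^\perp,A_2^\perp,A_3^\perp$, all of which coincide here. Solving it gives $A_{w_3}=q^5-q$, $A_{w_0}=A_{w_2}=(q^4-1)(q-1)q^3/2$ and $A_{w_1}=q^7-q^5+q^4-q^3+q-1$, which is the claimed weight enumerator; in particular the minimum distance is $q^2-q$, so the parameters are $[q^2+1,4,q^2-q]$.

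For the design conclusions I would apply the Assmus--Mattson theorem with $t=3$, taking $\mathcal{C}(1,q^2-q+1)^\perp$ as the base code (so that its dual is $\mathcal{C}(1,q^2-q+1)$, of minimum distance $q^2-q$). The quantity $s$ is then the number of nonzero weights of $\mathcal{C}(1,q^2-q+1)$ lying in the range $[1,n-t]=[1,q^2-2]$; among $q^2-q,\,q^2-1,\,q^2,\,q^2+1$ only $q^2-q$ falls in this range, so $s=1\le d-t=4-3$. This single inequality simultaneously forces the weight-$4$ codewords of $\mathcal{C}(1,q^2-q+1)^\perp$ and the minimum-weight codewords of $\mathcal{C}(1,q^2-q+1)$ to hold simple $3$-designs. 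Counting blocks by $b=A_{q^2-q}/(q^2-1)=q^3+q$ and feeding $b$ into~(\ref{eq:kbt}) gives $\lambda=(q^2-q-1)(q-2)$ for the $3$-$(q^2+1,q^2-q,\lambda)$ design, while the same block count together with~(\ref{eq:kbt}) shows its complement is a $3$-$(q^2+1,q+1,1)$ design, i.e. the Steiner system $S(3,q+1,q^2+1)$. Finally the fifth Pless power moment (legitimate here precisely because $A_1^\perp=A_2^\perp=A_3^\perp=0$) produces $A_4^\perp=q^2(q-2)(q^2+1)(q^2-1)^2/24$, and Assmus--Mattson then yields the $3$-$(q^2+1,4,q-2)$ design of part~(ii).

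The only genuinely non-routine point, and hence the step I would treat most carefully, is the verification that Assmus--Mattson actually applies. The correct bookkeeping is to run the theorem with the high-dimensional code $\mathcal{C}(1,q^2-q+1)^\perp$ playing the role of the ``$C$'' in the theorem statement, because only then does $s$ count the few nonzero weights of the four-weight code $\mathcal{C}(1,q^2-q+1)$ rather than the many weights of its dual; this is exactly what makes $s=1$ and keeps the hypothesis $s\le d-t$ satisfied. Everything else is a mechanical transcription of the Theorem~\ref{theorem2} computation, so I would present the Pless-moment solution, the two design counts, and the $A_4^\perp$ evaluation, and omit the routine arithmetic, just as the authors do.
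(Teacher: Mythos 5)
Your proposal is correct and follows exactly the route the paper intends: the paper explicitly omits the proof of Theorem~\ref{theorem4}, stating it is analogous to that of Theorem~\ref{theorem2}, and your argument is a faithful transcription of that proof using Lemmas~\ref{lem:1029} and~\ref{conj-21oct338}, the first four Pless power moments, the Assmus--Mattson theorem, the block counts $b=A_{q^2-q}/(q^2-1)=q^3+q$ with equation~(\ref{eq:kbt}), and the fifth power moment for $A_4^{\perp}$. Your careful bookkeeping of the Assmus--Mattson application (taking the high-dimensional code $\mathcal{C}(1,q^2-q+1)^{\perp}$ as the base code so that $s=1\leq d-t$) is in fact more explicit than the paper's own wording and is the correct way to make the cited argument rigorous.
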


\begin{example}
Let $q=7$. Then $\mathcal{C}(1,43)$ has parameters $[50, 4, 42]$ and weight enumerator
$$1+16800x^{42}+2469600x^{48}+808800x^{49}+2469600x^{50}.$$
The dual code $\mathcal{C}(1,43)^\perp$ has parameters $[50, 46,4]$. The codewords of weight $42$ in $\mathcal{C}(1,43)$ support a $3$-$(50,42,205)$ design, and the codewords of weight $4$ in $\mathcal{C}(1,43)^\perp$ support a $3$-$(50,4,5)$ design.
\end{example}

\begin{example}
Let $q=11$. Then $\mathcal{C}(1,111)$ has parameters $[122, 4, 110]$ and weight enumerator
$$1+161040x^{110}+97429200x^{120}+19339440x^{121}+97429200x^{122}.$$
The dual code $\mathcal{C}(1,111)^\perp$ has parameters $[122, 110,4]$. The codewords of weight $110$ in $\mathcal{C}(1,111)$ support a $3$-$(122,110,981)$ design, and the codewords of weight $4$ in $\mathcal{C}(1,111)^\perp$ support a $3$-$(122,4,9)$ design.
\end{example}

We now settle the parameters of the subfield subcode over $\gf(q)$ of the negacyclic code $\mathcal{C}\left(1,q^2-q+1\right)$ over $\gf(q^2)$.

\begin{theorem}\label{theorem5}
The subfield subcode $\mathcal{C}\left(1,q^2-q+1\right)|_{\gf(q)}$ has parameters $[q^2+1,4,q^2-q]$ and weight enumerator
$$1+(q^2-q)(q^2+1)z^{q^2-q}+(q-1)(q^2+1)z^{q^2}.$$
The dual code of $\mathcal{C}\left(1,q^2-q+1\right)|_{\gf(q)}$ has parameters $[q^2+1,q^2-3,4]$.
\end{theorem}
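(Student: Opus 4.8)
The plan is to identify $\mathcal{C}(1,q^2-q+1)|_{\gf(q)}$ with a negacyclic \emph{ovoid} code over $\gf(q)$ and then to read off its parameters, weight enumerator and dual from Theorem~\ref{thm-ovoidcodeinf}. First I would pin down the dimension. The nonzeros of $\mathcal{C}(1,q^2-q+1)$ over $\gf(q^2)$ are the $\delta^i$ with $i\in\{1,q,q^2,q^2-q+1\}$, and this set is exactly the single $q$-cyclotomic coset $C_1^{(q,2(q^2+1))}=\{1,q,q^2,q^2-q+1\}$ computed in the proof of Theorem~\ref{theorem3}. Because this nonzero set is a \emph{complete} $q$-coset, its complement (the zero set of $\mathcal{C}(1,q^2-q+1)$ among the odd exponents modulo $2(q^2+1)$) is itself a union of $q$-cosets, hence $q$-closed. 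Since the subfield subcode is the $\gf(q)$-negacyclic code whose zero set is the $q$-closure of the parent zero set, this zero set is unchanged and $\mathcal{C}(1,q^2-q+1)|_{\gf(q)}$ is precisely the negacyclic code over $\gf(q)$ with check polynomial $g'_1(x)g'_{q^2-q+1}(x)$, the minimal polynomial of $\delta$ over $\gf(q)$, so its dimension is $4$. (This is exactly the phenomenon that fails in Theorem~\ref{theorem3}, where the analogous nonzero set is not $q$-closed, its $q$-closure swallows everything, and the subfield subcode collapses to the zero code.)

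Next I would write down the weights. By Lemma~\ref{lem-01} the subfield subcode has the single-orbit trace representation $\mathbf{c}(a)=(\Tr_{q^4/q}(a\delta^{-i}))_{i=0}^{q^2}$ with $a\in\gf(q^4)$, i.e. it is the restriction of the parent representation to $b=a^{q^3}$. Following the reduction in the proof of Lemma~\ref{lem:1029} (via Lemma~\ref{lem:aaaa1029}), one obtains $\wt(\mathbf{c}(a))=q^2+1-T_1(a)$, where $T_1(a)$ counts the $y\in U_{q^2+1}$ with $\psi_a(y):=ay+a^{q^2}y^q+a^{q^3}y^{q+1}+a^q=0$. Applying Lemma~\ref{conj-21march338} over $\gf(q^2)$ with Frobenius exponent $q$, exactly as in the proof of Lemma~\ref{lem:1029}, confines $T_1(a)$ to $\{0,1,2,q+1\}$ and the nonzero weights to $\{q^2-q,\,q^2-1,\,q^2,\,q^2+1\}$.

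The hard part will be to show that only the values $T_1(a)\in\{1,q+1\}$ actually occur, i.e. that the weights $q^2\pm1$ are absent and the code has just the two ovoid weights $q^2-q$ and $q^2$. Geometrically this says that the $q^2+1$ pairwise non-proportional points $\delta^{-i}$ form an ovoid in $\PG(3,\gf(q))$, every hyperplane $\Tr_{q^4/q}(a\,\cdot\,)=0$ meeting them in $1$ or $q+1$ points. The useful structural input is the Hermitian identity $\psi_a(y)=y^{q+1}\psi_a(y)^{q^2}$, valid for $y\in U_{q^2+1}$, which reflects the underlying unitary action on the norm-one torus $U_{q^2+1}$; the plan is to exploit it to rule out $T_1(a)=0$ and $T_1(a)=2$, either by exhibiting the elliptic quadric on which the points $\delta^{-i}$ lie (for odd $q$ every ovoid is an elliptic quadric) or by transporting $\psi_a$ to a normal form under the unitary group. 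Once this two-weight property is established, $\mathcal{C}(1,q^2-q+1)|_{\gf(q)}$ is an ovoid code and Theorem~\ref{thm-ovoidcodeinf} delivers both the stated weight enumerator $1+(q^2-q)(q^2+1)z^{q^2-q}+(q-1)(q^2+1)z^{q^2}$ and the dual parameters $[q^2+1,q^2-3,4]$.

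A fully computational alternative, closer to the earlier proofs, would instead establish $d^{\perp}=4$ directly by an argument parallel to Lemma~\ref{conj-21oct338} (excluding dual words of weight $2$ and $3$, using that the dual zeros form the $q$-coset $C_{q^2+2}^{(q,2(q^2+1))}$). With $A_1^{\perp}=A_2^{\perp}=A_3^{\perp}=0$ in hand, the first four Pless power moments become a nonsingular Vandermonde system in the four unknowns $A_{q^2-q},A_{q^2-1},A_{q^2},A_{q^2+1}$, whose unique solution forces $A_{q^2-1}=A_{q^2+1}=0$ and reproduces the weight enumerator, after which the dual parameters follow from $\dim=4$ and $d^{\perp}=4$. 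Either route reduces to the same essential obstacle, namely proving the $\{1,q+1\}$ intersection dichotomy (equivalently $d^{\perp}=4$); this is where the arithmetic of $q\equiv 3\pmod 4$ and the torus $U_{q^2+1}$ must enter in an essential way.
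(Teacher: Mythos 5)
Your dimension computation is correct and is essentially the paper's: you observe that the nonzero set $\{1,q,q^2,q^2-q+1\}$ is the full $q$-cyclotomic coset $C_1^{(q,2(q^2+1))}$, so the subfield subcode is the negacyclic code over $\gf(q)$ with check polynomial the degree-$4$ minimal polynomial of $\delta$ over $\gf(q)$; the paper reaches the same conclusion from the dual side, computing $C_{q^2+2}^{(q,2(q^2+1))}$ and applying Delsarte's theorem (Lemma \ref{lem:oct10-05-01}). Likewise your reduction of the weights to the set $\{q^2-q,\,q^2-1,\,q^2,\,q^2+1\}$ is fine (it also follows at once from $\mathcal{C}(1,q^2-q+1)|_{\gf(q)}\subseteq\mathcal{C}(1,q^2-q+1)$ and Lemma \ref{lem:1029}). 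The genuine gap is the step you yourself label the ``hard part'': you never prove the dichotomy $T_1(a)\in\{1,q+1\}$ --- equivalently the two-weight property, equivalently $d^{\perp}=4$ --- but only sketch two unexecuted strategies (a geometric/unitary normal-form analysis on $U_{q^2+1}$, or a replay of Lemma \ref{conj-21oct338} over $\gf(q)$ followed by Pless power moments) and declare it an ``essential obstacle.'' As it stands, the weight enumerator and the dual parameters, i.e.\ the actual content of the theorem, are not established.

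The paper's proof shows this obstacle dissolves: since the subcode lies in the parent code, Lemma \ref{lem:1029} gives $d\geq q^2-q$, while the Griesmer bound for a $[q^2+1,4]$ code over $\gf(q)$ gives $d\leq q^2-q$ (a dimension-$4$ code with $d=q^2-q+1$ would need length at least $(q^2-q+1)+q+1+1=q^2+3$). Hence the subcode has parameters $[q^2+1,4,q^2-q]$, and by the classification of ovoid codes recalled in Theorem \ref{thm-ovoidcodeinf} --- any $[q^2+1,4,q^2-q]$ code over $\gf(q)$ is an ovoid code --- the two-weight enumerator and the dual parameters $[q^2+1,q^2-3,4]$ follow immediately, with no need to prove the intersection dichotomy or $d^{\perp}=4$ separately. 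You had every ingredient in hand (the weight set, hence the lower bound on $d$, and Theorem \ref{thm-ovoidcodeinf} as your stated target), but you inverted the logic, treating the ovoid classification as applicable only after the two-weight property is proven, when it applies on the strength of the parameters alone. Your fallback route would be logically sound if completed: given $d^{\perp}=4$, the first four Pless moments do determine the four unknowns uniquely and force $A_{q^2-1}=A_{q^2+1}=0$; but proving $d^{\perp}=4$ directly is precisely the work you were trying to avoid, and in the paper it comes for free from the classification.
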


\begin{proof}
By the definition of $\mathcal{C}\left(1,q^2-q+1\right)$, it is easy to verify that $\delta$, $\delta^q$, $\delta^{q^2}$ and $\delta^{q^2-q+1}$
are all nonzeros of $\mathcal{C}\left(1,q^2-q+1\right)$. Then from Lemma \ref{lem-sdjoin2}, all zeros of
$\mathcal{C}\left(1,q^2-q+1\right)^\perp$ are $\delta^{2q^2+1}$, $\delta^{2q^2-q+2}$, $\delta^{q^2+2}$ and $\delta^{q^2+q+1}$.

Since $q\equiv 3 \pmod 4$, it is easy to verify that the $q$-cyclotomic coset $C_{q^2+2}^{(q, 2(q^2+1))}$ modulo $2(q^2+1)$ is given by
$$C_{q^2+2}^{(q, 2(q^2+1))}=\left\{q^2+2,\,q^2+q+1,\,2q^2-q+2,\,2q^2+1\right\}.$$
Then it follows from Lemma \ref{lem-01} that ${\rm Tr}_{q^2/q}\left(\mathcal{C}\left(1,q^2-q+1\right)^\perp\right)$ is a negacyclic code over $\gf(q)$ with generator polynomial
$$\left(x-\delta^{q^2+2}\right)\left(x-\delta^{q^2+q+1}\right)\left(x-\delta^{2q^2-q+2}\right)\left(x-\delta^{2q^2+1}\right).$$
By Lemma \ref{lem:oct10-05-01} we know that
$$\mathcal{C}\left(1,q^2-q+1\right)\big|_{\gf(q)}=\left({\rm Tr}_{q^2/q}\left(\mathcal{C}\left(1,q^2-q+1\right)^\perp\right)\right)^{\perp}.$$
Hence, the dimension of $\mathcal{C}\left(1,q^2-q+1\right)\big|_{\gf(q)}$ is $4$.

From Lemma \ref{lem:1029}, $\mathcal{C}\left(1,q^2-q+1\right)$ has minimum distance $q^2-q$. Then it follows from the definition of subfield subcodes that the minimum distance $d\left(\mathcal{C}\left(1,q^2-q+1\right)|_{\gf(q)}\right)\geq q^2-q$. From the Griesmer bound, we know that $d\left(\mathcal{C}\left(1,q^2-q+1\right)|_{\gf(q)}\right)\leq q^2-q$. Hence, we deduce that $\mathcal{C}\left(q,q^2-q+1\right)|_{\gf(q)}$ has parameters $[q^2+1,4,q^2-q]$. This means that $\mathcal{C}\left(q,q^2-q+1\right)|_{\gf(q)}$ is an ovoid code. The weight distribution of $\mathcal{C}\left(q,q^2-q+1\right)|_{\gf(q)}$ and the parameters of the dual code then follow from those of ovoid codes and their duals. This competes the proof.
\end{proof}

\begin{remark}\label{remark-negacycliccode2}
{\em
We have the following comments on the negacyclic codes $\mathcal{C}(1,q^2-q+1)$ and its duals.
These comments show the contributions of this subsection and explain why the negacyclic codes
$\mathcal{C}(1,q^2-q+1)$ and $\mathcal{C}(1,q^2-q+1)|_\gf(q)$ and their duals are interesting.
\begin{itemize}
\item The condition $q \equiv 3 \pmod{4}$ in Theorem \ref{theorem4} is necessary. Otherwise, the conclusions of this theorem
are not true.
\item It follows from Theorems \ref{thm-ovoidcodeinf} and \ref{theorem5} that the $3$-designs supported by the minimum
weight codewords in the code $\mathcal{C}(1,q^2-q+1)$ and its subfield subcode   $\mathcal{C}(1,q^2-q+1)|_\gf(q)$ are the same
and are isomorphic to the $3$-design supported by the minimum weight codewords in the elliptic quadric code.  A novelty of
  Theorems \ref{thm-ovoidcodeinf} and \ref{theorem5} is that they show that the $3$-design supported by the minimum weight codewords in the elliptic quadric code can also be supported by the negacyclic code  $\mathcal{C}(1,q^2-q+1)$ and its subfield subcode $\mathcal{C}(1,q^2-q+1)|_\gf(q)$ in the case $q \equiv 3 \pmod{4}$.
\item  Another novelty of Theorems \ref{thm-ovoidcodeinf} and \ref{theorem5} is the determination of the weight distribution of
a lifted code over $\gf(q^2)$ of an ovoid code over $\gf(q)$ for $q \equiv 3 \pmod{4}$. It is open if the lifted code over $\gf(q^2)$
of any ovoid code over $\gf(q)$ for $q \equiv 3 \pmod{4}$ has the same weight distribution as the negacyclic code
$\mathcal{C}(1,q^2+q+1)$.
\item The codes $\mathcal{C}(1,q^2-q+1)$ may be the second family of negacyclic codes with four-weights in the literature.
\item The codes $\mathcal{C}(1,q^2-q+1)|_\gf(q)$ may be the first family of negacyclic codes with two-weights in the literature.
\item The duals of the codes $\mathcal{C}(1,q^2-q+1)$ are almost MDS and distance-optimal.
\item The duals of the codes $\mathcal{C}(1,q^2-q+1)|_\gf(q)$ are almost MDS and distance-optimal.
\end{itemize}

}
\end{remark}

Although the dimension and minimum distance of the lifted code over $\gf(q^2)$ of any ovoid code over $\gf(q)$ are
known to be $4$ and $q^2-q$,  according to Remark \ref{remark-negacycliccode2} we still have the next two open problems.

\begin{open}\label{open-22224}
Does  the lifted code over $\gf(q^2)$
of any ovoid code over $\gf(q)$ for $q \equiv 3 \pmod{4}$ have the same weight distribution as the negacyclic code
$\mathcal{C}(1,q^2+q+1)$?
\end{open}

\begin{open}\label{open-22225}
What is the weight distribution of the lifted code over $\gf(q^2)$ of an ovoid code over $\gf(q)$ for $q \equiv 1 \pmod{4}$?
Is it unique?
\end{open}

\section{Summary and concluding remarks}\label{sec-finals}

\subsection{Summary of the works in this paper}
The main works of this paper are the following:
\begin{itemize}
\item We determined the weight distribution of the cyclic code $\mathcal{C}\left(\frac{p^s-1}{2},\frac{p^s+1}{2}\right)$ over $\gf(q)$ and the parameters of its dual. We showed that the code and its dual support $3$-designs. The results
documented in Theorem \ref{theorem1}  generalized the main results in~\cite{DT20,XTL21} for $p=3$ and $p=7$, respectively.

\item We derived  the weight distributions of the negacyclic codes $\C(1,q^2+q+1)$ over $\gf(q^2)$ for $q\equiv 1 \pmod 4$ and
$\C(1,q^2-q+1)$ over $\gf(q^2)$ for $q\equiv 3 \pmod 4$. We settled the parameters of the duals of these negacyclic codes. We proved that these two families of  negacyclic codes and their duals support $3$-designs.  A  novelty of Theorem \ref{theorem5} is to show that these $3$-designs with known parameters can be supported by negacyclic codes.
Moreover, we determined the subfield subcodes over $\gf(q)$ of these negacyclic codes  $\C(1,q^2+q+1)$ and $\C(1,q^2-q+1)$.
\end{itemize}

\subsection{The motivation, the objective and the contributions of this paper}\label{sec-moc}

It is observed that a $t$-design could be supported by different linear codes.
Linear codes supporting $3$-designs must have a high level of regularity and may have special applications (e.g., the application in cryptography mentioned below). Here the level of regularity of a linear code may mean the degree of transitivity or homogeneity of its automorphism group or the largest value of $t$ in the set of the conditions in the Assmus-Mattson theorem satisfied by the linear code. All linear codes supporting $t$-designs with $t \geq 2$ are special in some sense, i,e,, they have
a kind of regularity. It would be a nice contribution to combinatorics to prove that some known or new linear codes can
support $3$-designs with new parameters. It would also be an interesting contribution to coding theory to construct linear codes that can
support $3$-designs with known parameters, as such linear codes must be special and may have special applications.
The objective of this paper is not to obtain $3$-designs with new parameters, but to discover new cyclic codes and negacyclic codes supporting $3$-designs, as such codes must have a high level of regularity and an interesting application in cryptography.
Notice that the reference \cite{LDMTT} has the same objective as this paper.

It is known that cyclic codes have a better algebraic structure than other constacyclic codes, constacyclic codes have
a better algebraic structure than non-constacyclic codes, and their decoding complexities are different.
Thus, cyclic and negacyclic codes are more attractive.
These facts explain why we would search for cyclic and negacyclic codes supporting $3$-designs.

The contributions of this paper to design theory is limited, as the parameters of the $3$-designs presented in this paper are not new.
But it is open if the $3$-designs presented in this paper are isomorphic to some known $3$-designs or not.
A contribution of this paper to design theory is the proof of the fact that certain $3$-designs with known parameters can be supported by negacyclic codes.

The contributions of this paper are mainly in coding theory and an application in cryptography, which are summarized below:
\begin{enumerate}
\item A family of $[q+1, q-3, 4]$ cyclic codes $\C(\frac{p^s-1}{2}, \frac{p^s-1}{2})^\perp$ were discovered. These codes are almost MDS and cyclic.
\item A family of $[q^2+1, q^2-3, 4]$ negacyclic codes $\C(1, q^2+q+1)^\perp$ for $q \equiv 1 \pmod{4}$ were discovered. These codes are almost MDS and negacyclic.
\item A family of $[q^2+1, q^2-3, 4]$ negacyclic codes $\C(1, q^2-q+1)^\perp$ for $q \equiv 3 \pmod{4}$ were discovered. These codes are almost MDS and negacyclic.
\item The first family of negacyclic ovoid codes (i.e., $\C(1, q^2-q+1)|_{\gf(q)}$ for $q \equiv 3 \pmod{4}$) were discovered in the
literature. Notice that the ovoid codes presented in Theorem \ref{thm-constacodeelliptic} are not negacyclic.
\item Three infinite families of four-weight codes (i.e., $\C(\frac{p^s-1}{2}, \frac{p^s-1}{2})$,  $\C(1, q^2+q+1)$ for $q \equiv 1 \pmod{4}$ and $\C(1, q^2-q+1)$ for $q \equiv 3 \pmod{4}$) were reported.  These codes are cyclic or negacyclic.
\end{enumerate}
Since these codes presented in this paper support $3$-designs, they can be used to construct secret sharing schemes with special access structures \cite{YD06}. Hence, these codes
have a nice application in cryptography.

The cyclic code $\C(\frac{p^s-1}{2}, \frac{p^s-1}{2})$ over $\gf(p^m)$ with dimension $4$ is also interesting in the sense that it can be
used to construct a cyclic code over $\gf(p)$ with good parameters, as its trace code
$\tr_{q/p}(\C(\frac{p^s-1}{2}, \frac{p^s-1}{2})$) is a $[p^m+1, 4m]$ cyclic code over $\gf(p)$,
which could be the best cyclic code over $\gf(p)$ with length $p^m+1$ and dimension $4m$.
For example, when $(s, m, p)=(1,2,3)$,
 $\tr_{q/p}(\C(\frac{p^s-1}{2}, \frac{p^s-1}{2}))$ has parameters $[10, 8, 2]$ and is the best
 ternary cyclic code of length $10$ and dimension $8$ according to \cite[Table A.81]{Dingbook15}.
  When $(s, m, p)=(1,3,3)$ ,
 $\tr_{q/p}(\C(\frac{p^s-1}{2}, \frac{p^s-1}{2}))$ has parameters $[28, 12, 8]$ and is the best
 ternary cyclic code of length $28$ and dimension $12$ according to \cite[Table A.91]{Dingbook15}.
 It can be easily proved that the dimension of  $\tr_{q/p}(\C(\frac{p^s-1}{2}, \frac{p^s-1}{2}))$ is $4m$.
 But the minimum distance of $\tr_{q/p}(\C(\frac{p^s-1}{2}, \frac{p^s-1}{2}))$ is open. Linear codes
 of small dimensions over $\gf(p^m)$ could be very interesting in coding theory, as their subfield
 subcodes over $\gf(p)$ or their subfield codes (i.e., their trace codes) over $\gf(p)$ may have a
 large dimension and other interesting parameters.

\subsection{Two interesting open problems}

No linear code supporting a nontrivial simple $t$-design for $t \geq 6$ is known in the literature. No infinite family of
linear codes supporting an infinite family of $5$-designs is reported in the literature. Only a small number of sporadic
linear codes supporting a $5$-design are known \cite{Dingbook18}, but none of them is known to be a cyclic code.
Thus, the highest strength known of a design supported by a cyclic code is $4$.
In \cite{TDing1801, YZ22}, infinite families of cyclic codes of length $2^{2m+1}+1$ over $\gf(2^{2m+1})$ supporting
an infinite family of $4$-designs were presented. Note that
$$
\gcd(2^{2m+1}-1,2^{2m+1}+1)=1.
$$
It was proved in \cite{Hu00} that the two rings $\gf(2^{2m+1})[x]/(x^n-\lambda)$ and $\gf(2^{2m+1})[x]/(x^n-1)$ are
scalar-equivalent, where $\lambda$ is a primitive element of $\gf(2^{2m+1})$. Hence, the infinite families of
$\lambda$-constacyclic codes over $\gf(2^{2m+1})$ that are scalar-equivalent to some of the infinite families of cyclic codes of length $2^{2m+1}+1$ over $\gf(2^{2m+1})$ in \cite{TDing1801, YZ22} support infinite families of $4$-designs.

Since no $\lambda$-constacyclic code over $\gf(2^{2m})$ with $\lambda \neq 1$ supporting a $4$-design was reported
in the literature, we present the following example.

\begin{example} \label{exam-openlast}
Let $w$ be a primitive element of $\gf(2^8)$ with $w^8+w^4+w^3+w^2+1=0$. Let $\lambda=w^{85}$. Then $\lambda$
is a primitive element of $\gf(4)$.   Let $\gamma=w^{5}$. Then $\gamma^{17}=\lambda$. The minimal polynomial of
$\gamma$ over $\gf(4)$, denoted by $h_1(x)$, is given by
$$
h_1(x)=x^4 + \lambda^2 x^3 + x^2 + x + \lambda^2
$$
and the minimal polynomial of
$\gamma^7$ over $\gf(4)$, denoted by $h_7(x)$, is given by
$$
h_1(x)=x^4 + x^3 + \lambda x^2 + \lambda x + \lambda^2.
$$
Let $\C$ denote the $\lambda$-constacyclic code of length $17$ over $\gf(4)$ with check polynomial $h_1(x)h_7(x)$.
Then $\C$ has parameters $[17, 8, 8]$ and weight enumerator
\begin{eqnarray*}
1+ 1530z^8 + 10, 8160z^{10} + 25704z^{12}  + 24480z^{14} + 5661z^{16}.
\end{eqnarray*}
The dual code $\C^\perp$ has parameters $[17, 9, 7]$. By the Assmus-Mattson theorem, the minimum codewords in
$\C$ support a $4$-$(17, 8, 15)$ simple design. But the extended code of $\C$ does not support a $1$-design.
\end{example}

A cyclic code of length $17$ over $\gf(4)$ with the same parameters as the code $\C$ in Example \ref{exam-openlast}
was reported in \cite[Appendix A.1]{Dingbook18}. Note that constacyclic codes contain cyclic codes as a proper subclass.
The following
open problems would be interesting.

\begin{open}\label{open-22226}
Is there a constacyclic code over $\gf(q)$ for some $q$ supporting a simple $5$-design?
\end{open}

\begin{open}\label{open-22227}
Is there an infinite family of constacyclic codes over finite fields supporting an infinite
family of simple $5$-designs?
\end{open}

The reader is cordially invited to attack all the seven open problems (i.e., Open Problems \ref{open-22221}, \ref{open-22222},   \ref{open-22223}, \ref{open-22224}, \ref{open-22225}, \ref{open-22226},  and \ref{open-22227} presented in this paper.


\end{document}